\documentclass[showpacs,amsmath,amssymb,onecolumn,superscriptaddress,notitlepage,preprintnumbers,aps,pra,10pt,floatfix]{revtex4-2}
\usepackage{hyperref}
\hypersetup{colorlinks=true, linkcolor=blue, citecolor=blue, urlcolor=black }

\usepackage[capitalise,nameinlink]{cleveref}

\usepackage{algorithm}
\usepackage{algpseudocode}
\usepackage[normalem]{ulem}
\usepackage{qcircuit}

\usepackage[dvips]{graphicx}
\usepackage{amsmath,amssymb,amsthm,mathrsfs,amsfonts,dsfont}
\usepackage{subfigure, epsfig}
\usepackage{braket}
\usepackage{bm}
\usepackage{enumerate}
\usepackage{color}
\usepackage{graphicx}
\usepackage{mathrsfs}
\usepackage{multirow}
\usepackage{xcolor}
\definecolor{cream}{RGB}{203, 237, 204}
\newtheorem{theorem}{Theorem}
\newtheorem{lemma}{Lemma}
\newtheorem{corollary}{Corollary}

\DeclareMathOperator{\tr}{\mathrm{Tr}}
\DeclareMathOperator{\support}{\mathrm{supp}}
\DeclareMathOperator*{\rprod}{\overrightarrow{\prod}}
\DeclareMathOperator*{\lprod}{\overleftarrow{\prod}}
\DeclareMathOperator{\Dist}{\mathrm{Dist}}

\newcommand{\id}{\mathbb{I}}

\newcommand{\normH}[1]{{\left\vert\kern-0.25ex\left\vert\kern-0.25ex\left\vert #1
   \right\vert\kern-0.25ex\right\vert\kern-0.25ex\right\vert}}

\renewcommand{\d}{\mathrm{d}}
\newcommand{\mc}{\mathcal}

\let\originalleft\left
\let\originalright\right
\renewcommand{\left}{\mathopen{}\mathclose\bgroup\originalleft}
\renewcommand{\right}{\aftergroup\egroup\originalright}



\newcommand{\re}{\mathrm{re}}
\newtheorem{fact}{Fact}

\begin{document}


\title{Entanglement accelerates quantum simulation}

\author{Qi Zhao}
\email[]{zhaoqi@cs.hku.hk}
\affiliation{
QICI Quantum Information and Computation Initiative, Department of Computer Science,
The University of Hong Kong, Pokfulam Road, Hong Kong}

\author{You Zhou}
\email[]{you\_zhou@fudan.edu.cn}
\affiliation{Key Laboratory for Information Science of Electromagnetic Waves (Ministry of Education), Fudan University, Shanghai 200433, China}

\author{Andrew M. Childs}
\email[]{amchilds@umd.edu}
\affiliation{Department of Computer Science, Institute for Advanced Computer Studies, and Joint Center for Quantum Information and Computer Science, University of Maryland, College Park, Maryland 20742, USA}


\begin{abstract}
Quantum entanglement is an essential feature of many-body systems that impacts both quantum information processing and fundamental physics. The growth of entanglement is a major challenge for classical simulation methods.
In this work, we investigate the relationship between quantum entanglement and quantum simulation, showing that product-formula approximations can perform better for entangled systems. We establish a tighter upper bound for algorithmic error in terms of entanglement entropy and develop an adaptive simulation algorithm incorporating measurement gadgets to estimate the algorithmic error. This shows that entanglement is not only an obstacle to classical simulation, but also a feature that can accelerate quantum algorithms.
\end{abstract}

\maketitle

\section*{Introduction}
Quantum entanglement in many-body systems \cite{Amico2008Entanglement} is a pivotal topic with significant implications in quantum information processing \cite{Horodecki2009entanglement,walter2016multipartite} and the exploration of fundamental quantum physics \cite{Bei2019Meets,Qi2018gravity}. While the dynamics of weakly entangled quantum systems can be efficiently simulated \cite{vidal03efficient}, the precise role of entanglement in quantum algorithms remains unclear---for example, mixed-state computation appears to be capable of quantum speedup even when the system never has much entanglement~\cite{DattaPRL08}.

Understanding the entanglement dynamics stemming from an unentangled initial state is a fundamental concern in many-body physics, which lies at the heart of quantum thermalization (or its failure) \cite{Alessio2016quantum,gogolin2016equilibration,nandkishore2015many,serbyn2021quantum}. The widely studied Eigenstate Thermalization Hypothesis (ETH) suggests that quantum states effectively thermalize under the evolution of typical Hamiltonians even in a closed quantum system \cite{Alessio2016quantum,gogolin2016equilibration}, 
enabled by the linear growth of subsystem entanglement entropy. However, recent advances demonstrate that there are instances where the ETH fails, such as in many-body localized Hamiltonians \cite{nandkishore2015many} with strong disorder, which can slow the entanglement growth to a logarithmic trend over time; and many-body scarred Hamiltonians, which fail to thermalize for some initial states \cite{serbyn2021quantum}.

Tractable simulation methods are needed to investigate these intriguing quantum many-body phenomena. Classical simulation tools like tensor networks have been developed to explore both ground states and dynamics \cite{cirac2021matrix,orus2019tensor}. However, the growth of quantum entanglement poses a challenge for these classical methods. For instance, one-dimensional matrix product states (MPSs) \cite{cirac2021matrix} are typically suitable only for area-law-entangled states.

Engineered quantum platforms such as superconducting-qubit, trapped-ion, and cold-atom simulators offer opportunities to observe and analyze various quantum phases and dynamics \cite{altman2021quantum}. In particular, digital Hamiltonian simulation algorithms offer a robust and efficient means to simulate the dynamics of quantum systems \cite{sethuniversal}.
Although quantum advantage over classical simulation with a digital quantum simulator has yet to be demonstrated, numerous algorithms have been proposed to approximate the time-evolution operator $U_0(t):=e^{-iHt}$ \cite{berry2007efficient,berry2012black,TaylorSeries,PhysRevLett.118.010501,low2019hamiltonian}, and more accurate error analyses have been given, including commutator bounds for product-formula approximations \cite{childs2019nearly,childs2018toward,childs2020theory} and analyses that take advantage of particular input states \cite{sahinoglu2020hamiltonian,An2021timedependent,Zhaorandom22,chen2024average,su2020nearly}.
Previous error analyses have not shown a role for the amount of entanglement in the performance of these algorithms.

In this work, we explore the relationship between quantum entanglement and Hamiltonian simulation. We find that entanglement entropy has a direct impact on the performance of Hamiltonian simulation algorithms, and in particular, that it can accelerate the performance of product-formula simulations. We establish a tighter upper bound for the algorithmic error in Hamiltonian simulation using the entanglement entropy of subsystems. Surprisingly, in contrast to classical simulation, quantum simulation becomes \emph{more} efficient for entangled states. As entanglement grows, the Hamiltonian simulation error, or Trotter error, diminishes and converges to the average-case Trotter error \cite{Zhaorandom22}, so that simulations of sufficiently entangled states perform similarly to simulations with random inputs. In particular, this holds for states that are close to $k$-uniform \cite{scott2004multipartite} for sufficiently large $k$. As a result, entanglement implies more efficient Hamiltonian simulation. Conversely, we also present examples showing that product states can exhibit the worst-case performance of Hamiltonian simulation. 

For this result to be useful, we should know that the state is sufficiently entangled. 
Typical states are highly entangled; indeed, a randomly chosen state is almost surely close to maximally entangled \cite{hayden2006aspects}.
However, understanding when a particular state is entangled may be challenging. To address this issue and leverage the benefit of entanglement in practice, we develop an adaptive simulation algorithm that incorporates measurement gadgets in the middle of the simulation to estimate the upcoming Trotter error. This method can outperform standard Hamiltonian simulation with only negligible measurement overhead. 

Our analysis helps to reveal the potential of quantum entanglement in quantum information processing and fundamental quantum physics.
We hope this sheds light on the intricate relationship between quantum resources, quantum computing, and the development of quantum algorithms that can take advantage of features such as entanglement.

\section*{Results}
\subsection*{Entanglement entropy and Trotter error} Quantum simulation aims to realize the time-evolution operator $U_0(t):=e^{-iHt}$ of a given Hamiltonian $H$. This task is crucial for studying both dynamic and static properties of quantum systems. 
Two major classes of digital Hamiltonian simulation algorithms are product formula (PF) methods~\cite{suzuki1991general,sethuniversal} and algorithms based on linear combinations of unitaries~\cite{berry2007efficient,berry2012black,TaylorSeries,PhysRevLett.118.010501,low2019hamiltonian}. 
Here, we focus on the former, which are simpler and may perform better in practice, at least for some tasks \cite{childs2018toward}.

For a short evolution time $\delta t$, the first-order product formula (PF1) algorithm for a Hamiltonian with decomposition $H=\sum_{l=1}^LH_l$ applies the unitary operation $\mathscr{U}_1(\delta t):=e^{-iH_1\delta t}e^{-iH_2\delta t}\cdots e^{-iH_L\delta t}=\rprod_l e^{-iH_l \delta t}$. 
Here the right arrow indicates the product is in the order of increasing indices.
This provides an effective simulation method when the terms $e^{-iH_l \delta t}$ are easy to implement.
Second-order product formulas (PF$2$)
can be obtained by combining evolutions in both increasing and decreasing orders of indices, with $\mathscr{U}_2(\delta t):=\rprod_l e^{-iH_l\delta t/2} \lprod_l e^{-iH_l\delta t/2}$.
More generally, Suzuki constructed $p$th-order product formulas $\mathscr{U}_{p}$, for even $p$, recursively from the second-order formula~\cite{suzuki1991general}. 
In general, for a $p$th-order product formula method (PF$p$), we can quantify the algorithmic error in terms of the spectral norm of the additive error, $\|U_0-\mathscr{U}_{p}\|=\max_{\ket{\psi}}\|U_0\ket{\psi}-\mathscr{U}_{p}\ket{\psi}\|$, with $\|\cdot\|$ on the right-hand side denoting the vector $l_2$ norm. The spectral norm corresponds to the worst-case input state and thus represents the worst-case Trotter error. 
However, this worst-case error can be unnecessarily pessimistic. In practice, we may only care about some specific initial state $\ket{\psi}$, for which the error $\|{U_0\ket{\psi}-\mathscr{U}_{p}\ket{\psi}}\|$ may be smaller~\cite{sahinoglu2020hamiltonian,An2021timedependent,Zhaorandom22,su2020nearly}.

To simulate the evolution for a long time, one can divide the duration into several segments, each of which is a short-time evolution that can be simulated with small error. The total error is then upper bounded via the triangle inequality as the sum of the Trotter errors for each segment. This analysis is generally reasonably tight (except for PF1, which can exhibit destructive error interference~\cite{Tran_2020,layden2021first}). Thus, in the following, we first focus on a short-time evolution $\delta t$ and relate the Trotter error in one segment to the entanglement entropy of the initial quantum state.

For a quantum evolution $U_0=e^{-iH\delta t}$, consider a PF$p$ approximation $\mathscr{U}_{p}$ with $\|U_0-\mathscr{U}_{p}\|\le \|\sum_j E_j \|\delta t^{p+1}+\|\mathscr{E}_{\mathrm{re}}\|$, where $E=\sum_j E_j$ is the total leading-order error with local terms $E_j$ and $\|\mathscr{E}_{\mathrm{re}}\|=\mathcal O(\delta t^{p+2})$ is the contribution of the higher-order remaining terms \cite{childs2020theory}. 
The \emph{support} of an operator $E_j$, denoted $\support(E_j)$, is the set of qubits on which it acts nontrivially. Let $w(E_j)$ denote the number of qubits of $\support(E_j)$.

We consider the reduced density matrix (RDM) of the state 
$\ket{\psi}$ on $\support(E_j^{\dag}E_{j'})$,  $\rho_{j,j'}:=\tr_{[N]\setminus\support(E_j^{\dag}E_{j'})}(\ket{\psi}\bra{\psi})$ (with $[N]$  the set of all qubits),
with Hilbert space dimension $d_{\rho_{j,j'}}:=\smash{2^{w(E_j^{\dag}E_{j'})}}$.
For a $d$-dimensional operator $A$, let $\|A\|$ denote the spectral norm
and $\|A\|_F:=\sqrt{{\tr (AA^{\dagger})}/{d} }$ the normalized Frobenius norm.
The von Neumann entropy is denoted by $S$.
Then, we can bound the Trotter error via the following theorem. 

\begin{figure}
    \centering
       \includegraphics[width=1\columnwidth]{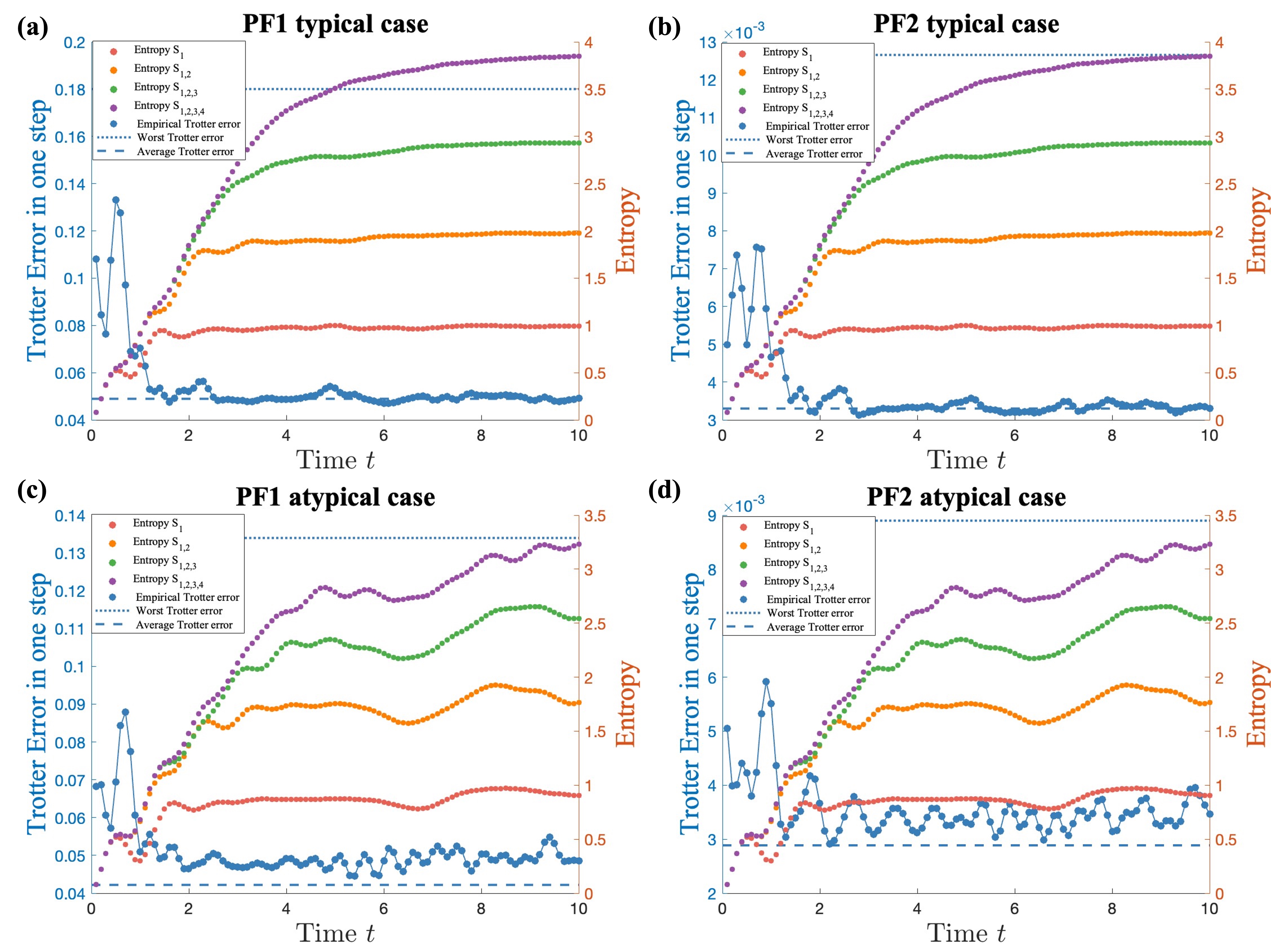}
    \caption{Entanglement entropy and Trotter error in one segment. The Trotter error is $\|(U_0(\delta t)-\mathscr{U}_p(\delta t))\ket{\psi(t)}\|$ for $\delta t=0.1$, where $\ket{\psi(t)}=e^{-iHt}\ket{\psi(0)}$ is the ideal state at time $t$ with $H$ the QIMF model of $N=12$ qubits.
    (a) and (b) show the PF1 and PF2 methods, respectively, with the typical-case Hamiltonian parameter choice $(h_x,h_y,J)= (0.8090, 0.9045, 1)$. The entanglement entropies of various subsystems all increase with time to the corresponding maximal value, as indicated by the right Y-axis. As the entanglement entropy increases, the empirical Trotter error (blue dotted curve) decreases and converges to $\|U_0-\mathscr{U}_{p}\|_F$, the average-case error (blue dashed line). (c) and (d) show analogous results for the atypical-case Hamiltonian parameter choice $(h_x,h_y,J)= (0, 0.9045, 1)$. In this case the entanglement entropy does not increase to its maximal value and there is a clear gap between the empirical Trotter error and the average-case error. 
    }
 \label{fig:thermalH}
\end{figure}

\begin{theorem}\label{thm:main}
For a given pure quantum state $\ket{\psi}$ and quantum evolution $U_0=e^{-iH\delta t}$ with $p$th-order Trotter approximation $\mathscr{U}_{p}$, the error in a Trotter step of duration $\delta t$ has the upper bound
\begin{equation}\label{eq:ErrD1}
\|(U_0-\mathscr{U}_{p})\ket{\psi}\|=\mathcal O\left(\delta t^{p+1}\sqrt{\sum_{j,j'} \|E_j^{\dag}E_{j'}\| \tr|\rho_{j,j'}-\mathbb{I}/d_{\rho_{j,j'}}|}+ \delta t^{p+1}\|E\|_F \right).
\end{equation}
We call this the \emph{distance-based error bound}.
One can further relate the Trotter error to the entanglement entropy of subsystems with the bound
\begin{equation}\label{eq:ErrEnt}
\|(U_0-\mathscr{U}_{p})\ket{\psi}\|=\mathcal O\left(\delta t^{p+1}\sqrt{\sum_{j,j'} \|E_j^{\dag}E_{j'}\|\sqrt{\log (d_{\rho_{j,j'}})-S(\rho_{j,j'})}}+ \delta t^{p+1} \|E\|_F\right).
\end{equation} 
We call this the \emph{entanglement-based error bound}. 
\end{theorem}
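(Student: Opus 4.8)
The plan is to start from the operator-level decomposition supplied in the setup, where the error operator satisfies $\|U_0-\mathscr{U}_{p}\|\le\delta t^{p+1}\|E\|+\|\mathscr{E}_{\mathrm{re}}\|$ with $E=\sum_j E_j$ and $\|\mathscr{E}_{\mathrm{re}}\|=\mathcal O(\delta t^{p+2})$. Applying the triangle inequality to the action on a fixed state gives
\begin{equation}
\|(U_0-\mathscr{U}_{p})\ket{\psi}\|\le\delta t^{p+1}\|E\ket{\psi}\|+\|\mathscr{E}_{\mathrm{re}}\ket{\psi}\|,
\end{equation}
and since $\|\mathscr{E}_{\mathrm{re}}\ket{\psi}\|\le\|\mathscr{E}_{\mathrm{re}}\|=\mathcal O(\delta t^{p+2})$ is subleading, it will be absorbed into the $\mathcal O(\cdot)$ of the final bound. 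The whole task thus reduces to controlling the state-dependent leading-order quantity $\|E\ket{\psi}\|$. Expanding the square and using that each $E_j^{\dag}E_{j'}$ acts nontrivially only on $\support(E_j^{\dag}E_{j'})$, I would write
\begin{equation}
\|E\ket{\psi}\|^2=\sum_{j,j'}\bra{\psi}E_j^{\dag}E_{j'}\ket{\psi}=\sum_{j,j'}\tr\!\left(E_j^{\dag}E_{j'}\,\rho_{j,j'}\right),
\end{equation}
which moves the problem entirely onto the local RDMs $\rho_{j,j'}$ defined before the theorem.

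The conceptual heart is to split each local expectation about the maximally mixed reference $\mathbb{I}/d_{\rho_{j,j'}}$:
\begin{equation}
\tr\!\left(E_j^{\dag}E_{j'}\rho_{j,j'}\right)=\tfrac{1}{d_{\rho_{j,j'}}}\tr\!\left(E_j^{\dag}E_{j'}\right)+\tr\!\bigl(E_j^{\dag}E_{j'}(\rho_{j,j'}-\mathbb{I}/d_{\rho_{j,j'}})\bigr).
\end{equation}
The key algebraic identity is that the maximally mixed contributions reassemble into the normalized Frobenius norm: because $\tr_{[N]}(E_j^{\dag}E_{j'})=2^{N-w(E_j^{\dag}E_{j'})}\tr_{\support}(E_j^{\dag}E_{j'})$, each term $\tfrac{1}{d_{\rho_{j,j'}}}\tr_{\support}(E_j^{\dag}E_{j'})$ equals $2^{-N}\tr_{[N]}(E_j^{\dag}E_{j'})$, so summing over $j,j'$ yields $2^{-N}\tr(E^{\dag}E)=\|E\|_F^2$, the average-case Trotter error term. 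For the deviation piece I would invoke the matrix Hölder inequality $|\tr(AB)|\le\|A\|\,\tr|B|$, giving $|\tr(E_j^{\dag}E_{j'}(\rho_{j,j'}-\mathbb{I}/d_{\rho_{j,j'}}))|\le\|E_j^{\dag}E_{j'}\|\,\tr|\rho_{j,j'}-\mathbb{I}/d_{\rho_{j,j'}}|$. Collecting these and using $\sqrt{a+b}\le\sqrt a+\sqrt b$ then reproduces the distance-based bound \eqref{eq:ErrD1}.

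To pass to the entanglement-based bound \eqref{eq:ErrEnt}, I would convert the trace distance to the maximally mixed state into an entropy via Pinsker's inequality. Since the relative entropy to the maximally mixed state is $S(\rho_{j,j'}\,\|\,\mathbb{I}/d_{\rho_{j,j'}})=\log d_{\rho_{j,j'}}-S(\rho_{j,j'})$, Pinsker gives $\tr|\rho_{j,j'}-\mathbb{I}/d_{\rho_{j,j'}}|\le\sqrt{2\,(\log d_{\rho_{j,j'}}-S(\rho_{j,j'}))}$, where the constant and the logarithm base only rescale prefactors that the $\mathcal O(\cdot)$ swallows. Substituting this termwise into \eqref{eq:ErrD1} yields \eqref{eq:ErrEnt}.

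I expect the main obstacle to be the bookkeeping in the Frobenius-norm identity rather than any deep inequality: one must verify that the partial-trace and dimension factors across all pairs $(j,j')$ combine \emph{exactly} into $\|E\|_F^2$ on the full $N$-qubit space, which requires the $E_j$ to be genuinely local operators and the supports of the products $E_j^{\dag}E_{j'}$ to be tracked consistently. A secondary subtlety is confirming that the remainder $\mathscr{E}_{\mathrm{re}}$ drops to higher order uniformly in the input state, so that folding it into $\mathcal O(\delta t^{p+1}(\cdots))$ is legitimate.
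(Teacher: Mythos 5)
Your proposal is correct and follows essentially the same route as the paper's proof: reduce to the state-dependent leading term $\|E\ket{\psi}\|\,\delta t^{p+1}$, split each $\bra{\psi}E_j^{\dag}E_{j'}\ket{\psi}$ about the maximally mixed state on $\support(E_j^{\dag}E_{j'})$ so the identity pieces reassemble into $\|E\|_F^2$, bound the deviation by the H\"older inequality $|\tr(AB)|\le\|A\|\tr|B|$, apply $\sqrt{a+b}\le\sqrt a+\sqrt b$, and convert trace distance to entropy via Pinsker's inequality --- this is exactly the paper's Lemma~\ref{Lemma:local} plus the argument in the Methods and Supplemental Material. The one imprecision is that the step $\|(U_0-\mathscr{U}_p)\ket{\psi}\|\le\delta t^{p+1}\|E\ket{\psi}\|+\|\mathscr{E}_{\re}\|$ does not follow by a triangle inequality from the spectral-norm premise alone; it requires the operator-level integral representation of the error (the paper's \cref{Lemma:remainder}, which pulls unitaries out of the leading term), and you are implicitly assuming that stronger form when you take the decomposition ``as supplied in the setup.''
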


We sketch the proof in the Methods section, and present a rigorous proof with an explicit analysis of the higher-order terms in Section I of the Supplemental Material.
If the entanglement entropy of each RDM $\rho_{j,j'}$ is small, then the entanglement-based error bound in \cref{thm:main} is $\mathcal O(\delta t^{p+1}\sum_j\|E_j\|)$, recovering the worst-case Trotter error result~\cite{childs2020theory}. On the other hand, if the entanglement entropies of subsystems $\support(E_j^{\dag}E_{j'})$ satisfy $S(\rho_{j,j'})\ge w(E_j^{\dag}E_{j'})- \mathcal O\bigl(\bigl(\frac{\|E\|_F}{\sum_j\|E_j\|}\bigr)^4\bigr)$, the error is significantly reduced and scales as $\mathcal O(\delta t^{p+1}\|E\|_F)$, recovering the average-case Trotter error bound~\cite{Zhaorandom22}.

If the Hamiltonian is local, the Trotter error terms $E_j$ have low weights, and the  subsystems $\support(E_j^{\dag}E_{j'})$ are also small. For example, consider the $N$-qubit one-dimensional quantum Ising spin model with mixed fields (QIMF). The Hamiltonian is $H = h_x \sum_{j=1}^N X_j + h_y \sum_{j=1}^N Y_j + J \sum_{j=1}^{N-1} X_jX_{j+1}$, with $X_j, Y_j, Z_j$ denoting Pauli operators on qubit $j$. The dynamics $e^{-iH\delta t}$ can be approximated by $\mathscr{U}_{1} = e^{-iA\delta t}e^{-iB\delta t}$ with $A = h_x \sum_{j=1}^N X_j + J \sum_{j=1}^{N-1} X_jX_{j+1}$ and $B = h_y \sum_{j=1}^N Y_j$. Then the leading term of the Trotter error is proportional to the commutator $[A,B]=\sum_j E_j$ with each error term $E_j = 2ih_xh_y Z_j + 2iJh_y(Z_jX_{j+1} + X_jZ_{j+1})$ 
acting on two adjacent qubits, so the weight of $E_j^{\dag}E_{j'}$ is at most 4 (it can be less if the error terms overlap). Since $\|E\|_F = \mathcal O(\sqrt{N})$ \cite{Zhaorandom22} and $\|E\| = \mathcal O(N)$ \cite{childs2020theory}, the entanglement-based error bound is $\|(U_0-\mathscr{U}_{1})\ket{\psi}\|=\mathcal O(\delta t^{2}N \max_{j,j'}(\log d_{\rho_{j,j'}} - S(\rho_{j,j'}))^{1/4}) + \mathcal O(\delta t^{2}\sqrt{N})$. 

The bound of \cref{thm:main} depends on the entanglement of the subsystems $\support(E_j^{\dag}E_{j'})$. When the evolution satisfies the ETH, such as for the QIMF model with the typical parameters $(h_x,h_y, J) = (0.8090, 0.9045, 1)$ \cite{kim2014testing}, the subsystem entanglement entropy increases quickly and the entropies of the $4$-qubit subsystems rapidly approach $4$ (see \cref{fig:thermalH}(a) and (b)). Correspondingly, the Trotter error is only $\mathcal O(\delta t^{2}\sqrt{N})$, which is the scaling of the average-case Trotter error~\cite{Zhaorandom22}. This is a quadratic speedup with respect to the system size $N$, compared to the worst-case error $\mathcal O(\delta t^{2}N)$. We indeed observe this error reduction and convergence phenomenon empirically, as shown by the blue dotted curve in \cref{fig:thermalH}(a) and (b).
Moreover, we compare our theoretical entropy-based error bound on the Trotter error (assuming the distance is known) with the average- and worst-case error bounds. \Cref{fig:theory_error_comparison}(a) and (b) illustrate this comparison for both PF1 and PF2 for a typical QIMF  Hamiltonian with $(h_x, h_y, J) = (0.8, 0.9045, 1)$, for which the entanglement grows rapidly. For atypical cases in which the entanglement grows slowly, the entanglement-based bound does not provide improvement. 

A special property of highly entangled many-body states is $k$-uniformity, which can be used in the construction of quantum error-correcting codes~\cite{PhysRevA.104.032601} and quantum data masking protocols~\cite{scott2004multipartite}. A pure state of $N$ parties is called $k$-uniform if all $k$-party RDMs are maximally mixed. More generally, we say $\ket{\psi}$ is $\Delta$-approximately $k$-uniform if $\|{\tr_{[N]/[k]}(\ket{\psi}\bra{\psi})-\mathbb{I}/2^k}\|_1\le \Delta$ for any $k$-partite RDM. 
For a large system, typical states are close to (approximately) $k$-uniform (for small $k$) and the entanglement entropy of any $k$-local RDM is close to $k$ \cite{hayden2006aspects}. 
From \cref{thm:main}, by considering $k$-uniformity, we can obtain the following sufficient condition for the Trotter error to exhibit average-case scaling. 

\begin{corollary}\label{cor:kuniform}
For a $\Delta$-approximately $k$-uniform pure quantum state $\ket{\psi}$ with $\sqrt{\Delta}\le \|E\|_F/\sum_j\|E_j\|$ and $k\ge 2\max_j w(E_j)$, the Trotter error satisfies
\begin{equation}
\|{(U_0-\mathscr{U}_p)\ket{\psi}}\|= \mathcal O\left(\|E\|_F\delta t^{p+1}\right).
\end{equation} 
\end{corollary}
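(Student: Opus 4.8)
The plan is to derive the corollary directly from the distance-based error bound \eqref{eq:ErrD1} of \cref{thm:main}, controlling each factor $\tr|\rho_{j,j'}-\mathbb{I}/d_{\rho_{j,j'}}|$ by means of the approximate $k$-uniformity hypothesis. First I would observe that $\tr|\rho_{j,j'}-\mathbb{I}/d_{\rho_{j,j'}}|$ is exactly the trace-norm distance $\|\rho_{j,j'}-\mathbb{I}/d_{\rho_{j,j'}}\|_1$ between the RDM on $\support(E_j^{\dag}E_{j'})$ and the maximally mixed state on that same subsystem. The goal is then to show this quantity is at most $\Delta$ for every pair $j,j'$.

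Next I would use the weight hypothesis to ensure that each subsystem is actually covered by the uniformity condition. Since $w(E_j^{\dag}E_{j'})\le w(E_j)+w(E_{j'})\le 2\max_j w(E_j)\le k$, the support of $E_j^{\dag}E_{j'}$ is contained in a set of at most $k$ qubits. I would then invoke approximate $k$-uniformity together with monotonicity of the trace norm under the partial trace (a CPTP map): embedding $\support(E_j^{\dag}E_{j'})$ into some $k$-qubit subsystem whose RDM is $\Delta$-close to $\mathbb{I}/2^k$, and tracing out the surplus qubits, the maximally mixed state on $k$ qubits reduces to the maximally mixed state on the smaller subsystem, so contractivity gives $\|\rho_{j,j'}-\mathbb{I}/d_{\rho_{j,j'}}\|_1\le\Delta$ uniformly in $j,j'$.

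With this bound in hand, I would substitute it into \eqref{eq:ErrD1}. Using submultiplicativity of the spectral norm, $\|E_j^{\dag}E_{j'}\|\le\|E_j\|\,\|E_{j'}\|$, the sum telescopes to $\sum_{j,j'}\|E_j^{\dag}E_{j'}\|\tr|\rho_{j,j'}-\mathbb{I}/d_{\rho_{j,j'}}|\le\Delta\,(\sum_j\|E_j\|)^2$. Taking the square root and applying the hypothesis $\sqrt{\Delta}\le\|E\|_F/\sum_j\|E_j\|$ shows the first summand of the bound is at most $\delta t^{p+1}\|E\|_F$, which coincides in order with the second summand; combining the two yields $\|(U_0-\mathscr{U}_p)\ket{\psi}\|=\mathcal O(\|E\|_F\delta t^{p+1})$, as claimed.

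The one delicate point is the reduction from $k$-qubit uniformity to the possibly smaller subsystems $\support(E_j^{\dag}E_{j'})$: one must verify that approximate uniformity of all $k$-qubit marginals propagates to all marginals on fewer than $k$ qubits. This is precisely where monotonicity of the trace norm under partial trace is needed, and it also explains the factor of two in the hypothesis $k\ge 2\max_j w(E_j)$, which guarantees that the combined support of any \emph{pair} of error terms still fits inside a $k$-qubit marginal. The remaining manipulations are routine norm estimates.
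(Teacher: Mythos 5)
Your proposal is correct and follows essentially the same route as the paper's proof: apply the distance-based bound of \cref{thm:main}, bound $\|E_j^{\dag}E_{j'}\|\le\|E_j\|\,\|E_{j'}\|$ and $\tr|\rho_{j,j'}-\mathbb{I}/d_{\rho_{j,j'}}|\le\Delta$, and use $\sqrt{\Delta}\sum_j\|E_j\|\le\|E\|_F$. The only difference is that you spell out the step the paper leaves implicit — that $k\ge 2\max_j w(E_j)$ plus contractivity of the trace norm under partial trace propagates approximate $k$-uniformity to the smaller supports $\support(E_j^{\dag}E_{j'})$ — which is a welcome clarification rather than a deviation.
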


Note that for the PF1 simulation of the QIMF model, $k\ge 2\max_j w(E_j)=4$ and $\Delta=\mathcal O(1/N)$ is sufficient.
More generally, consider PF1 simulation of any $m$-local $N$-qubit Hamiltonian $H=\sum_{j_1,\dots,j_m}
H_{j_1,\dots,j_m}$ with each term $H_{j_1,\dots,j_m}$
acting nontrivially on (at most) $m$ qubits. If the input state is (approximately) $k$-uniform with $k\geq 4m-2$, then the simulation cost reduces to the average case. 
The parameter $k$, which quantifies how mixed the RDMs are, and thereby controls their entropy, typically increases linearly with the evolution time for a given error $\Delta$ \cite{Alessio2016quantum,gogolin2016equilibration}.  
Since $m$ is constant, the Hamiltonian evolution can make the underlying state (approximately) $k$-uniform in constant time. For comparison, for a classical simulation using MPS \cite{paeckel2019time}, the simulation cost increases exponentially with the entanglement, using $\mathcal O(2^{k/4} N)$ bits to store the state and $\mathcal O(2^{k/2} N)$ operations to simulate each step by contracting local tensors (see Section V of the Supplemental Material for more details). We illustrate this comparison in \cref{fig:compareQC}(a).

\begin{figure}
    \centering
     \includegraphics[width=1\columnwidth]{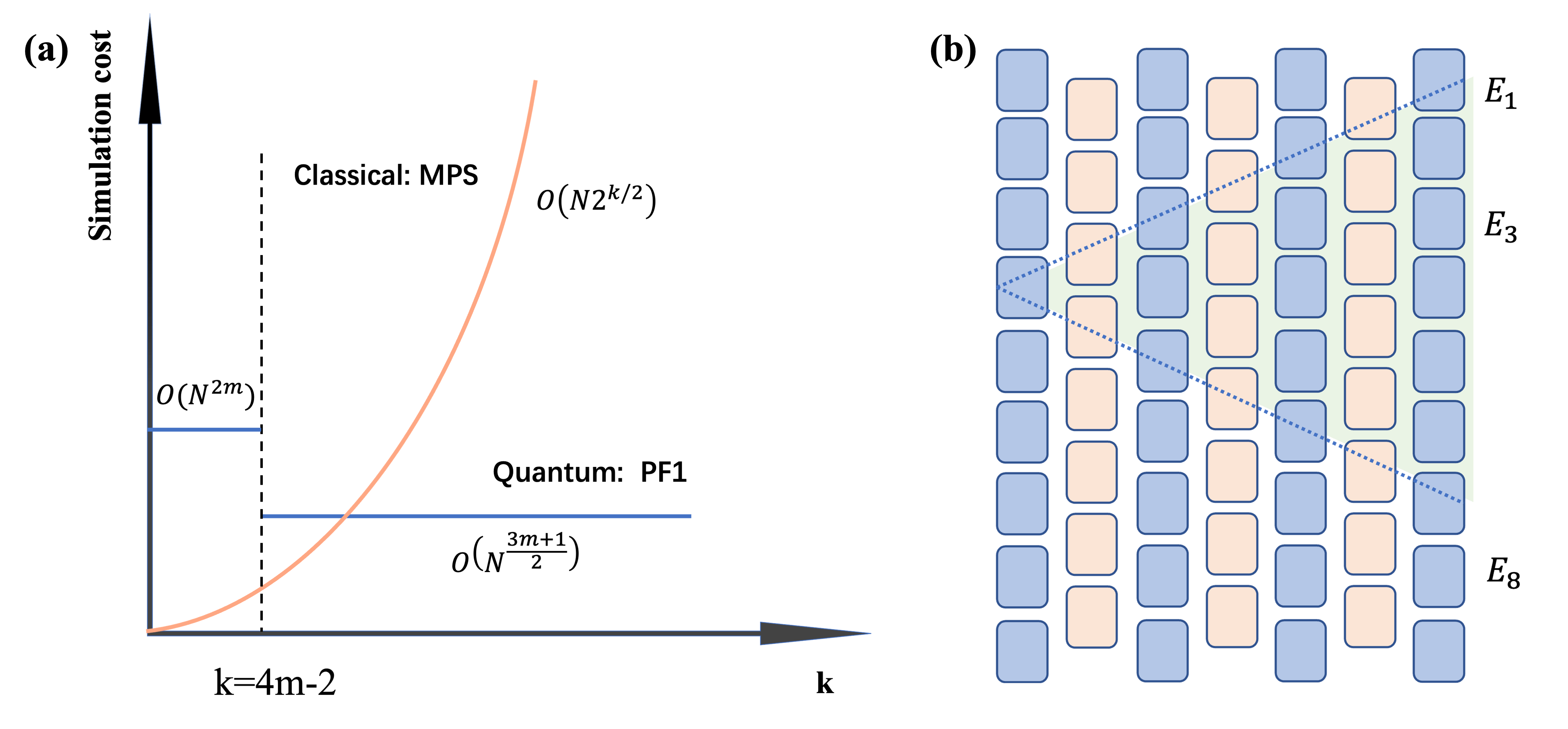}
    \caption{
    (a)
    Comparison of the predicted simulation cost for the quantum and classical algorithms for $k$-uniform input states. Here we consider simulating a general $m$-local lattice Hamiltonian for a short time $t=\mathcal O (1)$. Larger values of $k$ indicate more mixed RDMs, and therefore greater entanglement. The parameter $k$ generally increases linearly with the total evolution time. Here we consider PF1 simulation and quantify the cost using the gate count. For $k\geq 4m-2$, the cost decreases from worst-case to average-case scaling \cite{Zhaorandom22}, which is independent of $k$ thereafter. However, for classical methods based on MPS, the simulation cost is $\mathcal O(2^{k/2} N)$ to contract the local tensors (see Section V of the Supplemental Material for more details).
    (b) Illustration of the light-cone structure.}
 \label{fig:compareQC}
\end{figure}

\Cref{thm:main} and \cref{cor:kuniform} show that for states with sufficient entanglement, the Trotter error can scale similarly to the average case. On the other hand, simulations of unentangled states do not perform as well in general. In fact, one can find product states that achieve the worst-case error, as the following result demonstrates.

\begin{theorem}\label{thm:worstcase}
Consider a Trotter approximation of an $N$-qubit lattice Hamiltonian $H$. Let the leading term of the Trotter error have Pauli decomposition $E =\sum_j a_j P_j + \sum_k b_k Q_k$  with $a_j>0$ and $b_k<0$ (ignoring a global phase). If 
$\sum_j a_j=\Theta(N)$ and $\sum_k |b_k|=o(N)$, there exists a product state $\ket{\psi}$ that achieves the worst-case error scaling, 
\begin{equation}
 \|(U_0-\mathscr{U}_p)\ket{\psi}  \|=\Theta(N\delta t^{p+1}). 
\end{equation} 
\end{theorem}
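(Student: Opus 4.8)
The plan is to split the claim into a universal upper bound and a matching lower bound witnessed by a single product state. First I would invoke the leading-order error expansion recalled above to write $\|(U_0-\mathscr{U}_p)\ket{\psi}\| = \delta t^{p+1}\|E\ket{\psi}\| + \mathcal{O}(\delta t^{p+2})$, where, after factoring out the global phase, the operator $E=\sum_j a_j P_j + \sum_k b_k Q_k$ is Hermitian with real coefficients on Pauli strings $P_j,Q_k$. The upper bound $\|(U_0-\mathscr{U}_p)\ket{\psi}\| = \mathcal{O}(N\delta t^{p+1})$ then holds for \emph{every} $\ket{\psi}$, since by the triangle inequality and $\|P_j\|=\|Q_k\|=1$ we have $\|E\| \le \sum_j a_j + \sum_k|b_k| = \Theta(N)+o(N)=\Theta(N)$. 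Thus the real content is to produce one product state with $\|E\ket{\psi}\| = \Omega(N)$, and since $E$ is Hermitian I would lower-bound this by the energy, $\|E\ket{\psi}\| \ge |\bra{\psi}E\ket{\psi}|$.

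Next I would dispose of the negative part. For any state $|\bra{\psi}(\sum_k b_k Q_k)\ket{\psi}| \le \sum_k|b_k| = o(N)$ because each $|\bra{\psi}Q_k\ket{\psi}|\le 1$, so $\bra{\psi}E\ket{\psi} \ge \bra{\psi}E^{+}\ket{\psi} - o(N)$ with $E^{+}:=\sum_j a_j P_j$. It therefore suffices to find a product state on which the positive part carries extensive energy, $\bra{\psi}E^{+}\ket{\psi}=\Omega(N)$. The naive attempt---making $\ket{\psi}$ a simultaneous $+1$ eigenstate of all $P_j$---fails because distinct Pauli strings demand incompatible single-qubit bases on shared qubits; this incompatibility is the main obstacle, and one cannot hope to capture the full weight $\sum_j a_j=\Theta(N)$.

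I expect to circumvent this by a probabilistic argument that retains only a constant fraction of the weight while avoiding any sign frustration, using that a lattice Hamiltonian has error terms of bounded weight $w:=\max_j w(P_j)=\mathcal{O}(1)$. Assign to each qubit $i$ an axis $\tau_i$ drawn uniformly from $\{X,Y,Z\}$ and place qubit $i$ in the $+1$ eigenstate of $\tau_i$, defining a random product state $\ket{\psi}=\bigotimes_i\ket{\phi_i}$. Then $\bra{\psi}P_j\ket{\psi}=\prod_{i\in\support(P_j)}\bra{\phi_i}\sigma_i^{(j)}\ket{\phi_i}$ equals $1$ when every axis matches the Pauli on that qubit and $0$ otherwise, so each $P_j$ contributes with probability $3^{-w(P_j)}$ and, crucially, its contribution is $+1$ (never $-1$) because all single-qubit states are $+1$ eigenstates. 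Hence $\mathbb{E}[\bra{\psi}E^{+}\ket{\psi}] = \sum_j a_j 3^{-w(P_j)} \ge 3^{-w}\sum_j a_j = \Omega(N)$, so some axis assignment realizes $\bra{\psi}E^{+}\ket{\psi}=\Omega(N)$ and therefore $\bra{\psi}E\ket{\psi}=\Omega(N)$. Combining with the upper bound yields $\|(U_0-\mathscr{U}_p)\ket{\psi}\|=\Theta(N\delta t^{p+1})$. As a consistency check, for the QIMF model one need not randomize: the computational-basis product state $\ket{0}^{\otimes N}$ already gives $\bra{\psi}E\ket{\psi}=2N h_x h_y=\Theta(N)$, since the extensive single-qubit $Z_j$ terms are saturated while the two-qubit terms $Z_jX_{j+1},X_jZ_{j+1}$ have vanishing expectation.
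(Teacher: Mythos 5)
Your proposal is correct, and it takes a genuinely different route from the paper. The paper's proof is deterministic and works at the level of the second moment: it selects a maximal subfamily $\mathrm{Stab}$ of the Pauli terms $P_j$ with pairwise disjoint supports (so $|\mathrm{Stab}|=\Theta(N)$ by the bounded-weight assumption), builds the explicit product state stabilized by the single-qubit factors of those Paulis, and then lower-bounds $\bra{\psi}E^{\dagger}E\ket{\psi}=\Theta(N^2)$ by a case analysis of all cross terms (Stab--Stab pairs give $(\sum_{j\in\mathrm{Stab}}a_j)^2$, overlapping commuting pairs cost only $\mathcal O(N)$, and the positive--negative cross terms cost $o(N^2)$ by hypothesis). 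You instead work with the first moment, using $\|E\ket{\psi}\|\ge|\bra{\psi}E\ket{\psi}|$ and a probabilistic argument over random tensor products of $+1$ Pauli eigenstates, which retains a $3^{-w}$ fraction of the full weight $\sum_j a_j$ with no possibility of sign cancellation (each $\bra{\psi}P_j\ket{\psi}\in\{0,1\}$), while the negative part is killed wholesale by $\sum_k|b_k|=o(N)$. Your route buys simplicity: linearity of the expectation eliminates the cross-term bookkeeping entirely, and it avoids the paper's implicit reliance on each $a_j$ being $\Omega(1)$ in the step $(\sum_{j\in\mathrm{Stab}}a_j)^2=\Theta(|\mathrm{Stab}|^2)$. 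What the paper's construction buys is an explicit witness state and machinery that the Supplemental Material reuses for examples \emph{outside} the theorem's hypotheses (PF2 for QIMF, the Heisenberg chain), where the second-moment analysis of constructive interference among disjoint commuting Paulis does real work; your first-moment bound would not reach $\Theta(N^2)$ in those cases. Both proofs share the same outer shell (leading-order expansion, $\|\mathscr{E}_{\mathrm{re}}\|=\mathcal O(N\delta t^{p+2})$, trivial $\mathcal O(N\delta t^{p+1})$ upper bound), and your QIMF sanity check with $\ket{0}^{\otimes N}$ coincides with the paper's own example.
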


The condition of this theorem is not overly restrictive and can be satisfied in natural examples, such as the well-studied QIMF model with the PF1 method.
A detailed construction for this case is presented in the Methods section, with further details in Section II of the Supplemental Material.

In addition to the product states identified by \cref{thm:worstcase}, states generated by non-ergodic Hamiltonian evolution may be insufficiently entangled to enjoy average-case Trotter error scaling.
For example, for the QIMF model in the atypical case $(h_x, h_y, J) = (0, 0.9045, 1)$, the Hamiltonian is integrable and can be transformed to free fermions via the Jordan-Wigner transformation. Starting from the initial state $\ket{0}^{\otimes N}$, the entanglement entropy does not reach its maximal value, which leads to a clear gap between the empirical Trotter error and the average-case error, as shown in \cref{fig:thermalH}(c) and (d).

\subsection*{Light-cone effect with geometrically local evolution}
\Cref{fig:thermalH}(a) shows that the Trotter error decreases before the subsystems are close to 4-uniform. Instead, the error converges to the average-case performance around the point where the RDMs of 1- and 2-qubit subsystems are highly mixed. This is because, starting from a product state, after evolving for a short time with a geometrically local Hamiltonian,
distant subsystems are only weakly correlated \cite{PhysRevLett.97.050401}, which simplifies the entanglement-based error bound in \cref{eq:ErrEnt}. To see this, consider the joint entropy
$S(\rho_{j,j'})=S(\rho_{j})+S(\rho_{j'})- I(j,j')_{\rho}$ where $I(j,j')_{\rho}$
is the mutual information for the distant subsystems on $\support(E_{j}^{\dag})$ and $\support(E_{j'})$. 
As shown in \cref{fig:compareQC}(b), only subsystems that lie inside the light cone (e.g., $E_1$ and $E_3$ in the figure) can be significantly correlated. For two subsystems outside the light cone (e.g., $E_1$ and $E_8$), their joint RDM $\rho_{j,j'}=\rho_{i}\otimes \rho_{j}$ is a product state and $I(j,j')_{\rho}=0$. More concretely, 
consider a pure state $\ket{\psi}$ generated by a $D$-dimensional geometrically local circuit of depth $C_{\mathrm{depth}}$ acting on a product state.
We can refine the entanglement-based error bound in \cref{thm:main} by replacing 
$\sum_{j,j'} \|E_j^{\dagger} E_{j'}\|\sqrt{\log d_{\rho_{j,j'}}-S(\rho_{j,j'})}$ with 
\begin{align}
    \sum_{(j,j')\in \mathcal{L}} \|E_j^{\dag}E_{j'}\|\sqrt{\log d_{\rho_{j,j'}}-S(\rho_{j,j'})}+ \sum_{(j,j')\notin \mathcal{L}} \|E_j^{\dag}E_{j'}\|\sqrt{\log d_{\rho_{j}}+ \log d_{\rho_{j'}} -S(\rho_{j})-S(\rho_{j'})},
    \label{eq:lightconeerror}
\end{align}
where $\mathcal{L}$ denotes the pairs of subsystems at distance less than $C_{\mathrm{depth}}$ (i.e., that are inside the light cone).
If the light cone size for each subsystem $\support(E_j)$ is much smaller than the system size $N$, then
for a fixed $E_{j'}$, there are only $o(N)$ terms $E_{j}^{\dag}$ satisfying $(j,j')\in \mathcal{L}$. In this way, the second term of \cref{eq:lightconeerror} dominates the Trotter error. We formalize this as follows (see Section III of the Supplemental Material for the proof).

\begin{corollary}
\label{cor:shallow}
    For an $N$-qubit lattice Hamiltonian and a pure quantum input state $\ket{\psi}$ that is generated by a $D$-dimensional geometrically local circuit of depth $C_{\mathrm{depth}}=o(N^{1/D})$, the entanglement-based bound of \cref{thm:main} is 
    \begin{align}
  \|{(U_0-\mathscr{U}_p)\ket{\psi}}\|
  = \mathcal O(\delta t^{p+1}N\max_{j}(\log d_{\rho_{j}}-S(\rho_{j}))^{1/4})+ O(\delta t^{p+1}\sqrt{N}). 
    \end{align}
\end{corollary}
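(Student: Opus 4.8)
The plan is to start from the entanglement-based bound of \cref{thm:main} and replace the uniform sum $\sum_{j,j'}\|E_j^{\dag}E_{j'}\|\sqrt{\log d_{\rho_{j,j'}}-S(\rho_{j,j'})}$ by its light-cone refinement \cref{eq:lightconeerror}, and then bound the two resulting sums separately. To justify the splitting, I would write $\ket{\psi}=W\ket{\phi}$ with $W$ a depth-$C_{\mathrm{depth}}$ geometrically local circuit and $\ket{\phi}$ a product state, and observe that the RDM on $\support(E_j^{\dag})\cup\support(E_{j'})$ depends only on the backward light cones of the two supports under $W$. Each support has weight $w(E_j)=\mathcal{O}(1)$, so its backward light cone has radius $\mathcal{O}(C_{\mathrm{depth}})$ and volume $\mathcal{O}(C_{\mathrm{depth}}^D)$. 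When the two supports lie far apart (the complement of $\mathcal{L}$), their backward light cones are disjoint; since $\ket{\phi}$ factorizes across this partition, the joint RDM factorizes exactly, $\rho_{j,j'}=\rho_{j}\otimes\rho_{j'}$, giving $S(\rho_{j,j'})=S(\rho_{j})+S(\rho_{j'})$ and $\log d_{\rho_{j,j'}}=\log d_{\rho_{j}}+\log d_{\rho_{j'}}$, which is precisely the replacement in the second sum of \cref{eq:lightconeerror}.

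Next I would bound each sum. Writing $\epsilon_{\max}:=\max_{j}(\log d_{\rho_{j}}-S(\rho_{j}))$, for the out-of-cone pairs each summand satisfies $\log d_{\rho_{j}}+\log d_{\rho_{j'}}-S(\rho_{j})-S(\rho_{j'})\le 2\epsilon_{\max}$, while $\|E_j^{\dag}E_{j'}\|\le\|E_j\|\,\|E_{j'}\|=\mathcal{O}(1)$ since the Hamiltonian is local; with $\mathcal{O}(N)$ error terms there are $\mathcal{O}(N^2)$ such pairs, so this sum is $\mathcal{O}(N^2\sqrt{\epsilon_{\max}})$. For the in-cone pairs I would use the crude bound $\sqrt{\log d_{\rho_{j,j'}}-S(\rho_{j,j'})}\le\sqrt{\log d_{\rho_{j,j'}}}=\mathcal{O}(1)$ (the joint support has bounded weight), together with the counting fact that each $E_{j'}$ has only $\mathcal{O}(C_{\mathrm{depth}}^D)=o(N)$ in-cone partners because $C_{\mathrm{depth}}=o(N^{1/D})$; this makes the first sum $o(N^2)$. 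Summing, the quantity under the square root in \cref{thm:main} is $\mathcal{O}(N^2\sqrt{\epsilon_{\max}})+o(N^2)$, and restoring the $\delta t^{p+1}$ prefactor together with the remainder $\delta t^{p+1}\|E\|_F=\mathcal{O}(\delta t^{p+1}\sqrt{N})$ yields the claimed bound, the out-of-cone term producing the dominant $\mathcal{O}(\delta t^{p+1}N\,\epsilon_{\max}^{1/4})$ contribution whenever $\epsilon_{\max}=\Omega(1)$.

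The main obstacle is the light-cone factorization step: one must argue carefully that disjoint backward light cones acting on a product initial state force the joint RDM to be a genuine tensor product (so that the mutual information vanishes and the entropy deficit becomes exactly additive), rather than merely approximately so. For an exactly finite-depth circuit this is a clean causality and support argument, but pinning down the correct distance threshold (the constant factor relating $C_{\mathrm{depth}}$ to the cone radius, hence the precise definition of $\mathcal{L}$) and verifying that the in-cone partner count is genuinely $o(N)$ under $C_{\mathrm{depth}}=o(N^{1/D})$ require care, as illustrated in \cref{fig:compareQC}(b). The remaining per-term norm estimates and pair-counting are routine.
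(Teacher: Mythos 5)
Your proposal is correct and follows essentially the same route as the paper's proof in Section III of the Supplemental Material: the same split of the double sum into in-cone and out-of-cone pairs, the same exact factorization $\rho_{j,j'}=\rho_j\otimes\rho_{j'}$ (vanishing mutual information) for supports with disjoint backward light cones acting on a product state, the same $o(N)$ count of in-cone partners from the $\mathcal O(C_{\mathrm{depth}}^D)$ light-cone volume, and the same final bookkeeping yielding $\mathcal O(N^2\max_j(\log d_{\rho_j}-S(\rho_j))^{1/2})+o(N^2)$ under the square root. The only cosmetic difference is that you spell out the backward-light-cone causality argument that the paper asserts more briefly, and you flag (as the paper does not) that the $o(N^2)$ in-cone term is dominated only when $\max_j(\log d_{\rho_j}-S(\rho_j))=\Omega(1)$.
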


In particular, if $\ket{\psi}$ is generated by a shallow and geometrically local circuit, and satisfies geometric $k$-uniformity 
with $k\ge \max_j w(E_j)$, the Trotter error $\|{(U_0-\mathscr{U}_p)\ket{\psi}}\|$ will have the average-case error upper bound. This is a looser requirement than the condition in \cref{cor:kuniform}, $k\ge 2\max_jw(E_j)$.
In the above QIMF example, for shallow circuit input states, entanglement entropy close to $2$ suffices for the Trotter error to have the average-case upper bound, which matches the empirical results in \cref{fig:thermalH}(a) very well.  

\subsection*{Total error of long-time evolution}
So far, we have discussed the error in a single Trotter step for a short time. For a long-time evolution, one can divide the whole evolution into $r$ segments, where in each step, the ideal evolution is approximated by the Trotter formula. Then 
the total simulation error can be upper bounded via the triangle inequality as
\begin{equation}
    \begin{aligned}
\|(U_0^r-\mathscr{U}_p^r)\ket{\psi(0)}\|  \le \sum_{i=0}^{r-1} \|U_0^{r-1-i}(U_0-\mathscr{U}_p)\mathscr{U}_p^i\ket{\psi(0)}\|
= \sum_{i=0}^{r-1} \|(U_0-\mathscr{U}_p)\ket{\phi_i}  \|,
\end{aligned}
\end{equation}
where $\ket{\phi_i}:=\mathscr{U}_p^i\ket{\psi(0)}$ denotes the evolved state under the Trotter approximation after $i$ Trotter steps. 
In this way, we can take the states $\ket{\phi_i}$ as the inputs for the above theorems and corollaries. For a long-time evolution, the entanglement typically increases \cite{Alessio2016quantum,gogolin2016equilibration}, so intuitively, one can use the worst-case analysis for the first $c$ Trotter steps, followed by the average-case analysis for the remaining $r-c$ steps, where $c$ is chosen so that the entanglement entropy after time $c \delta t$ satisfies 
$S(\rho_{j,j'})\ge w(E_j^{\dag}E_{j'})-\mc{O}\bigl(\bigl(\frac{\|E\|_F}{\sum_j\|E_j\|}\bigr)^4\bigr)$. 
Then the total error is approximately
$\mathcal O\left([c \|E\|+ (r-c) \|E\|_F] \delta t^{p+1}\right)$. When $c\lesssim r \|E\|_F/\|E\| $, the total error scales approximately as in the average case, with $\mathcal O\left(r\|E\|_F \delta t^{p+1}\right)= \mathcal O\left(\|E\|_F t^{p+1}/r^p\right)$ \cite{Zhaorandom22}. 

For a given order of product formula, the number of Trotter steps $r$ quantifies the cost of the simulation, since the total circuit complexity is proportional to $r$.
In \cref{fig:trottersteps}, we show the number of Trotter steps $r$ that suffices to ensure empirical error $10^{-5}$ for a long-time evolution of the QIMF model with time $t=N$. For the typical case, where the system becomes significantly entangled, the empirical performance is similar to the average-case performance. In contrast, for an atypical case that does not produce high entanglement, there is a clear gap between the empirical results and the average-case analysis.
The extrapolated scaling of our theoretical bounds aligns well with the theoretical average-case error bound, as depicted in \cref{fig:theory_error_comparison}(c).

\begin{figure}
    \centering
    \includegraphics[width=1\columnwidth]{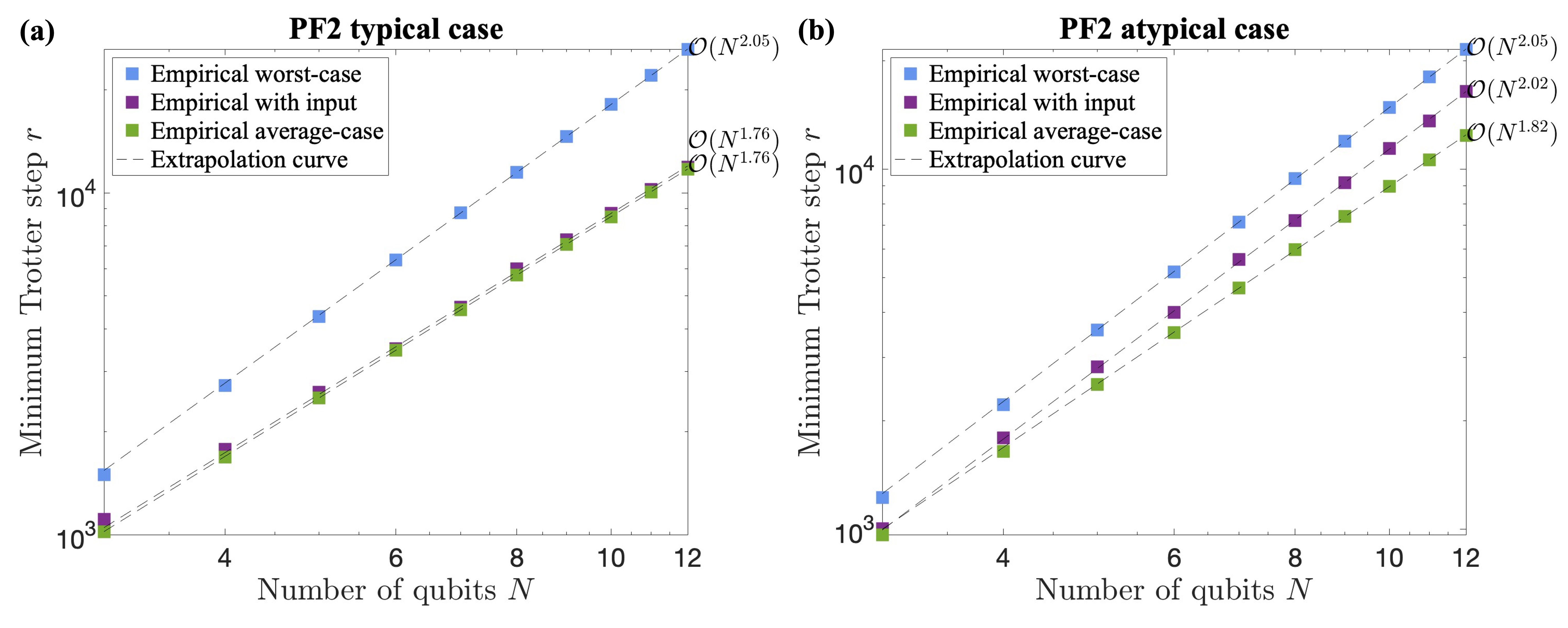}
    \caption{Number of Trotter steps determined with different error bounds for PF2 with $t=N$ and $\varepsilon=10^{-5}$. (a) and (b) show the typical case $(h_x,h_y,J)= (0.8090, 0.9045, 1)$ and the atypical case $(h_x,h_y,J)= (0, 0.9045, 1)$, respectively.
     The empirical spectral norm curve corresponds to the worst-case analysis such that 
     $\|U_0(t)-\mathscr{U}_p^r(t/r)\| \le \varepsilon$. The empirical random input curve corresponds to the average-case analysis such that 
     $\mathbb{E}_{\mathcal{E}}(\|U_0(t)-\mathscr{U}_p^r(t/r)\ket{\psi}\|)\le \varepsilon$, which assumes that $\mathcal{E}=\{\ket{\psi}\}$ is a 1-design ensemble \cite{Zhaorandom22}. The empirical curve with the input $\ket{\psi(0)}=\ket{0}^{\otimes N}$ corresponds to the minimal $r$ such that 
     $\|U_0(t)-\mathscr{U}_p^r(t/r)\ket{\psi(0)}\| \le \varepsilon$.}
    \label{fig:trottersteps}
\end{figure}

\subsection*{Measurement-assisted adaptive Hamiltonian simulation}
To apply the entanglement-based error bound of \cref{thm:main}, we must understand how entangled the state is.
If the entanglement is not known in advance, we can estimate it during the simulation, and thereby adaptively enhance the algorithm's performance.
Let $\ket{\phi_i}=\mathscr{U}_{p}^i\ket{\psi(0)}$ denote the state of the algorithm after $i$ Trotter steps. Using shadow tomography~\cite{aaronson2019shadow,huang2020predicting}, we can estimate the local RDMs of $\ket{\phi_i}$ and their entanglement entropy, giving an estimate of the Trotter error bound in \cref{thm:main}. Let $M$ denote the number of copies of the state used in the estimation. By careful choice of measurements, we can show that $M=\mc O(N^2)$ suffices, as discussed in \cref{Merror:pauli} below.
We describe the measurement process in more detail in the Methods section.

Here we demonstrate the advantage of the adaptive simulation framework compared to a simulation with worst-case error analysis. Consider the standard quantum simulation task: prepare the quantum state obtained by evolving with a given Hamiltonian $H$ for time $t$ within a small error $\epsilon=\mc{O}(1)$. 
Suppose at some checkpoint time $t_c$, we insert the measurement gadget to estimate the Trotter error for the upcoming simulation. 
Then the total gate cost of the simulation is
\begin{align}\label{eq:gatenum}
   \mc{G}= G_1(t_c)M+[G_1(t_c)+G_2(t-t_c)]M_o,
\end{align}
compared to $\mc{G}'= G_1(t)M_o$ without error estimation.
Here $G_{1(2)}$ labels the gate count for simulation of the time before (after) $t_c$, respectively. The parameters $M$ and $M_o$ denote the number of copies of the states used at the checkpoint time $t_c$ and at the final time $t$, respectively.
For a simple final observable, such as a single Pauli operator, $M_o=\mc{O}(1)$;
but for a general measurement $O=\sum O_i$ (for instance, a Hamiltonian with $\mc O (N)$ local terms and estimation accuracy $\epsilon_s=\mc O(1)$ \cite{zhou2023performance}), the final measurement cost could be $M_o=\mc{O}(N^2)$, comparable with $M$. Note that here we assume that the error estimation at checkpoint $t_c$ controls the error for the entire remaining time period of duration $t-t_c$. By \cref{thm:main}, this is implied by the assumption that the local entanglement entropy is non-decreasing, which is the typical case even for MBL systems \cite{Alessio2016quantum,nandkishore2015many}.

Here we use the first-order product formula algorithm to illustrate the underlying advantage. Suppose at time $t_c$, we check that the error for subsequent simulation is close to the average case (that is, the RDMs are almost maximally mixed) \cite{Zhaorandom22}. In this case, the gate counts are $G_1(t,\epsilon)=\mc{O}(N^2t^2\epsilon^{-1})$ and $G_2(t,\epsilon)=\mc{O} (N^{1.5}t^2\epsilon^{-1})$ for the worst and average cases, respectively. 
Here $t_c=\mc{O}(1)$ is typically independent of the system size. Denoting the simulation error before $t_c$ by $\epsilon_c$, 
and assuming the error is proportional to the simulation time (i.e., $\epsilon_c=\epsilon t_c/t$) and $M=M_o=\mc O (N^2)$, we find that $\mc{G}=\mc{O}(N^4t+N^{3.5}t^2)$. This is better than the gate count $\mc{G'}=\mc O(N^{4}t^2)$ that would be obtained by worst-case analysis, without using measurement to improve the Trotter error estimation.

Taking this idea further, we can develop an adaptive Trotter algorithm by inserting a few measurements in the middle of the simulation, as summarized in \cref{al:adaptive}. 

\begin{algorithm}[H]
\caption{Measurement-assisted adaptive Trotter simulation}\label{al:adaptive}
\begin{algorithmic}[1]
\Require
Initial quantum state $\ket{\psi(0)}$, simulation time $t$, simulation error $\epsilon$, Hamiltonian $H$ with the $p$th-order Trotter approximation, and $T$ measurement checkpoints $t_1,t_2,\ldots,t_T$ with $t_0=0\le t_1<t_2\cdots<t_T<t_{T+1}=t$.
\Ensure
Final quantum state $\ket{\phi(t)}$ with simulation error at most $\epsilon$ to $e^{-iHt}\ket{\psi(0)}$.
\For{$i= 1~\text{\textbf{to}}~T$} 
 \For{$j= 1~\text{\textbf{to}}~M$} 
 \State Prepare the state $\ket{\phi(t_{i})}$ from $\ket{\psi{(0)}}$ using Trotter simulation with the previously determined Trotter steps $\{r(\Delta_{i'})\}$ from $i'=0$ to $i-1$.  
  \State  Perform random $N$-qubit Pauli measurements on $\phi(t_{i})$ to collect a one-shot shadow snapshot $\hat{\rho}_{(j)}$.
 \EndFor
 \State Estimate the single-segment Trotter error $\epsilon(i)$ in \cref{Merror:pauli} using $\left\{\hat{\rho}_{(j)}\right\}_{j=1}^M$, with $M=\mc O(N^2)$ for a lattice Hamiltonian.
 \State Use $\epsilon(i)$ to determine the Trotter step $r(\Delta_i)$ for the next interval of duration $\Delta_i=t_{i+1}-t_{i}$, such that the simulation error is at most $\epsilon(\Delta_i)=\epsilon\Delta_i/t$.
\EndFor 
\State Prepare the final state $\ket{\phi(t)}$ from $\ket{\psi(0)}$ using the Trotter algorithm with the previously determined Trotter steps $\{r(\Delta_{i})\}$ from $i=0$ to $T$.
\end{algorithmic}
\end{algorithm}

The total gate count, generalizing \cref{eq:gatenum}, is
\begin{align}\label{eq:gatenum1}
   \mc{G}
   =\sum_{c=1}^{T} \sum_{i=0}^c G_i(\Delta_i,\epsilon(\Delta_i)) M,
\end{align}
where we take $M_o=M$ and let $G_i$ denote the gate count for the $i$th period, which depends on the error estimation at checkpoint $t_i$ in \cref{al:adaptive}.

We make several remarks. First, in \cref{al:adaptive}, we assume that the Trotter error in the period $\Delta_i$ is controlled by the state at the checkpoint $t_i$. 
This assumption is reasonable since the local entanglement entropy tends to increase \cite{Alessio2016quantum,nandkishore2015many}.
Second, the number of measurement points and their locations in time can be adjusted in real time. For example, we can keep track of the simulation error to see whether it reaches nearly the average case, indicating that the local RDMs have already thermalized. Once this occurs, further error estimation is unnecessary under the assumption that the entanglement increases monotonically, though we could choose to verify this by checking the entanglement after long periods of evolution.
Third, the summation in \cref{eq:gatenum1} gives gate overhead due to the measurement.
However, if the quantum simulation task is to track the expectation value of some observable $\langle O(t) \rangle$ as a function of $t$ for several time points, the measurement-assisted simulation may not need to introduce additional measurements. In particular, for a sufficiently complex operator $O$, such as a Hamiltonian with $\mc O (N)$ local terms, one can reuse the randomized measurement data for estimating $O$ to estimate the Trotter error at the checkpoints of the simulation.

In \cref{fig:theory_error_comparison}(d), we show the number of Trotter steps $r$ used by the adaptive PF2 protocol with various numbers of uniformly spaced checkpoints. The adaptive protocol reduces the number $r$ and thus enhances the performance of the simulation with only a few checkpoints. See the Methods section and Section VII of the Supplemental Material for more details.

\begin{figure}
    \centering
    \includegraphics[width=1\columnwidth]{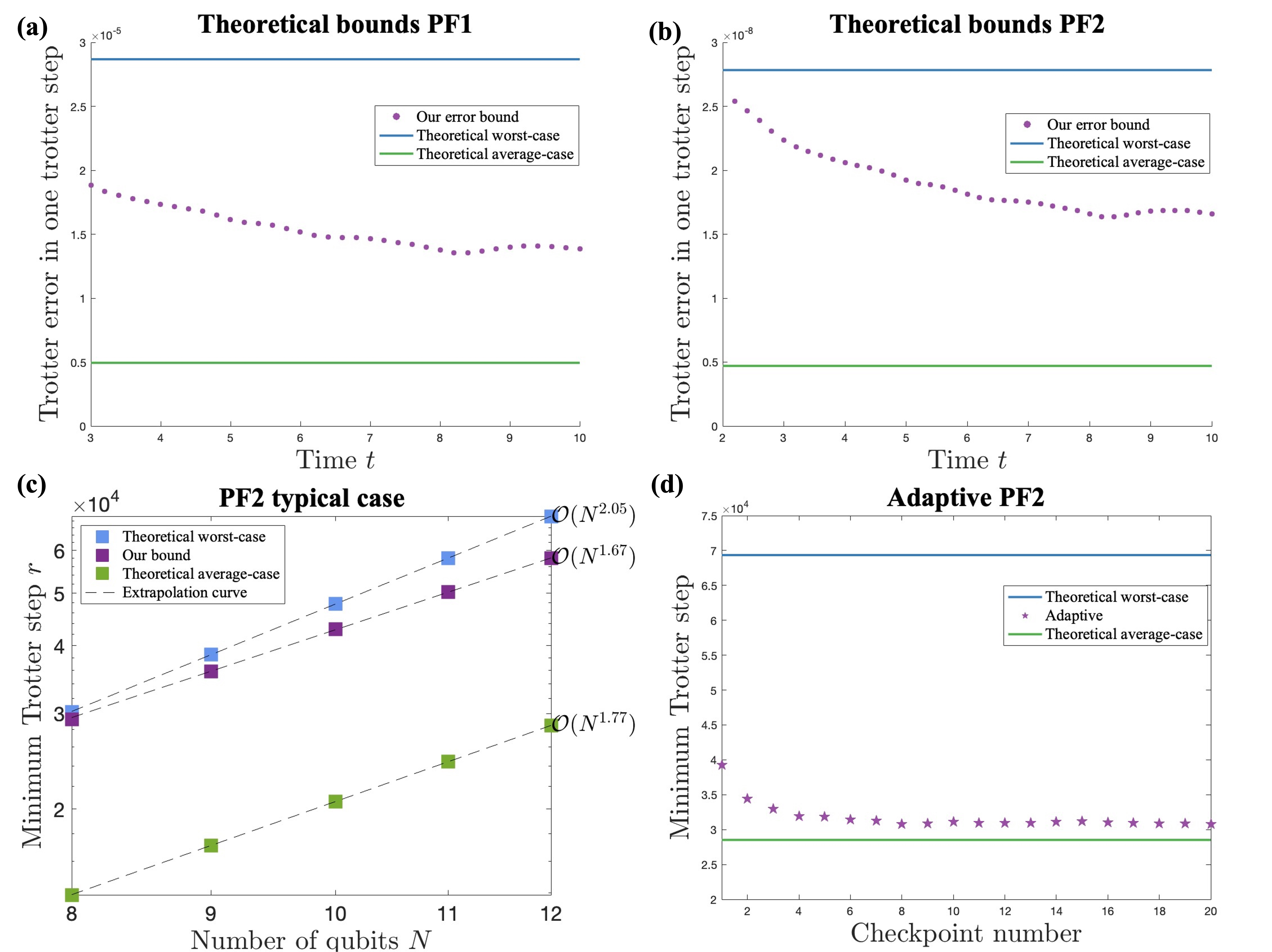}
    \caption{(a--b). Theoretical Trotter error in each Trotter step for PF1 and PF2. We compare the theoretical worst-case Trotter error (purple), distance-based Trotter error (blue), and the theoretical average-case Trotter error (green). See \cref{fig:thermalH} for a comparison to the empirical error.
   (c). Comparison of the minimum required Trotter steps $r$ for different theoretical bounds. For our bound, we use the result from \cref{Lemma:ABPF2}, the explicit formulas for theoretical worst-case and average-case bounds are shown in Section VII of the Supplemental Material. 
    (d). Minimum required Trotter steps $r$ with adaptive protocols. For the purple dots, we apply the adaptive protocol in Algorithm 1 with various numbers of checkpoints by measuring the operators in \cref{Merror:pauli}. }
 \label{fig:theory_error_comparison}
\end{figure}

\section*{Discussion and Outlook}

In this work, we showed that Hamiltonian simulation using product formulas can be more effective when the simulated state is entangled. Our numerical results suggest that the number of Trotter steps to ensure an accurate simulation scales similarly for sufficiently entangled states and for the average case, both empirically (\cref{fig:trottersteps}) and for the theoretical bounds (\cref{fig:theory_error_comparison}).

The acceleration arising from entanglement can be directly applied to other 
algorithms based on quantum simulation, such as quantum adiabatic algorithms \cite{kovalskySelfhealingTrotterError2023} and quantum phase estimation \cite{yi2022spectral}. It is also worth 
exploring whether a similar phenomenon can improve 
product formula-based quantum Monte Carlo algorithms \cite{bravyi2015monte,bravyi2017poly}
and algorithms based on imaginary time evolution \cite{motta2020determining}.

In our numerical study of the adaptive quantum simulation algorithm, we chose uniformly spaced checkpoints. However, an algorithm incorporating some other pattern of checkpoints might yield greater efficiency. In particular, under the assumption that the entanglement tends to increase, we need not estimate the entanglement after it is already determined to be nearly maximal.
Future work might try to make this approach more precise and explore the best way to adaptively estimate the entanglement.

Our results advance the understanding and application of quantum entanglement in quantum information processing, and show how entanglement may prove advantageous for investigating fundamental quantum physics through accelerated quantum simulations.
This work sheds new light on the intricate relationship between quantum resources and quantum algorithms, revealing that quantum resources can serve as a driving force for accelerating quantum simulation, and potentially for other quantum algorithms. Future work might explore whether other quantum resources, such as magic \cite{howard2014contextuality} and coherence \cite{RevModPhys.89.041003}, can also help improve the performance of quantum simulation or other quantum algorithms. It would also be beneficial to perform a systematic study of how our results can reduce the resource requirements for specific practical simulations in areas such as chemistry, condensed matter, and nuclear/particle physics.

\section*{Acknowledgements}
We are grateful to Wenjun Yu, Fei Shi, and Jue Xu for the helpful discussions.
Q.Z.\ acknowledges funding from the HKU Seed Fund for Basic Research for New Staff via Project 2201100596, the Guangdong Natural Science Fund via Project 2023A1515012185, the National Natural Science Foundation of China (NSFC) via Project Nos.\ 12305030 and 12347104, Hong Kong Research Grant Council (RGC) via Project No.\ 27300823, N\_HKU718/23, and R6010-23, Guangdong Provincial Quantum Science Strategic Initiative GDZX2200001.
Y.Z.\ acknowledges the support of the National Natural Science Foundation of China (NSFC) Grant No.\ 12205048, the Innovation Program for Quantum Science and Technology 2021ZD0302000, and the start-up funding of Fudan University.
A.M.C.\ acknowledges support from the United States Department of Energy, Office of Science, Office of Advanced Scientific Computing Research, Accelerated Research in Quantum Computing program (award no.\ DE-SC0020312), and from the National Science Foundation (QLCI grant OMA-2120757).

\section*{Competing financial interests}
The authors declare no competing financial interests.

\section*{Methods}

\subsection*{Proof sketch of \texorpdfstring{\cref{thm:main}}{Theorem 1}}
For a given Hamiltonian $\sum_{l=1}^LH_l$, the first-order product formula
applies the unitary operation $\mathscr{U}_1(t):=e^{-iH_1t}e^{-iH_2t}\cdots e^{-iH_Lt}
=\overrightarrow{\prod_l}e^{-iH_lt}.$
Here the right arrow indicates the product is in the order of increasing indices.
Suzuki's high-order product formulas are defined recursively as
\begin{align}\label{M_Eq:highorder}
\mathscr{U}_2(t)&:=\overrightarrow{\prod_l}e^{-iH_lt/2} \overleftarrow{\prod_l}e^{-iH_lt/2},    \\
\mathscr{U}_{2k}(t)&:= [\mathscr{U}_{2k-2}(p_k t)]^2 \mathscr{U}_{2k-2}((1-4p_k)t) [\mathscr{U}_{2k-2}(p_k t)]^2, \nonumber
\end{align}
where $\overleftarrow{\prod}_l$ denotes a product in decreasing order and $p_k:=\frac{1}{4-4^{1/(2k-1)}}$ with $k>1$ \cite{suzuki1991general}.

For a given input state $\ket{\psi}$,
the error of PF$p$ satisfies $\|(U_0(\delta t)-\mathscr{U}_p(\delta t))\ket{\psi}\|\le  \| \sum_{i} E_j \ket{\psi}\| \delta t^{p+1}  + \|\mathscr{E}_{\mathrm{re}}\|$ with $\|\mathscr{E}_{\mathrm{re}}\|=\mathcal O(\delta t^{p+2})$~\cite{childs2020theory}.
Here we focus on the leading term $\| \sum_{j} E_j \ket{\psi}\|=  \sqrt{\sum_{j,j'} \bra{\psi}E_j^{\dagger} E_{j'}\ket{\psi} } $.

\begin{lemma}\label{Lemma:local}
Let $E=\sum_j E_j$ act on $N$ qubits, where $E_j$ acts nontrivially on the subsystem with $\support(E_j)$. Then
\begin{align}
  | \bra{\psi}E^{\dagger}E \ket{\psi}|\le \|E\|_F^2+    \sum_{j,j'} \|E_j^{\dag} E_j'\| ~\tr|\rho_{j,j'}- \mathbb{I}_{\support(E_j^{\dag}E_{j'})}/d_{\support(E_j^{\dag}E_{j'})} |,
\end{align}
where $\|E\|^2_F:= \tr(E^{\dagger}E) /d$ is the square of the normalized Frobenius norm, and $\rho_{j,j'}:=\tr_{[N]\setminus \support(E_j E_j')}(\ket{\psi}\bra{\psi})$ is the RDM of $\ket{\psi}\bra{\psi}$ on the subsystem on $\support(E_j^{\dag}E_j')$.
\end{lemma}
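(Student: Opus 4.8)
The plan is to expand the quadratic form as $\bra{\psi}E^{\dagger}E\ket{\psi}=\sum_{j,j'}\bra{\psi}E_j^{\dagger}E_{j'}\ket{\psi}$ and to exploit the locality of each operator $E_j^{\dagger}E_{j'}$ pairwise. Since $E_j^{\dagger}E_{j'}$ acts nontrivially only on $\support(E_j^{\dagger}E_{j'})$, I would write it as $\tilde{E}_{j,j'}\otimes\mathbb{I}_{\mathrm{rest}}$ with $\tilde{E}_{j,j'}$ supported on that subsystem, and collapse each expectation value onto the corresponding reduced density matrix: $\bra{\psi}E_j^{\dagger}E_{j'}\ket{\psi}=\tr(\tilde{E}_{j,j'}\,\rho_{j,j'})$. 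This is the step that brings the RDMs $\rho_{j,j'}$ into play and is the conceptual heart of the argument.

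Next I would decompose each RDM into its maximally mixed part plus a traceless deviation, $\rho_{j,j'}=\mathbb{I}/d_{j,j'}+(\rho_{j,j'}-\mathbb{I}/d_{j,j'})$ with $d_{j,j'}=d_{\support(E_j^{\dagger}E_{j'})}$, and split the trace accordingly. For the maximally mixed piece, $\tr(\tilde{E}_{j,j'}\,\mathbb{I}/d_{j,j'})=\tr(\tilde{E}_{j,j'})/d_{j,j'}$, and a dimension-counting identity—tracing out the $N-w(E_j^{\dagger}E_{j'})$ complementary qubits contributes a factor $d/d_{j,j'}$—gives $\tr(\tilde{E}_{j,j'})/d_{j,j'}=\tr(E_j^{\dagger}E_{j'})/d$. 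Summing over $(j,j')$ then reassembles $\sum_{j,j'}\tr(E_j^{\dagger}E_{j'})/d=\tr(E^{\dagger}E)/d=\|E\|_F^2$, which is exactly the first term on the right-hand side of the lemma.

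For the deviation piece, I would apply the matrix H\"older inequality $|\tr(AB)|\le\|A\|\,\|B\|_1$ with $A=\tilde{E}_{j,j'}$ and $B=\rho_{j,j'}-\mathbb{I}/d_{j,j'}$. Since the spectral norm is invariant under tensoring with the identity, $\|\tilde{E}_{j,j'}\|=\|E_j^{\dagger}E_{j'}\|$, and $\|\rho_{j,j'}-\mathbb{I}/d_{j,j'}\|_1=\tr|\rho_{j,j'}-\mathbb{I}/d_{j,j'}|$, reproducing the weights and trace-distance factors in the stated sum. Because $\bra{\psi}E^{\dagger}E\ket{\psi}=\|E\ket{\psi}\|^2$ is real and nonnegative and $\|E\|_F^2\ge 0$, I can pass to absolute values and use the triangle inequality on the sum of deviation terms to obtain the claimed bound.

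The most delicate bookkeeping is that each pair $(j,j')$ carries its own support $\support(E_j^{\dagger}E_{j'})$, its own reduced state $\rho_{j,j'}$, and its own dimension $d_{j,j'}=2^{w(E_j^{\dagger}E_{j'})}$; the point that needs the most care is verifying that the maximally mixed contributions, each living on a \emph{different} subsystem, nonetheless aggregate cleanly into the single global quantity $\|E\|_F^2$. This hinges entirely on the consistency of the partial-trace identity $\tr(\tilde{E}_{j,j'})/d_{j,j'}=\tr(E_j^{\dagger}E_{j'})/d$ across all pairs, and it is the place where a slip in dimension counting would be easiest to make. Once that identity is confirmed, the remaining estimates are routine applications of norm duality and the triangle inequality.
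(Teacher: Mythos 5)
Your proposal is correct and follows essentially the same route as the paper's proof: reduce each cross term $\bra{\psi}E_j^{\dagger}E_{j'}\ket{\psi}$ to $\tr(\tilde{E}_{j,j'}\rho_{j,j'})$, split $\rho_{j,j'}$ into maximally mixed plus traceless deviation, recover $\|E\|_F^2$ from the mixed parts via the dimension identity $\tr(\tilde{E}_{j,j'})/d_{j,j'}=\tr(E_j^{\dagger}E_{j'})/2^N$, and bound the deviation parts by H\"older. The only cosmetic difference is that you handle the passage to absolute values on the summed deviation terms slightly more carefully than the paper does.
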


\begin{proof}
The term $E_j^{\dag}E_{j'}$ in the expression for $E^{\dag}E$ only acts nontrivially on $\support(E_j^{\dag}E_{j'})$. We denote its nontrivial part by $L_{j,j'}:=\tr_{[N]\setminus \support(E_j E_j')} (E_j^{\dag}E_{j'})$. Since $2^{-N}\tr(E_j^{\dagger} E_{j'})=d_{\support(E_j^{\dag}E_{j'})}^{-1}\tr(L_{j,j'})$, we have
\begin{equation}
\begin{aligned}
\bra{\psi}E_j^{\dagger}E_{j'} \ket{\psi}&=\tr(L_{j,j'}\rho_{j,j'})=\tr[L_{j,j'}(\rho_{j,j'}- \mathbb{I}_{\support(E_j^{\dag}E_{j'})}/d_{\support(E_j^{\dag}E_{j'})})]+\tr(L_{j,j'})/d_{\support(E_j^{\dag}E_{j'})}\\
&= \tr[L_{j,j'}(\rho_{j,j'}- \mathbb{I}_{\support(E_j^{\dag}E_{j'})}/d_{\support(E_j^{\dag}E_{j'})})]+\tr(E_j^{\dagger}E_{j'})/2^N\\
&\leq \|L_{j,j'}\|\tr|\rho_{j,j'}- \mathbb{I}_{\support(E_j^{\dag}E_{j'})}/d_{\support(E_j^{\dag}E_{j'})}|+\tr(E_j^{\dagger}E_{j'})/2^N\\
&= \|E_j^{\dag}E_{j'}\| \tr|\rho_{j,j'}- \mathbb{I}_{\support(E_j^{\dag}E_{j'})}/d_{\support(E_j^{\dag}E_{j'})}|+\tr(E_j^{\dagger}E_{j'})/2^N,
\end{aligned}
\end{equation}
and the result follows by summing the indices $j,j'$.
\end{proof}

\cref{Lemma:local} implies the distance-based error bound of \cref{thm:main}. Moreover, the trace distance of $\rho_{j,j'}$ and $\mathbb{I}_{\support(E_j^{\dag}E_{j'})}/d_{\support(E_j^{\dag}E_{j'})}$ can be bounded by the relative entropy as
\begin{align}
    \tr|\rho_{j,j'}- \mathbb{I}_{\support(E_j^{\dag}E_{j'})}/d_{\support(E_j^{\dag}E_{j'})}| \le \sqrt{2S(\rho_{j,j'}\|\mathbb{I}_{\support(E_j^{\dag}E_{j'})}/d_{\support(E_j^{\dag}E_{j'})}) }=
    \sqrt{2\log(d_{\support(E_j^{\dag}E_{j'})})-2S(\rho_{j,j'})},
\end{align}
which leads to the entanglement-based bound.
For the proof, see Section I of the Supplemental Material.

\subsection*{Product state with the worst-case error in \texorpdfstring{\cref{thm:worstcase}}{Theorem 2}}

Here we show an example of a product state achieving the worst-case scaling error.
We consider the QIMF model introduced in the main text with the PF1 method. We find
\begin{equation}
\begin{aligned}
 E= [iA, iB]&= -i \left[2h_xh_y  \sum_{j=1}^N Z_j  +2 Jh_y \sum_{j=1}^{N-1} (Z_jX_{j+1}+ X_jZ_{j+1})\right].\\
\end{aligned}\label{eq:pf1commutator}
\end{equation}
After ignoring the global phase $-i$, we have $\sum_j a_j=\Theta (N)$ and  $\sum_k |b_k|=0$ which satisfies the condition in \cref{thm:worstcase}, since there is no term with a negative coefficient. Thus we can
choose $\ket{\psi}=\ket{0}^{\otimes N}$.
Using the triangle inequality, we can lower bound the overall error as
\begin{equation}
\begin{aligned}
    \|(U_0-\mathscr{U}_{1})\ket{\psi}\|&\ge \sqrt{\tr (E^{\dagger} E\ket{\psi} \bra{\psi}) } -\|\mathscr{E}_{\mathrm{re}}\| =  \Theta (N) \delta t^{p+1} -\|\mathscr{E}_{\mathrm{re}}\|
    =\Omega(N) \delta t^{p+1},
\end{aligned}
\end{equation}
since $\|\mathscr{E}_{\mathrm{re}} \|=\mathcal O (N\delta t^{p+2})$.
Combining this with the worst-case upper bound $\|(U_0-\mathscr{U}_1)\ket{\psi}\|=\mathcal O(N\delta t^{p+1})$,
we conclude that $\|(U_0-\mathscr{U}_{1})\ket{\psi}\|=\Theta(N\delta t^{2})$.
For the general proof and more examples, see Section II of the Supplemental Material.

\subsection*{Concrete error bounds for PF1 and PF2}
Here we also describe the error bounds with concrete prefactors for a two-term Hamiltonian $H=A+B$ with the first- and second-order product formulas (PF1 and PF2).

\begin{lemma}\label{Lemma:AB}
(First-order product formula, PF1)
For a two-term Hamiltonian $H=A+B$, consider the first-order product formula $\mathscr{U}_1(\delta t)=e^{-iAt}e^{-iBt}$ with initial state $\ket{\psi}$. Let $E:=[A,B]=\sum_j E_j$. Then the Trotter error is upper bounded as
\begin{equation}\label{Le2eq}
    \begin{aligned}
\|(\mathscr{U}_1(\delta t)-U_0(\delta t))\ket{\psi}\|&\le \sqrt{\frac{\delta t^4}{4}(\|[A,B]\|_F^2+\Delta_E(\psi)}+ \frac{\delta t^3}{6}\|[A,[A,B]]\|+  \frac{\delta t^3}{3}\|[B,[B,A]]\|,\\
with\ \Delta_E(\psi)&=\sum_{j,j'}\|E_{j'}E_j\| \tr|\rho_{j,j'}-\mathbb{I}_{\support(E_j^{\dag}E_{j'})}/d_{\support(E_j^{\dag}E_{j'})}|.
\end{aligned}
\end{equation}
For sufficiently small $t$, the error is $\mathcal O(\delta t^2\sqrt{\|[A,B]\|_F^2+\Delta_E(\psi}))$.
\end{lemma}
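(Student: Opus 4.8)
The plan is to split the error into a state-dependent leading piece controlled by $E=[A,B]$ and a state-independent higher-order remainder, then apply \cref{Lemma:local} to the former and a spectral-norm commutator bound to the latter. The first move I would make is to exploit unitary invariance: setting $V(s):=e^{i(A+B)s}e^{-iAs}e^{-iBs}$, we have $\mathscr{U}_1(\delta t)-U_0(\delta t)=U_0(\delta t)\,(V(\delta t)-\mathbb{I})$, so that $\|(\mathscr{U}_1-U_0)\ket{\psi}\|=\|(V(\delta t)-\mathbb{I})\ket{\psi}\|$ since $U_0(\delta t)$ is unitary. This reduces the task to analyzing $V(\delta t)-\mathbb{I}=\int_0^{\delta t}V'(s)\,\d s$ and strips away the oscillatory factor.

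Next I would compute $V'(s)$ by the product rule and use the Duhamel identity $B-e^{-iAs}Be^{iAs}=i\int_0^s e^{-iAu}[A,B]e^{iAu}\,\d u$ to obtain the exact representation $V'(s)=-e^{i(A+B)s}\bigl(\int_0^s e^{-iAu}[A,B]e^{iAu}\,\d u\bigr)e^{-iAs}e^{-iBs}$, which exhibits $[A,B]$ as the seed of the error. Replacing all three conjugating exponentials by the identity extracts the leading contribution $-\tfrac{\delta t^2}{2}[A,B]$ upon integrating $\int_0^{\delta t}s\,\d s$; the differences generated by restoring each conjugation are themselves exact integrals of conjugated double commutators $[A,[A,B]]$ and $[B,[A,B]]=-[B,[B,A]]$. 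I would thus write $V(\delta t)-\mathbb{I}=-\tfrac{\delta t^2}{2}[A,B]+\mathscr{E}_{\mathrm{re}}$ and bound $\|\mathscr{E}_{\mathrm{re}}\|$ using the fact that conjugation by a unitary preserves the spectral norm, so each nested-commutator integrand contributes $\|[A,[A,B]]\|$ or $\|[B,[B,A]]\|$ times a scalar weight; carrying out the iterated integrals $\int_0^{\delta t}\!\int_0^s\cdots$ produces the coefficients $\tfrac16$ and $\tfrac13$.

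With this decomposition in hand, the triangle inequality gives $\|(V(\delta t)-\mathbb{I})\ket{\psi}\|\le\tfrac{\delta t^2}{2}\|[A,B]\ket{\psi}\|+\|\mathscr{E}_{\mathrm{re}}\|$, where bounding the remainder in spectral norm removes its state dependence. For the leading term I would use $\|[A,B]\ket{\psi}\|^2=\bra{\psi}E^\dagger E\ket{\psi}$ and invoke \cref{Lemma:local} to get $\bra{\psi}E^\dagger E\ket{\psi}\le\|E\|_F^2+\Delta_E(\psi)$, so that $\tfrac{\delta t^2}{2}\|[A,B]\ket{\psi}\|\le\sqrt{\tfrac{\delta t^4}{4}(\|[A,B]\|_F^2+\Delta_E(\psi))}$. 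Combining the two pieces yields the stated bound, and the closing $\mathcal{O}(\delta t^2\sqrt{\|[A,B]\|_F^2+\Delta_E(\psi)})$ estimate follows because the remainder is $\mathcal{O}(\delta t^3)$.

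I expect the main obstacle to be the coefficient-exact control of $\mathscr{E}_{\mathrm{re}}$. A naive expansion produces genuine $\mathcal{O}(\delta t^4)$ and higher terms built from triple and higher nested commutators, so isolating a clean bound involving only the two double commutators—with precisely the asymmetric weights $\tfrac16$ and $\tfrac13$—requires organizing the iterated integrals so that higher-order pieces are reabsorbed into norm-preserving conjugations or shown to be subdominant. The asymmetry itself, with the $A$-nested commutator entering at weight $\tfrac16$ and the $B$-nested one at weight $\tfrac13$, reflects the ordered structure $e^{-iA\delta t}e^{-iB\delta t}$, and faithfully tracking it through the Duhamel expansion is the delicate bookkeeping; alternatively one may cite the general Trotter-error representation of Ref.~\cite{childs2020theory} to supply these higher-order commutator bounds directly.
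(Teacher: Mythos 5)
Your overall architecture---split off a leading term governed by $E=[A,B]$, apply \cref{Lemma:local} to it, and bound the remainder in spectral norm by double commutators---matches the paper's, and your treatment of the leading term is sound (in fact slightly more direct than the paper's, which squares $\mathscr{E}_1$, expands a double time integral, and invokes the Cauchy--Schwarz estimate of \cref{Lemma:traceproduct} to remove the intervening unitary $U_m$ from the cross terms). The genuine gap is in the remainder. In your representation the integrand is, up to reparametrization, $e^{i(A+B)s}\,e^{-iA(s-u)}[A,B]e^{iA(s-u)}\,e^{-iAs}e^{-iBs}$, and the outer sandwich $e^{i(A+B)s}(\cdot)\,e^{-iAs}e^{-iBs}$ is \emph{not} a unitary conjugation: the left and right factors do not match. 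Writing $e^{-iAs}e^{-iBs}=e^{-i(A+B)s}V(s)$, stripping this sandwich costs, in addition to genuine double-commutator terms, a contribution controlled by $\|V(s)-\mathbb{I}\|\le \tfrac{s^2}{2}\|[A,B]\|$, which after integration leaves an extra term of order $\delta t^4\|[A,B]\|^2$ that does not appear in \cref{Le2eq} and cannot be absorbed into the stated right-hand side. If you instead strip only the matched inner conjugation and commute $[A,B]$ through $e^{-iAs}e^{-iBs}$ so that it acts directly on $\ket{\psi}$, the bookkeeping yields coefficients $\tfrac12$ and $\tfrac13$ rather than $\tfrac16$ and $\tfrac13$. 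Either way the exact inequality of the lemma is not recovered; only the closing $\mathcal O(\delta t^2\sqrt{\|[A,B]\|_F^2+\Delta_E(\psi)})$ asymptotic survives, and you correctly flag this bookkeeping as the unresolved obstacle.

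The paper's resolution is to never extract a bare $-\tfrac{\delta t^2}{2}[A,B]$. It uses the exact error representation of Ref.~\cite{childs2020theory}, in which $\mathscr{E}=\mathscr{E}_1+\mathscr{E}_{\re}$ with $\mathscr{E}_1=\int_0^{\delta t}\d\tau_1\,\tau_1\,e^{-iH(\delta t-\tau_1)}e^{-iA\tau_1}e^{-iB\tau_1}[iA,iB]$: the bare commutator sits to the \emph{right} of the $\tau_1$-dependent unitary, so $\|\mathscr{E}_1\ket{\psi}\|\le\int_0^{\delta t}\tau_1\,\|[A,B]\ket{\psi}\|\,\d\tau_1=\tfrac{\delta t^2}{2}\|[A,B]\ket{\psi}\|$ with no correction at all, while $\mathscr{E}_{\re}$ consists of exactly two conjugated double-commutator integrals whose scalar weights integrate to $\delta t^3/6$ and $\delta t^3/3$. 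To repair your proof with the stated constants, adopt that representation (or reorganize your Duhamel expansion so that the leading commutator is only ever multiplied by unitaries acting on its left); the remainder of your argument, including the application of \cref{Lemma:local}, then goes through unchanged.
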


\begin{lemma}\label{Lemma:ABPF2}
(Second-order product formula, PF2)
For a two-term Hamiltonian $H=A+B$, consider the second-order product formula $\mathscr{U}_2( \delta t)=e^{-iA/2 \delta t}e^{-iB \delta t}e^{-iA/2\delta t}$ with initial state $\ket{\psi}$. Let $E_1:=[B,[B,A]]=\sum_j E_{1,j}$ and  $E_2:=[A,[A,B]]=\sum_j E_{2,j}$. Then the Trotter error is upper bounded as
\begin{equation}\label{le3eq}
    \begin{aligned}
\|(\mathscr{U}_2(\delta t)-&U_0(\delta t))\ket{\psi}\|\le \sqrt{\frac{\delta t^6}{144} \|\left[B,\left[B, A\right]\right] \|_F^2 +\Delta_{E_1}(\psi)}
+ \sqrt{\frac{\delta t^6}{576} \|\left[A,\left[A, B\right]\right] \|_F^2 +\Delta_{E_2}(\psi)}
\\
&+ \frac{\delta t^4}{32}\|\left[A,\left[B,\left[B,A\right]\right]\right]\|+  \frac{\delta t^4}{12}\|\left[B,\left[B,\left[B,A\right]\right]\right]\| +\frac{\delta t^4}{32}\|\left[B,\left[A,\left[A,B\right]\right]\right]\|+  \frac{\delta t^4}{48}\|\left[A,\left[A,\left[A,B\right]\right]\right]\|,\\
with\ \Delta_{E_1}(\psi)&=\sum_{j,j'}\|E_{1,j}^{\dag}E_{1,j'}\| \tr|\rho_{j,j'}-\mathbb{I}_{\support(E_{1,j}E_{1,j'})}/d_{\support(E_{1,j}^{\dag}E_{1,j'})}|, \\
\Delta_{E_2}(\psi)&=\sum_{j,j'}\|E_{2,j}^{\dag}E_{2,j'}\| \tr|\rho_{j,j'}-\mathbb{I}_{\support(E_{2,j}E_{2,j'})}/d_{\support(E_{2,j}^{\dag}E_{2,j'})}|.
\end{aligned}
\end{equation}
For sufficiently small $\delta t$, the error is $\mathcal O\left(\delta t^3\left(\sqrt{\|\left[B,\left[B, A\right]\right] \|_F^2 +\Delta_{E_1}(\ket{\psi})}+ \sqrt{\|\left[A,\left[A, B\right]\right] \|_F^2 +\Delta_{E_2}(\ket{\psi}) }    \right)\right)
$.
\end{lemma}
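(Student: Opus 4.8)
The plan is to follow the template of \cref{Lemma:AB}, carried one order higher, splitting the error into a leading $\mathcal O(\delta t^3)$ part (which receives the state-dependent treatment of \cref{Lemma:local}) and a remainder of order $\delta t^4$ (which is bounded in the worst case). Concretely, I would start from the exact integral representation of the Trotter error~\cite{childs2020theory}: writing $\mathscr{U}_2(\delta t)$ as a product of exponentials and differentiating, one expresses $\mathscr{U}_2(\delta t)-U_0(\delta t)$ as an integral of the conjugated mismatch between the generators of $\mathscr{U}_2$ and $U_0$. Taylor-expanding the integrand in $\delta t$ isolates a leading term $\mathscr{E}_{\mathrm{leading}}$ of order $\delta t^3$ and leaves a remainder $\mathscr{E}_{\mathrm{re}}$ whose spectral norm is $\mathcal O(\delta t^4)$.

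For the leading term, the symmetric structure of $e^{-iA\delta t/2}e^{-iB\delta t}e^{-iA\delta t/2}$ (equivalently, the symmetric Baker--Campbell--Hausdorff expansion) gives, up to an overall phase,
\[
  \mathscr{E}_{\mathrm{leading}}=\frac{\delta t^3}{12}[B,[B,A]]-\frac{\delta t^3}{24}[A,[A,B]]=\frac{\delta t^3}{12}E_1-\frac{\delta t^3}{24}E_2 .
\]
I would then apply the triangle inequality on the state,
\[
  \|(\mathscr{U}_2-U_0)\ket{\psi}\|\le \frac{\delta t^3}{12}\|E_1\ket{\psi}\|+\frac{\delta t^3}{24}\|E_2\ket{\psi}\|+\|\mathscr{E}_{\mathrm{re}}\| ,
\]
which decouples the two leading contributions (the relative sign is immaterial once norms are taken). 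Applying \cref{Lemma:local} to $E_1=\sum_j E_{1,j}$ and $E_2=\sum_j E_{2,j}$ separately gives $\|E_i\ket{\psi}\|^2=\bra{\psi}E_i^{\dagger}E_i\ket{\psi}\le \|E_i\|_F^2+\Delta_{E_i}(\psi)$; taking square roots and reinserting the prefactors $\tfrac1{12}$ and $\tfrac1{24}$ reproduces the two square-root terms of \cref{le3eq}, with $\delta t^6/144$ and $\delta t^6/576$ multiplying both the Frobenius and the $\Delta_{E_i}$ contributions.

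For the remainder I would use $\|\mathscr{E}_{\mathrm{re}}\ket{\psi}\|\le\|\mathscr{E}_{\mathrm{re}}\|$ and bound $\|\mathscr{E}_{\mathrm{re}}\|$ by its worst case. The integral-remainder form of~\cite{childs2020theory}, specialized to the symmetric two-term formula, resolves this $\mathcal O(\delta t^4)$ contribution into the four nested-commutator terms $[A,E_1]=[A,[B,[B,A]]]$, $[B,E_1]=[B,[B,[B,A]]]$, $[B,E_2]=[B,[A,[A,B]]]$, and $[A,E_2]=[A,[A,[A,B]]]$, each appearing because at the next order the generator mismatch is the image of $E_1$ or $E_2$ under one further conjugation by $A$ or $B$. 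Assembling the three pieces by the triangle inequality then yields \cref{le3eq}.

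The main obstacle is the remainder step: pinning down the exact coefficients $\tfrac1{32},\tfrac1{12},\tfrac1{32},\tfrac1{48}$ of the four order-$\delta t^4$ commutators. In contrast to the leading part, which follows cleanly from the symmetric expansion and a single application of \cref{Lemma:local}, these prefactors require carefully bounding the time-ordered integral remainder and tracking how the two half-steps under $A$ and the single full step under $B$ weight each nested commutator. This is laborious bookkeeping rather than a conceptual difficulty; the genuinely new content, namely the state dependence, enters only through the two clean leading terms via \cref{Lemma:local}.
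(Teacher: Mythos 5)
Your proposal follows essentially the same route as the paper's proof: both start from the exact integral representation of \cite{childs2020theory} for the symmetric formula, split the error into the two order-$\delta t^3$ nested-commutator pieces plus an order-$\delta t^4$ conjugation remainder, apply \cref{Lemma:local} separately to $E_1$ and $E_2$ to get the two square-root terms with prefactors $\delta t^6/144$ and $\delta t^6/576$, and bound the remainder by the four commutators $[A,E_1],[B,E_1],[B,E_2],[A,E_2]$. One point of care: the paper does not Taylor-expand the leading term down to the bare BCH operator $\tfrac{\delta t^3}{12}E_1-\tfrac{\delta t^3}{24}E_2$; it keeps the $\tau_1$-dependent unitary prefactor attached to each commutator inside the triple integral and removes it only at the level of $\bra{\psi}\mathscr{E}_{2,i}^{\dagger}\mathscr{E}_{2,i}\ket{\psi}$ via the Cauchy--Schwarz lemma (\cref{Lemma:traceproduct}) --- if you instead strip that unitary and dump $(W(\tau_1)-\id)E_i$ into the remainder, a naive spectral-norm bound gives products of norms like $\|H\|\,\|E_i\|$ rather than nested commutators, so you must organize those corrections as Duhamel commutators $[H,E_i]$ to recover the stated four terms. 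The coefficients $\tfrac1{32},\tfrac1{12},\tfrac1{32},\tfrac1{48}$ then come from integrating $\tau_1$ and $\tau_1+\tau_3$ over the triple integral together with the factors of $\tfrac12$ from $A/2$, exactly the bookkeeping you defer.
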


Proofs appear in Section IV of the Supplemental Material.

\subsection*{Numerical details}
In the numerics, we consider the 1D quantum Ising spin system with mixed fields (QIMF) \cite{cotler2021emergent},
\begin{align}
  H= h_x \sum_{j=1}^N X_j + h_y \sum_{j=1}^N Y_j + J \sum_{j=1}^{N-1} X_jX_{j+1},
\end{align}
and take the initial state $\ket{0}^{\otimes n}$. For $h_x=0$, the Hamiltonian can be transformed to an integrable model by the Jordan-Wigner transformation, leading to non-thermalizing dynamics. On the other hand, if $h_x\neq 0$, the Hamiltonian is
expected to be consistent with the ETH prediction \cite{kim2014testing}. Here we consider the parameters $(h_x,h_y,J)= (0.8090, 0.9045, 1)$ (which have been explicitly shown to satisfy the ETH~\cite{kim2014testing}) and $(h_x,h_y,J)= (0, 0.9045, 1)$ as typical and atypical examples, respectively.
For the typical example, the local density matrix thermalizes to the maximally mixed state.
For the product formula decomposition, we take
\begin{align}
  A= h_x \sum_{j=1}^N X_j  + J \sum_{j=1}^{N-1} X_jX_{j+1},~B= h_y \sum_{j=1}^N Y_j.
\end{align}
The PF1 commutator is shown in \cref{eq:pf1commutator}.
The corresponding PF2 commutators are
\begin{equation}
\begin{aligned} \relax
  [A, [A, B]]&= 4h_x^2h_y    \sum_{j=1}^N Y_j+
  4 J^2 h_y \sum_{j=1}^{N-1} Y_j+
   4 J^2 h_y \sum_{j=2}^{N} Y_j+
  8 Jh_xh_y \sum_{j=1}^{N-1} (Y_jX_{j+1}+ X_jY_{j+1}) \\
  &\quad + 8J^2 h_y \sum_{j=1}^{N-2} (X_jY_{j+1}X_{j+2}),\\
  [B, [A, B]]&= -4 h_xh_y^2  \sum_{j=1}^N X_j  + 8Jh_y^2 \sum_{j=1}^{N-1} ( Z_jZ_{j+1}-X_jX_{j+1}).
\end{aligned} \label{eq:commutators}
\end{equation}

The empirical worst-case and average-case Trotter error curves in \cref{fig:thermalH}(a) and \cref{fig:trottersteps} are calculated using the upper bounds from Ref.~\cite{childs2020theory} by directly computing the norms
$\|\mathscr{U}_2(\delta t)-U_0(\delta t)\|$ and $\|\mathscr{U}_2(\delta t)-U_0(\delta t)\|_F$, respectively.
For concreteness, we compare the performance in various cases with $N=12$ qubits.
According to the previous worst-case error analysis, the simulation should use $r\approx 2.62\times 10^4$ Trotter steps for the typical case shown in \cref{fig:trottersteps}(a) and $r\approx 2.15\times 10^4$ steps for the atypical case shown in \cref{fig:trottersteps}(b). Notably, the typical Hamiltonian appears harder to simulate than the atypical one. However, when considering the entanglement of the input state (as estimated empirically), the typical case is actually easier to simulate than the atypical case.
In the typical case, the empirical number of Trotter steps $r\approx 1.19\times 10^4$ is very close to the average-case value $r\approx 1.17\times 10^4$, because the evolved state is highly entangled.
On the other hand, in the atypical case shown in \Cref{fig:trottersteps}(b),
the empirical Trotter number $r\approx 1.65\times 10^4$ is larger than the average-case Trotter number
$r\approx 1.24\times 10^4$.

For the theoretical worst- and average-case bounds in \cref{fig:trottersteps}, we use \cref{eq:commutators} without evaluating their norms numerically. Instead, we upper bound the norms by counting the number of Pauli operators, e.g., $\|[A,B]\|\le 2h_xh_yN+ 4h_y(N-1)$. The concrete theoretical bounds are provided in the Supplemental Material.

We choose $\delta t=10^{-3}$ and apply \cref{Lemma:AB,Lemma:ABPF2} for our entanglement-based theoretical bounds in \cref{fig:theory_error_comparison}(a) and (b). To upper bound both spectral and Frobenius norms of operators, we count Pauli strings. To estimate $\Delta_E$ in \cref{Le2eq} and \cref{le3eq}, we directly compute the trace distances between local density matrices and identity operators. We do not faithfully estimate $\Delta_E$ and calculate the Trotter error for each step in our distance-based bounds. Since the entanglement remains stable for some time, we use the estimated $\Delta_E$ for a few subsequent Trotter steps, reducing the running time, especially for large systems. Details are given in Section VII of the Supplemental Material.

There is a significant constant-factor gap between our entanglement-based theoretical error bounds and average-case error bounds, as illustrated in \cref{fig:theory_error_comparison}.
This discrepancy presumably arises, at least in part, because our numerical simulations only involve quantum systems with at most $N = 12$ qubits and our theory mainly deals with the leading-order Trotter error terms $E_j$. In principle, the theoretical analysis could be extended to include the effect of higher-order error terms.
However, such terms have larger weights and the analysis will work only when there is entanglement in larger subsystems.
Consequently, we anticipate that the gap between our entanglement-based theoretical error and the average-case error will tend to decrease when simulating larger systems.

For the adaptive protocol in \cref{fig:theory_error_comparison}(d), we apply the concrete error bound in \cref{le3eq} from \cref{Lemma:ABPF2}. According to \cref{Merror:pauli}, we can measure the leading part of the Trotter error directly in experiments.
Thus, we replace the terms $\|\left[B,\left[B, A\right]\right] \|_F^2 +\Delta_{E_1}(\phi_i)$ and $ \|\left[A,\left[A, B\right]\right] \|_F^2 +\Delta_{E_2}(\phi_i)$ with measured values of $\sum_{j, j'} \bra{\phi_i}E_{1,j}^{\dag}  E_{1, j'}\ket{\phi_i}$ and
$\sum_{j, j'} \bra{\phi_i}E_{2,j}^{\dag}  E_{2, j'}\ket{\phi_i}$.
We consider the Pauli decompositions of sub-leading contributions to \cref{le3eq} and use the 1-norms of their coefficients to upper bound their spectral norms.
The total numbers of Trotter steps for the adaptive approach with various checkpoints are obtained using this analysis.

\subsection*{Estimating Trotter error with shadow tomography}
Here we develop a measurement gadget that uses shadow tomography \cite{aaronson2019shadow,huang2020predicting} to estimate the entanglement entropy and the local RDMs. This technique can enable simulation that takes advantage of the error bounds in \cref{thm:main} even if the entanglement is not known in advance. The approximate state after $i$ Trotter steps is $\ket{\phi_i}=\mathscr{U}_{p}^i\ket{\psi(0)}$, where $\mathscr{U}_{p}$ is the unitary operation implemented by a single Trotter step. Using the bound $\tr |M|\le d_M\|M\|_F$, the distance-based error bound in \cref{eq:ErrD1} of \cref{thm:main} for the error after the $(i+1)$st segment can be further bounded by
\begin{equation}\label{eq:ErrD2}
\epsilon(i):=\|(U_0-\mathscr{U}_{p})\ket{\phi_i}\|= \mathcal O\left(\left[\max_{j}\|E_j\| \left(\sum_{j,j'} \sqrt{d_{\support(E_j^{\dag}E_{j'})}\tr(\rho_{j,j'}^2)-1}\right)^{1/2}+ \|E\|_F \right]\delta t^{p+1}\right).
\end{equation}
Similarly, the entanglement-based error bound in \cref{eq:ErrEnt} of \cref{thm:main} can also be related to the purity of RDMs, by the fact that the von Neumann entropy can be upper bounded by the R{\'e}nyi-2 entropy, $S_2(\rho_{j,j'})=-\log_2\tr(\rho_{j,j'}^2)$. We provide an explicit formula in Section VI of the Supplemental Material.
As a result, for both error bounds, we would like to estimate the purities $\tr(\rho_{j,j'}^2)$ for all possible RDMs whose supports are determined by the related terms in the commutator, namely $\support(E_j^{\dag}E_{j'})$.

In shadow estimation, one performs randomized measurements on $\rho=\ket{\phi_i}\bra{\phi_i}$ by applying (local) random unitary evolution $U=\bigotimes_{j=1}^N u_j$ and projective measurement in the computational basis. Here the local unitary is sampled independently and uniformly from the local Clifford group. This random measurement scheme is called Pauli measurement, since it is equivalent to measuring each qubit in a randomly selected Pauli $X$, $Y$, or $Z$ basis \cite{huang2020predicting}.
The measurement result is denoted by $\mathbf{b}=(b_1,b_2,\dots, b_N)$.
The shadow snapshot
\begin{equation}
    \begin{aligned}
\hat{\rho}=\bigotimes_{j=1}^N(3u_j^{\dag}\ket{b_j}\bra{b_j}u_j-\id_2),
\end{aligned}
\end{equation}
is an unbiased estimator, i.e., $\mathbb{E} \hat{\rho}=\rho$. One can estimate a linear observable $O$ as $\tr(\hat{\rho}O)$. The purity $\tr(\rho_S^2)$ of some RDM $\rho_S$ is nonlinear, but can be estimated as $\tr{(\hat{\rho}\otimes \hat{\rho}' \; \mathbb{S}_{S}\otimes \id_{[N]/S})}$, where $\hat{\rho}$ and $\hat{\rho}'$ are two distinct shadow snapshots,
and $\mathbb{S}_{S}$ is the swap operator on the 2-copy Hilbert space restricted to subsystem $S$. To reduce the estimation uncertainty, one repeats this process $M$ times to collect the shadow set $\{\hat{\rho}_i\}_{i=1}^M$. For a collection of $L$ observables $\{O_i\}$, the number of samples is $M=\mc{O}(\log(L/\delta)\max_i\|O_i\|_{\mathrm{shadow}}^2/\epsilon_s^2)$ \cite{huang2020predicting}, where the shadow norm is directly related to the variance in a single-shot measurement.

Here, our target observables are purities of many RDMs as shown in \cref{eq:ErrD2},
and the shadow norm of $O=\mathbb{S}_{S}\otimes \id_{N/S}$ satisfies $\|O\|_{\mathrm{shadow}}\leq 4^{2|S|}$ \cite{huang2022provably}.
In \cref{eq:ErrD2}, for the lattice Hamiltonian one has $\max_{j}\|E_j\|=\mathcal{O}(1)$, $\|E\|_F=\mathcal{O} (\sqrt{N})$, and the summation includes $\mathcal{O} (N^2)$ terms. Thus we should choose the shadow estimation error $\epsilon_s=\mathcal{O}(N^{-2})$ to make the estimation sufficiently accurate. The local systems comprise $|\support(E_j^{\dag}E_{j'})|=\mathcal{O}(1)$ qubits.
As a result, the total number of samples for an $N$-qubit system is $M=\mathcal{O}(N^4 \log N)$, as $\|O\|_{\mathrm{shadow}}=\mc{O}(1)$.

We can further reduce the cost of estimating the Trotter error by directly measuring certain local Pauli operators instead of purities. In particular, one can give a more refined error bound beyond the distance- or entanglement-based bound as follows (see Section VI of the Supplemental Material for details):
\begin{align}\label{Merror:pauli}
    \epsilon(i)=
      \mathcal O\left(\sqrt{\sum_{j, j'} \bra{\phi_i}E_j^{\dag}  E_{j'}\ket{\phi_i}} \, \delta t^{p+1}\right).
  \end{align}
As a result, we can directly estimate the error by measuring the terms in the square root, $\sum_w \bra{\phi_i}O_w\ket{\phi_i}$ with $O_w=E_j^{\dag}  E_{j'}+E_{j'}^{\dag} E_j$ for $j\neq j'$ and  $O_w=E_j^{\dag}  E_{j}$ for $j=j'$,
which are each linear combinations of a constant number of Pauli operators.
The shadow estimation error of $O=\sum_w O_w$ should be kept to $\epsilon_s=\mc O (N)$ to make the estimation accurate enough to be comparable with the average performance. Since this observable is a sum Pauli operators $P$ with $\|P\|_{\mathrm{shadow}}\leq 3^{\mathrm{supp}(P)}$, we find that $M=\mc{O}(N^2)$ samples suffice to achieve the desired error \cite{zhou2023performance,huang2020predicting}.

\bibliography{BibEntSim}

\newpage

\section*{Supplemental Material: Entanglement accelerates quantum simulation}


\section{Proof of Theorem 1 and Corollary 1}\label{SM:SecTh1}
In this work, we mainly focus on product formula (PF) algorithms \cite{suzuki1991general,sethuniversal}.
For a short evolution time $\delta t$, the first-order product formula (PF1) algorithm for a Hamiltonian $\sum_{l=1}^LH_l$ applies the unitary operation
\begin{align}
    \mathscr{U}_1(\delta t):=e^{-iH_1\delta t}e^{-iH_2\delta t}\cdots e^{-iH_L\delta t}=\rprod_l e^{-iH_l \delta t}.
\end{align}
Here the right arrow indicates the product is in the order of increasing indices (i.e, for $l=1,2,\ldots,L$). Second-order product formulas (PF$2$)
can be obtained by combining evolutions in both increasing and decreasing orders of indices, with
\begin{align}
  \mathscr{U}_2(\delta t):= \rprod_l e^{-iH_l\delta t} \lprod_l e^{-iH_l\delta t},
\end{align}
where the left arrow indicates the product in decreasing order (i.e., for $l=L,L-1,\ldots,1$).

More generally, Suzuki constructed $p$th-order product formulas (PF$p$ for even $p=2k$) recursively from the second-order formula as~\cite{suzuki1991general}
\begin{align}
    \mathscr{U}_{2k}(\delta t)&= [\mathscr{U}_{2k-2}(p_k \delta t)]^2 \mathscr{U}_{2k-2}((1-4p_k)\delta t) [\mathscr{U}_{2k-2}(p_k \delta t)]^2,
\end{align}
where $p_k:=\frac{1}{4-4^{1/(2k-1)}}$ and $k>1$ \cite{suzuki1991general}.
Overall, we have $S=2\cdot 5^{k-1}$ stages (operators of the form $\mathscr{U}_1=\rprod_l e^{-iH_l\delta t}$ or its reverse ordering $\lprod_l e^{-iH_l\delta t}$) and the evolution can be rewritten as
\begin{equation}\label{eq:pf2k}
  \mathscr{U}_{2k}(\delta t)=  \prod_{s=1}^{S} \prod_{l=1}^{L} e^{-i \delta t a_s H_{\pi_s(l)}},
\end{equation}
where each $\pi_s$ is the identity permutation or reversal permutation and $\delta t a_s$ is the simulation time for stage $s$.

To simulate the evolution of a quantum system for a long time, we divide the entire duration into many smaller time segments. Each segment represents a short-time evolution that can be simulated with small error. The total error can be upper bounded using the triangle inequality as the sum of the Trotter errors for each segment, which we also refer to as Trotter steps.

In this section, we primarily focus on short-time evolution $\delta t$ and establish a connection between the Trotter error in a single segment and the entanglement entropy of the quantum state $\ket{\psi}$ during the evolution. Considering the target quantum evolution $U_0 = e^{-iH\delta t}$,
a $p$th-order Trotter approximation $\mathscr{U}_{p}$, and a pure quantum state $\ket{\psi}$, according to \cref{Lemma:remainder}, the Trotter error bound can be expressed as follows:
\begin{align}
 \| ( U_0-\mathscr{U}_{p})\ket{\psi} \|\le \|\sum_j E_j\ket{\psi} \| \delta t^{p+1}+ \|\mathscr{E}_{\re}\|.
\end{align}
Here $E_j$ are the leading-order local error terms and $\mathscr{E}_{\re}$ is the sum of the higher-order remaining terms, with $\mathscr{E}_{\re}=\mathcal O(\delta t^{p+2})$.
For an operator $E$ of dimension $d$, let $\|E\|$ denote the spectral norm, and let $\|E\|_F=\sqrt{\frac{\tr (EE^{\dagger})}{d} }$ be the normalized Frobenius norm. Let $S$ denote the von Neumann entropy.

\begin{lemma}\label{Lemma:remainder}For the Hamiltonian $H=\sum_l H_l$ and an input pure state $ \ket{\psi}$, the additive Trotter error of the $p$th-order product formula can be represented as
\begin{equation}\label{}
\|(U_0(\delta_t)-\mathscr{U}_p(\delta_t)) \ket{\psi}\|=\| \sum_j E_j \ket{\psi}\| \delta t^{p+1}+\|\mathscr{E}_{\re}\|.
\end{equation}
The spectral norm of the higher-order remaining term is $\|\mathscr{E}_{\re}\|=\mathcal O( \alpha_{p+2} \delta t^{p+2})$, where
\begin{equation}\label{eq:Tp}
\alpha_{p+2} :=\sum_{l_1,\dots,l_{p+2}=1}^L
\left\|[H_{l_1},[H_{l_2},\dots,[H_{l_{p+1}},H_{l_{p+2}}]]] \right\|.
\end{equation}
\end{lemma}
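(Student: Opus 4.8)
The plan is to treat both $U_0(\delta t)=e^{-iH\delta t}$ and $\mathscr{U}_p(\delta t)$ as analytic operator-valued functions of $\delta t$ that both equal the identity at $\delta t=0$, and to extract the leading discrepancy from their Taylor expansions. By construction, the Suzuki formula \eqref{eq:pf2k} is a $p$th-order approximation, so the order conditions guarantee that the Taylor coefficients of $U_0-\mathscr{U}_p$ vanish for every power of $\delta t$ below $p+1$, and the first surviving term appears at order $\delta t^{p+1}$. Following the exact error representation developed in \cite{childs2020theory}, I would write $\mathscr{U}_p(\delta t)$ in the staged form \eqref{eq:pf2k} and apply a variation-of-parameters (Duhamel) identity to obtain an exact integral formula for $U_0-\mathscr{U}_p$, whose integrand is governed by an effective generator given by conjugations of the individual $H_l$ by partial products of the stage unitaries. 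Expanding this effective generator in $\delta t$ and integrating reproduces the power series of the error, which starts at order $\delta t^{p+1}$.

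First I would extract the leading term. The coefficient of $\delta t^{p+1}$ in this expansion is a fixed operator, namely a finite linear combination of nested commutators $[H_{l_1},[H_{l_2},\dots,[H_{l_p},H_{l_{p+1}}]]]$ of the $H_l$, with rational coefficients determined by the stage times $a_s$ in \eqref{eq:pf2k}. Grouping these nested commutators by their support yields the local decomposition $E=\sum_j E_j$, and I would retain this piece as an operator acting on the state, contributing $\|\sum_j E_j\ket{\psi}\|\,\delta t^{p+1}$; this is precisely the state-dependent quantity that \cref{Lemma:local} and \cref{thm:main} later refine using entanglement. Defining $\mathscr{E}_{\re}$ to be the operator remainder obtained by subtracting this leading term from $U_0-\mathscr{U}_p$, the triangle inequality gives $\|(U_0-\mathscr{U}_p)\ket{\psi}\|\le\|\sum_j E_j\ket{\psi}\|\,\delta t^{p+1}+\|\mathscr{E}_{\re}\|$, which is the claimed representation.

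It remains to bound $\|\mathscr{E}_{\re}\|$. Here I would use the integral form of the Taylor remainder: every contribution to $\mathscr{E}_{\re}$ appears in the Duhamel integral as a nested commutator of the $H_l$ built from at least $p+2$ factors, conjugated by unitaries assembled from partial products of the $e^{-i\delta t a_s H_l}$. Since unitary conjugation preserves the spectral norm, each such term is bounded by the spectral norm of the corresponding commutator, and summing over the indices $l_1,\dots,l_{p+2}$ at lowest order produces exactly the quantity $\alpha_{p+2}$ of \eqref{eq:Tp}, with all contributions from more than $p+2$ factors carrying additional powers of $\delta t$. Hence for sufficiently small $\delta t$ the remainder is dominated by the $(p+2)$-factor commutators, giving $\|\mathscr{E}_{\re}\|=\mathcal O(\alpha_{p+2}\,\delta t^{p+2})$. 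The \emph{main obstacle} is this last step: cleanly peeling off exactly the order-$(p+1)$ operator and then showing that the entire tail (orders $\delta t^{p+2}$ and above) is controlled by the single commutator sum $\alpha_{p+2}$ rather than by an infinite hierarchy $\alpha_{p+2},\alpha_{p+3},\dots$. This uniform control is the technical heart of the argument and is where I would lean most heavily on the commutator-scaling machinery of \cite{childs2020theory}.
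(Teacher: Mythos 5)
Your overall route is the same as the paper's: both start from the exact Duhamel/variation-of-parameters representation of $U_0-\mathscr{U}_p$ from \cite{childs2020theory}, identify the order-$\delta t^{p+1}$ part as a linear combination of $(p+1)$-fold nested commutators $N_{\vec{j}_{p+1}}$, use unitary invariance of the vector norm to strip off the conjugating evolutions and obtain the state-dependent factor $\|\sum_j E_j\ket{\psi}\|$, and bound what is left by the triangle inequality. The structural outline is correct.

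The gap is exactly the step you flag as the ``main obstacle,'' and it is worth seeing how the paper closes it, because the resolution is more elementary than an appeal to a hierarchy of commutator sums. In the representation of \cite{childs2020theory} (Theorem 5 there), the integrand at order $p+1$ is not $N_{\vec{j}_{p+1}}$ itself but the conjugated operator $F_{\vec{j}_{p+1}}^{\dagger}N_{\vec{j}_{p+1}}F_{\vec{j}_{p+1}}$, where $F_{\vec{j}_{p+1}}$ is a product of exponentials $e^{-iH_l\tau_1}$. The entire higher-order remainder $\mathscr{E}_{\re}$ is therefore the integral of $F^{\dagger}NF-N$, and this difference is controlled \emph{exactly}, not asymptotically, by the identity
\begin{equation}
e^{iA\tau_1}Be^{-iA\tau_1}-B=\int_0^{\tau_1}\d\tau_2\, e^{iA(\tau_1-\tau_2)}[A,B]e^{-iA(\tau_1-\tau_2)},
\end{equation}
applied once per factor of $F_{\vec{j}_{p+1}}$: since the integrand is a unitary conjugation of $[A,B]$, one gets $\|F^{\dagger}NF-N\|\le\sum_l\|[H_l,N_{\vec{j}_{p+1}}]\|\,\tau_1$ with no further expansion. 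Summing over $l$ and $\vec{j}_{p+1}$ adds precisely one commutator layer to the $(p+1)$-fold nested commutators, producing $\alpha_{p+2}$ and the extra power of $\delta t$ from the $\tau_1$ factor. There is no infinite tail $\alpha_{p+3},\alpha_{p+4},\dots$ to dominate, because the first-order variation-of-parameters remainder is already exact. If you rewrite your last paragraph around this device---keep the exact integral form and expand the conjugation only to first order with an exact integral remainder, rather than Taylor-expanding $U_0-\mathscr{U}_p$ as a power series---the proof is complete. One minor point: the representation you derive is an inequality via the triangle inequality, which is how the paper actually uses the lemma; the equality sign in the lemma statement should be read as an upper bound.
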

\begin{proof}
First define an order for pairs $(s,l)$ of stage indices $s \in \{1,\ldots,S\}$ and term indices $l \in \{1,\ldots,L\}$.
Let $(s,l)\prec (s',l')$ when $s<s'$ or $s=s',l<l'$. Furthermore, let $(s,l)\preceq (s',l')$ when $s<s'$ or $s=s',l\le l'$.
According to Theorem 3 in Ref.~\cite{childs2020theory}, the additive error $\mathscr{E} (\delta t)$
can be expressed as
\begin{equation}\label{app:aderror}
\mathscr{E}(\delta t):=U_0(\delta_t)-\mathscr{U}_p(\delta_t)=\int_0^{\delta t} \d\tau_1 \, e^{-i( \delta t-\tau_1)H} \mathscr{U}_p(\tau_1) \mathscr{N}(\tau_1),
\end{equation}
where $\mathscr{U}_p$ is the $p$th-order Trotter formula and
\begin{equation}
\begin{aligned}
 \mathscr{N}(\tau_1)&= \sum_{(s,l)}
 \rprod_{(s',l')\prec(s,l)}
e^{i\tau_1 a_s H_{\pi_{s'}(l')}} \left(a_s H_{\pi_s(l)}\right)       \lprod_{(s',l')\prec(s,l)}
e^{-i\tau_1 a_s H_{\pi_{s'}(l')}}  \\
&\quad-  \rprod_{(s',l')}
e^{i\tau_1 a_s H_{\pi_{s'}(l')}} H \lprod_{(s',l')}
e^{-i\tau_1 a_s H_{\pi_{s'}(l')}},
\end{aligned}
\end{equation}
where $\mathscr{N}(\tau_1)=\mathcal O(\tau^p_1)$.
Here we define the vector $\vec{j}_{p+1}=(j_1,j_2,\dots,j_{p+1})$ with $p+1$ entries, $j_1,j_2,\dots,j_{p+1} \in \{(s,l): s\in \{1,\dots,S\},l\in\{1,\dots,L\}\}$, and the corresponding nested commutators as
\begin{equation}\label{SM:Nnest}
 N_{\vec{j}_{p+1}}=[H_{j_1},[H_{j_2},\dots,[H_{j_p},H_{j_{p+1}}]]].
\end{equation}
According to Theorem 5 in Ref.~\cite{childs2020theory}, we can further write
\begin{align}
 \mathscr{N}(\tau_1)&=  \int_0^{\tau_1}\d\tau_2    \sum_{i=1,2} \sum_{\vec{j}_{p+1}\in \Gamma_i}   (\tau_1-\tau_2)^{q(\vec{j}_{p+1})-1}\tau_1^{p-q(\vec{j}_{p+1})} c_{\vec{j}_{p+1}} F_{\vec{j}_{p+1}}^{\dagger} N_{\vec{j}_{p+1}} F_{\vec{j}_{p+1}}, \label{eq:Nint}
 \end{align}
where we use the following definitions. Let
\begin{equation}
    \begin{aligned}
\Gamma_1&:=\{(j_1,j_2,\dots,j_{p+1}) : j_1\preceq j_2\preceq\dots \preceq j_{p+1}\},    \\
\Gamma_2&:=\{(j_1,j_2,\dots,j_{p+1}) : j_1\preceq j_2\dots \preceq j_{p}, \, j_{p+1}=(1,l_{p+1})\} .
    \end{aligned}
\end{equation}
The values $c_{\vec{j}_{p+1}}$ are real coefficients that are functions of $\vec{j}_{p+1}$ and $p$, satisfying $|c_{\vec{j}_{p+1}}|\le 1$. The function
$q(\vec{j}_{p+1})$
is the maximal number $q$ satisfying $j_1=j_2=\dots=j_q$.
The operators $F_{\vec{j}_{p+1}}$ are unitary, constructed as products of terms of the form $e^{-iH_l\tau_1}$.
Plugging \cref{eq:Nint} into \cref{app:aderror}, we have
 \begin{align}
 \mathscr{E}(\delta t)&=\int_0^{\delta t} \d\tau_1 \int_0^{\tau_1}\d\tau_2
 \sum_{i=1,2} \sum_{\vec{j}_{p+1}\in \Gamma_i} (\tau_1-\tau_2)^{q(\vec{j}_{p+1})-1}\tau_1^{p-q(\vec{j}_{p+1})} c_{\vec{j}_{p+1}}  e^{-i(\delta \delta t-\tau_1)H} \mathscr{U}_p(\tau_1)F_{\vec{j}_{p+1}}^{\dagger} N_{\vec{j}_{p+1}} F_{\vec{j}_{p+1}}.\label{Eint}
 \end{align}

Since $F_{\vec{j}_{p+1}}$ is a product of short-time evolutions, we can expand it as
\begin{align}
   F_{\vec{j}_{p+1}}^{\dagger} N_{\vec{j}_{p+1}} F_{\vec{j}_{p+1}}= N_{\vec{j}_{p+1}} + R_{\vec{j}_{p+1}}
\end{align}
where the nested commutator $N_{\vec{j}_{p+1}}$ is the $0$th-order term and $R_{\vec{j}_{p+1}}$ represents the remaining higher-order terms.
To realize this decomposition,
we repeatedly apply the formula
\begin{align}
    e^{iA\tau_1}B e^{-iA\tau_1}= B + \int_0^{\tau_1}  \d\tau_2 e^{iA(\tau_1-\tau_2)} [A, B] e^{-iA(\tau_1-\tau_2)},
\end{align}
satisfying
\begin{align}
    \| e^{iA\tau_1}B e^{-iA\tau_1}&- B\| \le \|[A,B]\|\tau_1,
\end{align}
which accounts for the effect of a short-time evolution generated by $A$ on $B$. For a product of short-time evolutions, each time we apply it to the innermost layer, giving
\begin{align}
 \| e^{iA_1\tau_1}\cdots e^{iA_{s-1}\tau_1}e^{iA_s\tau_1} B  e^{-iA_s\tau_1}\cdots e^{-iA_{s-1}\tau_1}e^{-iA_1\tau_1}- e^{iA_1\tau_1}\cdots e^{iA_{s-1}\tau_1} B e^{-iA_{s-1}\tau_1}e^{-iA_1\tau_1}\|\le \|[A_s,B]\|\tau_1.
\end{align}
Iterating, we find
\begin{align}
\left \| e^{iA_1\tau_1}\cdots e^{iA_{s-1}\tau_1}e^{iA_s\tau_1} B  e^{-iA_s\tau_1}\cdots e^{-iA_{s-1}\tau_1}e^{-iA_1\tau_1}- B \right\|\le \sum_s \|[A_s,B]\| \tau_1.
\end{align}
By replacing $B$ with $N_{\vec{j}_{p+1}} $, we thus can control the remaining term as $\|R_{\vec{j}_{p+1}}\|= \mathcal O(\sum_l \|[H_{l}, N_{\vec{j}_{p+1}}]\|\tau_1)$.

Substituting $F_{\vec{j}_{p+1}}^{\dagger} N_{\vec{j}_{p+1}} F_{\vec{j}_{p+1}}$ for $N_{\vec{j}_{p+1}}$ in $\mathscr{E}(\delta t)$ of \cref{Eint}, we define
\begin{equation}
\begin{aligned}
 \mathscr{E}^*(\delta t):=\int_0^{\delta t} \d\tau_1 \int_0^{\tau_1}\d\tau_2
 \sum_{i=1,2} \sum_{\vec{j}_{p+1}\in \Gamma_i} (\tau_1-\tau_2)^{q(\vec{j}_{p+1})-1}\tau_1^{p-q(\vec{j}_{p+1})} c_{\vec{j}_{p+1}}  e^{-i(\delta \delta t-\tau_1)H} \mathscr{U}_p(\tau_1)N_{\vec{j}_{p+1}}.
 \end{aligned}\label{Eq.additive*}
\end{equation}
The higher-order terms of the product-formula approximation are given by the difference of $\mathscr{E}$ and $\mathscr{E}^*$, which can be bounded as follows:
\begin{equation}
    \begin{aligned}
  \|   \mathscr{E}_{\re}(\delta t)\|&= \|   \mathscr{E}(\delta t)-  \mathscr{E}^*(\delta t)\|\\
  &=\left\|\int_0^{\delta t} \!\d\tau_1 \int_0^{\tau_1} \!\d\tau_2
 \sum_{i=1,2} \sum_{\vec{j}_{p+1}\in \Gamma_i} (\tau_1-\tau_2)^{q(\vec{j}_{p+1})-1}\tau_1^{p-q(\vec{j}_{p+1})} c_{\vec{j}_{p+1}}  e^{-i(\delta \delta t-\tau_1)H} \mathscr{U}_p(\tau_1)(F_{\vec{j}_{p+1}}^{\dagger} \! N_{\vec{j}_{p+1}} \! F_{\vec{j}_{p+1}} \!\! -N_{\vec{j}_{p+1}})\right\|\\
  &=\left\|\int_0^{\delta t} \d\tau_1 \int_0^{\tau_1}\d\tau_2
 \sum_{i=1,2} \sum_{\vec{j}_{p+1}\in \Gamma_i} (\tau_1-\tau_2)^{q(\vec{j}_{p+1})-1}\tau_1^{p-q(\vec{j}_{p+1})} c_{\vec{j}_{p+1}}  e^{-i(\delta \delta t-\tau_1)H} \mathscr{U}_p(\tau_1)R_{\vec{j}_{p+1}}\right\|\\\
 &\leq \int_0^{\delta t} \d\tau_1 \int_0^{\tau_1}\d\tau_2
 \sum_{i=1,2} \sum_{\vec{j}_{p+1}\in \Gamma_i} (\tau_1-\tau_2)^{q(\vec{j}_{p+1})-1}\tau_1^{p-q(\vec{j}_{p+1})} |c_{\vec{j}_{p+1}}| \left\|e^{-i(\delta \delta t-\tau_1)H} \mathscr{U}_p(\tau_1)R_{\vec{j}_{p+1}}\right\|\\
 &= \int_0^{\delta t} \d\tau_1 \int_0^{\tau_1}\d\tau_2
 \sum_{i=1,2} \sum_{\vec{j}_{p+1}\in \Gamma_i} (\tau_1-\tau_2)^{q(\vec{j}_{p+1})-1}\tau_1^{p-q(\vec{j}_{p+1})} |c_{\vec{j}_{p+1}}| \left\|R_{\vec{j}_{p+1}}\right\| \\
  &=\mathcal O\left(\sum_{l,\vec{j}_{p+1}} \|[H_{l} , N_{\vec{j}_{p+1}}]\| \delta t^{p+2}\right)\\
 &= \mathcal O\left(\sum_{l_1,\dots,l_{p+2}=1}^L
\left\|[H_{l_1},[H_{l_2},\dots[H_{l_{p+1}},H_{l_{p+2}}]]] \right\| \delta t^{p+2} \right).
\end{aligned}
\end{equation}
Here the first inequality uses the triangle inequality for the spectral norm and the fourth line uses the fact that $\|VA\|=\|A\|$ for any unitary $V$ and operator $A$.

Finally, we bound the leading term $\mathscr{E}^*(\delta t)$ defined in \cref{Eq.additive*} as follows:
\begin{equation}\label{SMleadingE}
    \begin{aligned}
\| \mathscr{E}^*(\delta t) \ket{\psi}\|&=
\left\|\int_0^{\delta t} \d\tau_1 \int_0^{\tau_1}\d\tau_2
 \sum_{i=1,2} \sum_{\vec{j}_{p+1}\in \Gamma_i} (\tau_1-\tau_2)^{q(\vec{j}_{p+1})-1}\tau_1^{p-q(\vec{j}_{p+1})} c_{\vec{j}_{p+1}}  e^{-i(\delta \delta t-\tau_1)H} \mathscr{U}_p(\tau_1)N_{\vec{j}_{p+1}} \ket{\psi}\right \|\\
& \leq \int_0^{\delta t} \d\tau_1 \left\| \int_0^{\tau_1}\d\tau_2 \sum_{i=1,2} \sum_{\vec{j}_{p+1}\in \Gamma_i}  (\tau_1-\tau_2)^{q(\vec{j}_{p+1})-1}\tau_1^{p-q(\vec{j}_{p+1})} c_{\vec{j}_{p+1}}
 N_{\vec{j}_{p+1}} \ket{\psi}\right \| \\
&=: \int_0^{\delta t} \d\tau_1 \left\|  \sum_{i=1,2} \sum_{\vec{j}_{p+1}\in \Gamma_i}  c'_{\vec{j}_{p+1}}\tau_1^{p}
 N_{\vec{j}_{p+1}} \ket{\psi}\right \|\\
 &= \left\|  \sum_{i=1,2} \sum_{\vec{j}_{p+1}\in \Gamma_i} \frac1{p+1} c'_{\vec{j}_{p+1}} \delta t^{p+1}  N_{\vec{j}_{p+1}} \ket{\psi}\right \|\\
 &= \left\|\sum_j E_j \ket{\psi}\right \|\delta t^{p+1}.
 \end{aligned}
\end{equation}
Here the second line uses the triangle inequality. In particular, for a given $\tau_1$, one can eliminate the unitary $e^{-i(\delta \delta t-\tau_1)H} \mathscr{U}_p(\tau_1)$, since $\|{VA\ket{\psi}}\|=\|{A\ket{\psi}}\|$ for unitary $V$.
For the third line, denote the result of integrating over $\tau_2$ as $F(\tau_1,\vec{j}_{p+1}):=\int_0^{\tau_1}\d\tau_2 (\tau_1-\tau_2)^{q(\vec{j}_{p+1})-1}\tau_1^{p-q(\vec{j}_{p+1})} c_{\vec{j}_{p+1}}$. It is straightforward to check that, for all $\vec{j}_{p+1}$, $F(\tau_1,\vec{j}_{p+1})=c'_{\vec{j}_{p+1}}\tau_1^{p+1}$ for some constant $c'_{\vec{j}_{p+1}}$. In the fourth line, we simply integrate $\tau_1^p$. In the last line, we decompose $\frac1{p} \sum c'_{\vec{j}_{p+1}} N_{\vec{j}_{p+1}} $ into local operators $E_j$, completing the proof.
\end{proof}

\begin{lemma}\label{SMLemma:local}
(Lemma 1 in Methods) Let $E=\sum_j E_j$ act on $N$ qubits, where $E_j$ acts nontrivially on the subsystem with $\support(E_j)$. Then
\begin{align}
  | \bra{\psi}E^{\dagger}E \ket{\psi}|\le \|E\|_F^2+    \sum_{j,j'} \|E_j^{\dag} E_j'\| ~\tr|\rho_{j,j'}- \mathbb{I}_{\support(E_j^{\dag}E_{j'})}/d_{\support(E_j^{\dag}E_{j'})} |,
\end{align}
where $\|E\|^2_F:= \tr(E^{\dagger}E) /d$ is the (square) of the normalized Frobenius norm, and $\rho_{j,j'}:=\tr_{[N]\setminus \support(E_j E_j')}(\ket{\psi}\bra{\psi})$ is the RDM of $\ket{\psi}\bra{\psi}$ on the subsystem of $\support(E_jE_j')$.
\end{lemma}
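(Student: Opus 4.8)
The plan is to expand the quadratic form as $\bra{\psi}E^{\dagger}E\ket{\psi}=\sum_{j,j'}\bra{\psi}E_j^{\dagger}E_{j'}\ket{\psi}$ and to bound each cross term by relocating it onto the reduced density matrix $\rho_{j,j'}$ of the low-dimensional subsystem $\support(E_j^{\dagger}E_{j'})$. First I would note that $E_j^{\dagger}E_{j'}$ acts as the identity outside $\support(E_j^{\dagger}E_{j'})$, so it can be written as $L_{j,j'}\otimes\mathbb{I}$ with $L_{j,j'}$ its nontrivial restriction. Tracing out the complementary qubits then gives the exact identity $\bra{\psi}E_j^{\dagger}E_{j'}\ket{\psi}=\tr(L_{j,j'}\rho_{j,j'})$, which confines the entire estimate to a subsystem of dimension $d_{\support(E_j^{\dagger}E_{j'})}$.

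The key algebraic step is to add and subtract the maximally mixed state, writing
\[
\tr(L_{j,j'}\rho_{j,j'})=\tr\bigl[L_{j,j'}\bigl(\rho_{j,j'}-\mathbb{I}_{\support(E_j^{\dagger}E_{j'})}/d_{\support(E_j^{\dagger}E_{j'})}\bigr)\bigr]+\tr(L_{j,j'})/d_{\support(E_j^{\dagger}E_{j'})}.
\]
I would bound the traceless (distance) part using the matrix H\"older inequality $|\tr(AB)|\le\|A\|\,\tr|B|$, which produces $\|L_{j,j'}\|\,\tr|\rho_{j,j'}-\mathbb{I}_{\support(E_j^{\dagger}E_{j'})}/d_{\support(E_j^{\dagger}E_{j'})}|$, and then invoke the elementary fact that the spectral norm is invariant under tensoring with the identity, so that $\|L_{j,j'}\|=\|E_j^{\dagger}E_{j'}\|$. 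For the constant part, I would normalize carefully: since $\tr(L_{j,j'})/d_{\support(E_j^{\dagger}E_{j'})}=\tr(E_j^{\dagger}E_{j'})/2^N$, summing over all pairs reassembles $\tr(E^{\dagger}E)/2^N=\|E\|_F^2$, which is exactly the first term of the claimed bound.

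Summing both contributions over $j,j'$ and applying the triangle inequality then yields the stated inequality. I expect the only delicate point to be the dimensional bookkeeping---tracking whether each normalization is by the full dimension $2^N$ or by the subsystem dimension $d_{\support(E_j^{\dagger}E_{j'})}$, and verifying that the constant terms assemble precisely into $\|E\|_F^2$ rather than an off-by-a-factor expression. The remaining ingredients (H\"older's inequality and the spectral-norm invariance under tensoring with the identity) are standard and introduce no difficulty.
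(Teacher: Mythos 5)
Your proposal is correct and follows essentially the same route as the paper's proof: restrict $E_j^{\dagger}E_{j'}$ to its nontrivial part $L_{j,j'}$ so that $\bra{\psi}E_j^{\dagger}E_{j'}\ket{\psi}=\tr(L_{j,j'}\rho_{j,j'})$, split off the maximally mixed state, bound the traceless piece by H\"older with $\|L_{j,j'}\|=\|E_j^{\dagger}E_{j'}\|$, and reassemble the constant pieces into $\tr(E^{\dagger}E)/2^N=\|E\|_F^2$. The dimensional bookkeeping you flag works out exactly as you anticipate (via $\tr(L_{j,j'})/d_{\support(E_j^{\dagger}E_{j'})}=\tr(E_j^{\dagger}E_{j'})/2^N$), so there is nothing to add.
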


\begin{proof}
The term $E_j^{\dag}E_{j'}$ in the expression for $E^{\dag}E$ only acts nontrivially on $\support(E_j^{\dag}E_{j'})$. We denote its nontrivial part by $L_{j,j'}:=\tr_{[N]\setminus \support(E_j E_j')} (E_j^{\dag}E_{j'})$. Since $2^{-N}\tr(E_j^{\dagger} E_{j'})=d_{\support(E_j^{\dag}E_{j'})}^{-1}\tr(L_{j,j'})$, we have
\begin{equation}
\begin{aligned}
\bra{\psi}E_j^{\dagger}E_{j'} \ket{\psi}&=\tr(L_{j,j'}\rho_{j,j'})=\tr[L_{j,j'}(\rho_{j,j'}- \mathbb{I}_{\support(E_j^{\dag}E_{j'})}/d_{\support(E_j^{\dag}E_{j'})})]+\tr(L_{j,j'})/d_{\support(E_j^{\dag}E_{j'})}\\
&= \tr[L_{j,j'}(\rho_{j,j'}- \mathbb{I}_{\support(E_j^{\dag}E_{j'})}/d_{\support(E_j^{\dag}E_{j'})})]+\tr(E_j^{\dagger}E_{j'})/2^N\\
&\leq \|L_{j,j'}\|\tr|\rho_{j,j'}- \mathbb{I}_{\support(E_j^{\dag}E_{j'})}/d_{\support(E_j^{\dag}E_{j'})}|+\tr(E_j^{\dagger}E_{j'})/2^N\\
&= \|E_j^{\dag}E_{j'}\| \tr|\rho_{j,j'}- \mathbb{I}_{\support(E_j^{\dag}E_{j'})}/d_{\support(E_j^{\dag}E_{j'})}|+\tr(E_j^{\dagger}E_{j'})/2^N,
\end{aligned}
\end{equation}
and the result follows by summing the indices $j,j'$.
\end{proof}

\begin{theorem}\label{SMthm:main}
(Theorem 1 in the main text) For a given pure quantum state $\ket{\psi}$ and perfect quantum evolution $U_0=e^{-iH\delta t}$ with $p$th-order Trotter approximation $\mathscr{U}_{p}$, the error in a Trotter step of duration $\delta t$ has the upper bound
\begin{equation}\label{SMeq:ErrD1}
\begin{aligned}
    \|(U_0-\mathscr{U}_{p})\ket{\psi}\|
    &=\mathcal O\left(\sqrt{\sum_{j,j'} \|E_j^{\dag}E_{j'}\| \tr|\rho_{j,j'}-\mathbb{I}/d_{\rho_{j,j'}}|}\,\delta t^{p+1}+ \|E\|_F \delta t^{p+1}  +\alpha_{p+2} \delta t^{p+2} \right).
\end{aligned}
\end{equation}
where $\alpha_{p+2}$ is defined in \cref{eq:Tp}.
We call this the \emph{distance-based error bound}.
One can further relate the Trotter error to the entanglement entropy of subsystems with the bound
\begin{equation}\label{SMeq:ErrEnt}
\|(U_0-\mathscr{U}_{p})\ket{\psi}\|=\mathcal O\left(\delta t^{p+1}\sqrt{\sum_{j,j'} \|E_j^{\dag}E_{j'}\|\sqrt{\log (d_{\rho_{j,j'}})-S(\rho_{j,j'})}}+ \delta t^{p+1} \|E\|_F+\alpha_{p+2} \delta t^{p+2}\right).
\end{equation}
We call this the \emph{entanglement-based error bound}.
\end{theorem}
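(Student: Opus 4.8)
The plan is to combine the two results already established in this section: \cref{Lemma:remainder}, which separates the leading-order Trotter term from a higher-order remainder, and \cref{SMLemma:local}, which bounds the leading term by a Frobenius-norm piece plus a sum of trace distances. First I would invoke \cref{Lemma:remainder} to write $\|(U_0-\mathscr{U}_p)\ket{\psi}\| = \|\sum_j E_j\ket{\psi}\|\,\delta t^{p+1} + \|\mathscr{E}_{\re}\|$ with $\|\mathscr{E}_{\re}\| = \mathcal{O}(\alpha_{p+2}\delta t^{p+2})$; this immediately accounts for the $\alpha_{p+2}\delta t^{p+2}$ term that appears in both target bounds, so the remaining task is to control the leading factor $\|\sum_j E_j\ket{\psi}\| = \sqrt{\bra{\psi}E^{\dagger}E\ket{\psi}}$, where $E=\sum_j E_j$.

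Next I would feed $\bra{\psi}E^{\dagger}E\ket{\psi}$ directly into \cref{SMLemma:local}, giving $\bra{\psi}E^{\dagger}E\ket{\psi} \le \|E\|_F^2 + \sum_{j,j'}\|E_j^{\dagger}E_{j'}\|\,\tr|\rho_{j,j'}-\mathbb{I}/d_{\rho_{j,j'}}|$. Taking square roots and applying subadditivity $\sqrt{a+b}\le\sqrt{a}+\sqrt{b}$ for $a,b\ge 0$ splits this into $\|E\|_F$ plus the square root of the distance sum; multiplying by $\delta t^{p+1}$ and reattaching the remainder yields the distance-based bound \cref{SMeq:ErrD1}. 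To pass to the entanglement-based bound \cref{SMeq:ErrEnt}, I would bound each trace distance by a relative entropy via the quantum Pinsker inequality, $\tr|\rho_{j,j'}-\mathbb{I}/d_{\rho_{j,j'}}| \le \sqrt{2\,S(\rho_{j,j'}\,\|\,\mathbb{I}/d_{\rho_{j,j'}})}$, and then use the identity $S(\rho\,\|\,\mathbb{I}/d) = \log d - S(\rho)$ (valid since $\tr\rho=1$) to rewrite the relative entropy as $\log d_{\rho_{j,j'}} - S(\rho_{j,j'})$. Substituting this into the distance sum and absorbing the constant $\sqrt{2}$ into the $\mathcal{O}(\cdot)$ produces \cref{SMeq:ErrEnt}.

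With both lemmas in hand, the combination is largely bookkeeping, so the genuinely delicate ingredient is upstream, inside \cref{Lemma:remainder}: controlling the higher-order remainder $\mathscr{E}_{\re}$ by nested commutators of the $H_l$ so that its norm is uniformly $\mathcal{O}(\alpha_{p+2}\delta t^{p+2})$. Within the present proof, the point to handle with care is the order of operations around the square root: since $\bra{\psi}E^{\dagger}E\ket{\psi}$ is bounded by a \emph{sum} of a Frobenius term and a distance term, one must take the square root of the sum and only then split via subadditivity. I would also check that the joint supports $\support(E_j^{\dagger}E_{j'})$ entering $\rho_{j,j'}$ and $d_{\rho_{j,j'}}$ are treated consistently across the two lemmas and the theorem statement, and that the nonnegativity required for both subadditivity and Pinsker holds, which it does because each $\tr|\cdot|$ and each $\log d_{\rho_{j,j'}} - S(\rho_{j,j'})$ is nonnegative.
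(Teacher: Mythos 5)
Your proposal is correct and follows essentially the same route as the paper's proof: invoke \cref{Lemma:remainder} to isolate the $\mathcal O(\alpha_{p+2}\delta t^{p+2})$ remainder, apply \cref{SMLemma:local} to $\bra{\psi}E^{\dagger}E\ket{\psi}$, split the square root via $\sqrt{x_1+x_2}\le\sqrt{x_1}+\sqrt{x_2}$, and pass to the entanglement-based bound via the Pinsker inequality together with $S(\rho\,\|\,\mathbb{I}/d)=\log d - S(\rho)$. The points you flag as delicate (order of square root versus subadditivity, nonnegativity, consistency of the supports) are exactly the ones the paper handles, so nothing is missing.
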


Note that compared to the statement of Theorem 1 in the main text, here we also include an explicit upper bound for the high-order terms $\mathscr{E}_{\re}$.

\begin{proof}
According to \cref{Lemma:remainder}, the higher-order terms can be bounded by $\|\mathscr{E}_{\re}\|=\mathcal O(\alpha_{p+2} \delta t^{p+2})$.
The leading term has the bound
\begin{equation}\label{EqSMTh1-1}
\begin{aligned}
    \|(U_0-\mathscr{U}_p)\ket{\psi}\|
    &\le \left \| \sum_{j} E_j \ket{\psi}\right \| \delta t^{p+1}  + \|\mathscr{E}_{\re}\|\\
    &=  \delta t^{p+1}   \sqrt{\sum_{j,j'} \bra{\psi}E_j^{\dagger} E_{j'}\ket{\psi} }  + \|\mathscr{E}_{\re}\|       \\
    &\le \delta t^{p+1}  \sqrt{\|E\|_F^2+    \sum_{j,j'} \|E_j^{\dagger} E_{j'}\| \,\tr|\rho_{j,j'}- \mathbb{I}/d_{\rho_{j,j'}} | }+  \|\mathscr{E}_{\re}\|       \\
    &\le \delta t^{p+1}  \sqrt{  \sum_{j,j'} \|E_j^{\dagger} E_{j'}\| \,\tr|\rho_{j,j'}- \mathbb{I}/d_{\rho_{j,j'}} | }+
   \delta t^{p+1}   \|E\|_F+  \|\mathscr{E}_{\re}\|.
   \end{aligned}
\end{equation}
Here the third line uses \cref{SMLemma:local} and the fourth line uses $\sqrt{x_1+x_2}\leq \sqrt{x_1}+\sqrt{x_2}$.
Since $\|\mathscr{E}_{\re}\|=\mathcal O(\delta t^{p+2})$, for a sufficiently small $\delta t$, we have
\begin{align}
    \|(U_0-\mathscr{U}_p)\ket{\psi}\|= \mathcal O(\delta t^{p+1}  \sqrt{  \sum_{j,j'} \|E_j^{\dagger} E_{j'}\| \, \tr|\rho_{j,j'}- \mathbb{I}/d_{\rho_{j,j'}} | }+
   \delta t^{p+1}   \|E\|_F+\alpha_{p+2} \delta t^{p+2}),
\end{align}
which gives the distance-based error bound. Moreover, the trace distance of $\rho_{j,j'}$ and $\mathbb{I}/d_{\rho_{j,j'}}$ can be bounded using the relative entropy as
\begin{align}
    \tr|\rho_{j,j'}- \mathbb{I}/d_{\rho_{j,j'}} | \le \sqrt{2S(\rho_{j,j'}\|\mathbb{I}/d_{\rho_{j,j'}}) }=
    \sqrt{2\log(d_{\rho_{j,j'}})-2S(\rho_{j,j'})},
\end{align}
which gives the entanglement-based bound.
\end{proof}

A pure state of $N$ parties is called $\Delta$-approximate $k$-uniform if all $k$-party reduced density matrices are close to maximally mixed, i.e., $\|\tr_{[N]\setminus[k]}(\ket{\psi}\bra{\psi})-\mathbb{I}/2^k\|_1\le \Delta$.
\begin{corollary}\label{SMcor:kuniform}
(Corollary 1 in the main text) For a $\Delta$-approximate $k$-uniform pure quantum state $\ket{\psi}$ with $\sqrt{\Delta}\le \|E\|_F/\sum_j\|E_j\|$ and $k\ge 2\max_j w(E_j)$, the Trotter error satisfies
\begin{equation}
\|(U_0-\mathscr{U}_p)\ket{\psi}\|= \mathcal O\left(\|E\|_F\delta t^{p+1}\right).
\end{equation}
\end{corollary}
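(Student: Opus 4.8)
The plan is to substitute the $\Delta$-approximate $k$-uniformity hypothesis into the distance-based error bound of \cref{SMthm:main} and show that its first term is dominated by its second (the higher-order remainder $\alpha_{p+2}\delta t^{p+2}$ being subleading as $\delta t\to 0$). That bound reads $\|(U_0-\mathscr{U}_{p})\ket{\psi}\|=\mathcal O(\delta t^{p+1}\sqrt{\sum_{j,j'} \|E_j^{\dag}E_{j'}\|\,\tr|\rho_{j,j'}-\mathbb{I}/d_{\rho_{j,j'}}|}+ \delta t^{p+1}\|E\|_F)$, so it suffices to prove $\sqrt{\sum_{j,j'} \|E_j^{\dag}E_{j'}\|\,\tr|\rho_{j,j'}-\mathbb{I}/d_{\rho_{j,j'}}|}=\mathcal O(\|E\|_F)$.

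First I would bound the trace distance of each RDM from the maximally mixed state. Because $E_j^{\dag}E_{j'}$ is supported on $\support(E_j)\cup\support(E_{j'})$, its weight obeys $w(E_j^{\dag}E_{j'})\le w(E_j)+w(E_{j'})\le 2\max_j w(E_j)\le k$, so $\rho_{j,j'}$ lives on some subsystem $S$ with $|S|\le k$. Since the $k$-uniformity hypothesis directly controls only exactly-$k$-party RDMs, to reach $S$ I would pick any $k$-qubit superset $S'\supseteq S$ and apply contractivity of the trace norm under the partial trace over $S'\setminus S$; together with $\tr_{S'\setminus S}(\mathbb{I}/2^k)=\mathbb{I}/2^{|S|}$ this gives $\tr|\rho_{j,j'}-\mathbb{I}/d_{\rho_{j,j'}}|\le\|\rho_{S'}-\mathbb{I}/2^k\|_1\le\Delta$ for every pair $(j,j')$.

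Next I would collect the terms. Using submultiplicativity $\|E_j^{\dag}E_{j'}\|\le\|E_j\|\,\|E_{j'}\|$ and the uniform bound $\Delta$ just obtained, the sum inside the square root is at most $\Delta\sum_{j,j'}\|E_j\|\,\|E_{j'}\|=\Delta(\sum_j\|E_j\|)^2$. Taking square roots yields $\sqrt{\Delta}\sum_j\|E_j\|$, and the hypothesis $\sqrt{\Delta}\le\|E\|_F/\sum_j\|E_j\|$ collapses this to $\|E\|_F$. Thus the first term of the distance-based bound is $\mathcal O(\delta t^{p+1}\|E\|_F)$, of the same order as the second, which proves the corollary.

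Since the algebra is immediate once \cref{SMthm:main} is available, I expect the only real subtlety to be the passage from $k$-uniformity (a statement about exactly-$k$-party RDMs) to subsystems of size $\le k$. The fact that resolves it is the data-processing (contractivity) inequality for the trace norm under the completely positive trace-preserving partial-trace map, combined with verifying that the weight condition $k\ge 2\max_j w(E_j)$ genuinely caps every $\support(E_j^{\dag}E_{j'})$ at $k$ qubits.
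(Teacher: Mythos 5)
Your proof is correct and follows essentially the same route as the paper's: substitute the distance-based bound of \cref{SMthm:main}, bound each trace distance by $\Delta$, apply submultiplicativity $\|E_j^{\dag}E_{j'}\|\le\|E_j\|\|E_{j'}\|$, and use $\sqrt{\Delta}\sum_j\|E_j\|\le\|E\|_F$. The only difference is that you explicitly justify the passage from exactly-$k$-party RDMs to the smaller subsystems $\support(E_j^{\dag}E_{j'})$ via contractivity of the trace norm under partial trace, a step the paper leaves implicit; this is a welcome clarification rather than a departure.
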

\begin{proof}
    According to the proof of \cref{SMthm:main}, we have
\begin{align}
\|(U_0-\mathscr{U}_p)\ket{\psi}\|=\mathcal O   \left(\delta t^{p+1}  \sqrt{  \sum_{j,j'} \|E_j^{\dagger} E_{j'}\| \,\tr|\rho_{j,j'}- \mathbb{I}/d_{\rho_{j,j'}} | }+
   \delta t^{p+1}   \|E\|_F\right).
\end{align}
    For a $\Delta$-approximate $k$-uniform state, the trace distance part is bounded by
\begin{align}
  \sqrt{\sum_{j,j'} \|E_j^{\dagger} E_{j'}\| \,\tr|\rho_{j,j'}- \mathbb{I}/d_{\rho_{j,j'}} |} \le \sqrt{\sum_{j,j'} \|E_j^{\dagger}\| \|E_{j'} \| \Delta}=\sum_{j} \|E_j\|\sqrt{\Delta} \le \|E\|_F.
\end{align}
   Thus,  we have $\|(U_0-\mathscr{U}_p)\ket{\psi}\|= \mathcal O\left(\|E\|_F\delta t^{p+1}\right).$
\end{proof}

Similarly, one can show that when the entanglement entropy of the subsystem $\support(E_j^{\dag} E_{j'})$ satisfies $S(\rho_{j,j'})\ge w(E_jE_{j'})-(\frac{\|E\|_F}{\sum_j\|E_j\|})^4$, the error scales like $\mathcal O(\delta t^{p+1}\|E\|_F)$, recovering the average-case Trotter error bound~\cite{Zhaorandom22}.

\section{Product states with worst-case error scaling}

\subsection{Proof of Theorem 2}

We begin with a simple observation about the signs of the coefficients of nested commutators for product formulas.

\begin{fact}
For a $p$th-order product formula, the leading terms of the Trotter error can be written in the form
\begin{equation}\label{SM:Epauli}
E = e^{i\theta}\left(\sum_j a_j P_j + \sum_k b_k Q_k\right),
\end{equation}
where $\theta \in \{0,\pi/2,\pi,3\pi/2\}$, $P_j$ and $Q_k$ are Pauli operators, and the coefficients $a_j,b_k$ are real numbers satisfying $a_j>0$ and $b_k<0$.
\end{fact}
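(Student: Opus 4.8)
The plan is to prove the (implicitly stronger) statement that the full leading-order error operator $E$ is Hermitian when $p$ is even and anti-Hermitian when $p$ is odd; in either case a single global phase $e^{i\theta}$ suffices to render all of its Pauli coefficients real, after which the claimed form follows by collecting the positive-coefficient Paulis into $\sum_j a_j P_j$ and the negative-coefficient ones into $\sum_k b_k Q_k$. The starting point is \cref{Lemma:remainder}, which expresses the leading term as a \emph{real} linear combination of nested commutators,
\begin{equation}
E = \frac{1}{p+1}\sum_{\vec{j}_{p+1}} c'_{\vec{j}_{p+1}}\, N_{\vec{j}_{p+1}},
\qquad
N_{\vec{j}_{p+1}} = [H_{j_1},[H_{j_2},\dots,[H_{j_p},H_{j_{p+1}}]]],
\end{equation}
where each $c'_{\vec{j}_{p+1}}\in\mathbb{R}$ (they arise from integrating real kernels against the real coefficients $c_{\vec{j}_{p+1}}$), every summand involves exactly $p+1$ factors $H_l$, and each $H_l$ is Hermitian with a real Pauli decomposition.

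The key step is a Hermiticity-parity lemma for nested commutators, proved by induction on nesting depth. If an operator $M$ satisfies $M^\dagger = sM$ for a sign $s\in\{+1,-1\}$ and $A$ is Hermitian, then
\begin{equation}
[A,M]^\dagger = M^\dagger A^\dagger - A^\dagger M^\dagger = sMA - sAM = -s\,[A,M],
\end{equation}
so each additional commutator with a Hermitian operator flips the sign $s$. The innermost object $H_{j_{p+1}}$ is Hermitian ($s=+1$), and $N_{\vec{j}_{p+1}}$ applies $p$ commutators, whence $N_{\vec{j}_{p+1}}^\dagger = (-1)^p N_{\vec{j}_{p+1}}$. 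Because the $c'_{\vec{j}_{p+1}}$ are real and every term shares the same depth $p+1$, this parity is uniform across the sum, giving $E^\dagger = (-1)^p E$.

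To finish, I would translate the (anti-)Hermiticity of $E$ into a statement about its Pauli coefficients. Writing $E=\sum_P c_P P$ with $c_P = d^{-1}\tr(PE)$ and using that Pauli operators are Hermitian, one computes $c_P^* = d^{-1}\tr(E^\dagger P) = (-1)^p c_P$; hence the $c_P$ are all real for even $p$ and all purely imaginary for odd $p$. Factoring out $e^{i\theta}$ with $\theta=0$ (even $p$) or $\theta=\pi/2$ (odd $p$) leaves an operator with real Pauli coefficients, and splitting these by sign yields \cref{SM:Epauli}; the remaining admissible phases $\pi$ and $3\pi/2$ are accounted for by the freedom to negate the overall operator.

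The argument is short, so the only genuine subtlety is bookkeeping. I must confirm, by inspecting the derivation in \cref{Lemma:remainder}, that the expansion coefficients of $E$ are genuinely real rather than complex, and that the local regrouping $E=\sum_j E_j$ does not reintroduce complex coefficients (it cannot, since it merely partitions a fixed Pauli expansion by support). The main place to be careful is ensuring that \emph{every} nested commutator contributing to $E$ has exactly $p+1$ Hermitian factors, so that the parity $(-1)^p$ is common to all of them and can be pulled outside the sum; this is what guarantees $E$ has a single definite Hermiticity type and thus a single global phase.
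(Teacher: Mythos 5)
Your proof is correct and establishes the Fact, but by a different mechanism than the paper. The paper's proof is also an induction on nesting depth, but it works directly at the level of Pauli decompositions: assuming $N_{p'}=e^{i\theta}\sum_W\alpha_W W$ with real $\alpha_W$, it notes that any surviving commutator $[W',W]$ of two Pauli operators comes from an anticommuting pair and hence equals $\pm 2i\,W''$ for a new Pauli $W''$, so each additional layer multiplies the common phase by $i$ while keeping all coefficients real. Your Hermiticity-parity lemma, $[A,M]^{\dagger}=-s[A,M]$ for Hermitian $A$ and $M^{\dagger}=sM$, encodes exactly the same fact---a nested commutator of $p+1$ Hermitian operators is $i^{p}$ times a Hermitian operator---but sidesteps the Pauli-by-Pauli case analysis, converting the resulting (anti-)Hermiticity of $E$ into reality or pure imaginarity of all Pauli coefficients at once via $c_P=d^{-1}\tr(PE)$. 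Both arguments rest on the same two facts you correctly flag as the only subtle points: the coefficients $c'_{\vec{j}_{p+1}}$ produced in \cref{Lemma:remainder} are real (they come from integrating real kernels against the real $c_{\vec{j}_{p+1}}$, and the Trotter stage weights $a_s$ are real), and every contributing nested commutator has the same depth $p+1$, so the phase (equivalently, the Hermiticity type) is uniform across the sum and can be pulled out globally. Your route is arguably cleaner and makes the role of the uniform depth more transparent; the paper's route has the minor practical advantage of exhibiting the product Paulis $W''$ explicitly, a viewpoint it reuses in the proof of Theorem 2 when tracking which cross terms $P_jP_{j'}$ survive in $\tr(E^{\dagger}E\ket{\psi}\bra{\psi})$.
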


\begin{proof}
As shown in \cref{SMleadingE} in the proof of \cref{Lemma:remainder}, the error term $E$ is a linear combination of the $(p+1)$st-order nested commutators defined in \cref{SM:Nnest} (ignoring the global phase $i^{p+1}$). We prove by induction that each commutator $[H_{l_1},[H_{l_2},\dots[H_{l_{p}},H_{l_{p+1}}]]]$ is in the form of \cref{SM:Epauli}. Suppose that the $p'$th commutator $N_{p'}=e^{i\theta}\sum_W \alpha_W$ is in this form (where $\alpha_W$, the coefficient of the Pauli operator $W$, can be either positive or negative), and we add one more layer with $H_{l}=\sum_{W'}\beta_{W'}W'$, where $\beta_{W'}$ is the coefficient of the Pauli operator $W'$. Then the new commutator is
\begin{equation}
\begin{aligned}
 N_{p'+1}=[H_{l}, N_{p'}]=e^{i\theta} \sum_W \sum_{W'} \alpha_W\beta_{W'} [W',W]=e^{i\theta} \sum_W \sum_{W'} \alpha_W\beta_{W'} (\pm 2iW'').
 \end{aligned}
\end{equation}
For every pair of Pauli operators $W,W'$ in the sum with $[W',W]\neq 0$, we have $\{W',W\}=0$, so $[W',W]=\pm 2iW''$ for some new Pauli operator $W''$. This introduces a phase $\pm i$ for each term,
so we still extract an overall phase $e^{i\theta}$ with $\theta=\{0,\pi/2,\pi,3\pi/2\}$ for $N_{p'+1}$. It is easy to see that $N_{p=1}=H_l$ satisfies \cref{SM:Epauli}, and thus the claim follows by induction.
\end{proof}

Now we begin the proof of Theorem 2 from main text. Note that the global phase $e^{i\theta}$ cancels out in $E^{\dag}E$ for the error analysis, so we neglect this phase in the following discussion.

\begin{theorem}\label{SMthm:worstcase}
(Theorem 2 in the main text) Consider a Trotter approximation of an $N$-qubit lattice Hamiltonian $H$. Let the leading term of the error have Pauli decomposition $E =\sum_j a_j P_j + \sum_k b_k Q_k$  with $a_j>0$ and $b_k<0$ (ignoring a global phase). If
$\sum_j a_j=\Theta(N)$ and $\sum_k |b_k|=o(N)$, there exists a product state $\ket{\psi}$ that achieves the worst-case error scaling,
\begin{equation}
 \|(U_0-\mathscr{U}_p)\ket{\psi}  \|=\Theta(N\delta t^{p+1}).
\end{equation}
\end{theorem}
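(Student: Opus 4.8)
The plan is to treat the two directions separately. The upper bound $\|(U_0-\mathscr{U}_p)\ket{\psi}\|=\mathcal O(N\delta t^{p+1})$ holds for \emph{every} state, since for a lattice Hamiltonian the worst-case spectral-norm analysis of Ref.~\cite{childs2020theory} already gives $\|U_0-\mathscr{U}_p\|=\mathcal O(N\delta t^{p+1})$ and $\|(U_0-\mathscr{U}_p)\ket{\psi}\|\le\|U_0-\mathscr{U}_p\|$. Hence the real content is the matching lower bound, and the task reduces to exhibiting a \emph{product} state $\ket{\psi}$ with $\|(U_0-\mathscr{U}_p)\ket{\psi}\|=\Omega(N\delta t^{p+1})$.

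First I would reduce this to a single expectation value. Write $\tilde E:=\sum_j a_jP_j+\sum_k b_kQ_k$ for the Hermitian operator obtained after stripping the global phase $e^{i\theta}$, so that $E^\dagger E=\tilde E^2$ and $\|E\ket{\psi}\|=\|\tilde E\ket{\psi}\|$. Then \cref{Lemma:remainder} and the reverse triangle inequality give $\|(U_0-\mathscr{U}_p)\ket{\psi}\|\ge\|\tilde E\ket{\psi}\|\,\delta t^{p+1}-\|\mathscr{E}_{\re}\|$. Since $\bra{\psi}\tilde E^2\ket{\psi}\ge(\bra{\psi}\tilde E\ket{\psi})^2$ for any normalized $\ket{\psi}$, it suffices to find a product state with $|\bra{\psi}\tilde E\ket{\psi}|=\Omega(N)$; this yields $\|\tilde E\ket{\psi}\|=\Omega(N)$, and because $\|\mathscr{E}_{\re}\|=\mathcal O(N\delta t^{p+2})$ is subleading in $\delta t$, the lower bound $\Omega(N\delta t^{p+1})$ follows for sufficiently small $\delta t$.

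The core step is the construction, and the main obstacle is that one cannot simply take $\ket{\psi}$ to be a simultaneous $+1$ eigenstate of all the $P_j$: distinct positive-coefficient Paulis may demand incompatible single-qubit bases on a shared qubit. I would resolve this by the probabilistic method over stabilizer product states. Assign each qubit $q$ an independent uniformly random Pauli basis $\sigma_q\in\{X,Y,Z\}$ and place it in the corresponding $+1$ eigenstate. For a Pauli string $P_j$ with required single-qubit Paulis $\sigma^{(j)}_q$ on its support, the factor $\bra{\psi}\sigma^{(j)}_q\ket{\psi}_q$ equals $1$ when the random basis matches (probability $1/3$) and $0$ otherwise, so it is always nonnegative with expectation $1/3$; by independence, $\mathbb E\,\bra{\psi}P_j\ket{\psi}=(1/3)^{w(P_j)}$. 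Since the Hamiltonian is local, every $P_j$ has constant weight $w(P_j)\le w_{\max}=\mathcal O(1)$, so $\mathbb E\,\bra{\psi}\sum_j a_jP_j\ket{\psi}=\sum_j a_j(1/3)^{w(P_j)}\ge(1/3)^{w_{\max}}\sum_j a_j=\Omega(N)$, using $a_j>0$ and $\sum_j a_j=\Theta(N)$. The crucial feature is that every factor is nonnegative and every $a_j$ positive, so there is no cancellation.

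Finally I would bound the negative part and invoke existence. For any assignment each $\bra{\psi}Q_k\ket{\psi}\in[0,1]$, and since $b_k<0$ this gives $\bra{\psi}\sum_k b_kQ_k\ket{\psi}\ge\sum_k b_k=-\sum_k|b_k|=-o(N)$ deterministically. Therefore some basis assignment attains $\bra{\psi}\sum_j a_jP_j\ket{\psi}\ge\Omega(N)$ (its expectation), and for the resulting product state $\bra{\psi}\tilde E\ket{\psi}\ge\Omega(N)-o(N)=\Omega(N)$. Combined with the reduction above this yields $\|(U_0-\mathscr{U}_p)\ket{\psi}\|=\Omega(N\delta t^{p+1})$, which with the worst-case upper bound proves $\Theta(N\delta t^{p+1})$. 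The QIMF example in the Methods section is the degenerate case $\sum_k|b_k|=0$, where one may simply take $\ket{\psi}=\ket{0}^{\otimes N}$.
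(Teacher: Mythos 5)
Your proof is correct, but it reaches the lower bound by a genuinely different route than the paper. The paper fixes a maximal subfamily $\mathrm{Stab}$ of the positive-coefficient Paulis with pairwise disjoint supports, takes the explicit product state stabilized by all of them, and then directly expands the second moment $\bra{\psi}E^{\dagger}E\ket{\psi}$, arguing term by term that the $\mathrm{Stab}$--$\mathrm{Stab}$ block contributes $\Theta(N^2)$ while all cross terms (overlapping commuting pairs, non-$\mathrm{Stab}$ pairs, and positive--negative pairs) reduce the total by at most $\mathcal O(N)+o(N^2)$. You instead randomize the single-qubit stabilizer basis, observe that every factor $\bra{\psi_q}\sigma^{(j)}_q\ket{\psi_q}$ lies in $\{0,1\}$ so that $\mathbb{E}\bra{\psi}P_j\ket{\psi}=(1/3)^{w(P_j)}\geq(1/3)^{w_{\max}}$ with no cancellation, bound the negative part deterministically by $-\sum_k|b_k|=-o(N)$, and pass from the first moment to the error norm via $\bra{\psi}\tilde E^2\ket{\psi}\ge(\bra{\psi}\tilde E\ket{\psi})^2$. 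This buys you two things: you avoid the paper's somewhat delicate cross-term bookkeeping entirely, and you do not need the implicit assumption (present in the paper's step $\sum_{j,j'\in\mathrm{Stab}}a_ja_{j'}=\Theta(N^2)$) that a constant fraction of the weight $\sum_j a_j$ sits on a disjointly supported subfamily --- your bound uses the full sum $\sum_j a_j=\Theta(N)$ directly. What you give up is an explicit state (though the probabilistic argument is trivially derandomized by conditional expectations over the $3^N$ basis choices, and in concrete cases like the QIMF model it collapses to $\ket{0}^{\otimes N}$, as you note). Both proofs share the same slightly informal step of extracting the leading term via \cref{Lemma:remainder} and the reverse triangle inequality, so you are not losing rigor relative to the paper there.
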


\begin{proof}
We choose a maximal index set, denoted by $\mathrm{\mathrm{Stab}}$, such that for any $j, j'\in \mathrm{Stab}$,  $\support (P_j) \cap \support (P_{j'}) =\emptyset$. Because the weight $w(P_j)$ is constant,
$|\mathrm{Stab}|= \Theta (N)$.

Consider each Pauli string $P_j=\sigma_{j_1} \otimes \cdots \otimes \sigma_{j_{w(P_j)}}$ in the decomposition of $E$.
Then we construct the state with a density matrix
    \begin{align}
       \ket{\psi} \bra{\psi}=
       \bigotimes_{j\in \mathrm{Stab}} \bigotimes_{k\in \{1,2,\dots,w(P_j)\}}
        \frac{\mathbb{I}_{2}+ \sigma_{j_k}}{2} \bigotimes_{k'\in [N]\setminus \cup\support (P_j)}  \frac{\mathbb{I}_{2}+ Z_{k'}}{2}.
    \end{align}
This is indeed a product state, and for any $j\in \mathrm{Stab}$, $ P_j\ket{\psi}=\ket{\psi}$. Note that the union of the $\support (P_j)$ for $j\in \mathrm{Stab}$ is a proper subset of the whole system $[N]$, and we take $(\mathbb{I}_{2}+ Z)/2$
for the remaining qubits. In addition,
for any Pauli string $P$, $\tr (P \ket{\psi} \bra{\psi})\in\{0,1\}$, which can be seen as follows. The state $\ket{\psi} \bra{\psi}$ is a linear combination of $2^n-1$ distinct non-identity Pauli operators, and $\tr (P \ket{\psi} \bra{\psi})=1$ if $P$ is one of those Paulis. For any Pauli operator $P$ that does not appear in the sum, $\tr (P \ket{\psi} \bra{\psi})=1$ since $\tr (PQ)=0$ for distinct Paulis $P,Q$. In this way, we write down the leading term of the Trotter error as
\begin{equation}\label{SMeqEE}
\begin{aligned}
\tr (E^{\dagger}E \ket{\psi} \bra{\psi})&=
   \sum_{j,j'} a_j a_{j'}\tr (P_j P_{j'} \ket{\psi} \bra{\psi})  +
   \sum_{k,k'} b_k b_{k'}\tr (Q_k Q_{k'} \ket{\psi} \bra{\psi})\\
  &\quad+ \sum_{j,k'} a_j b_{k'}\tr (P_j Q_{k} \ket{\psi} \bra{\psi})
   + \sum_{k,j'} b_k a_{j'}\tr (Q_k P_{j'}  \ket{\psi} \bra{\psi}).
\end{aligned}
\end{equation}
Here in the first summation, if both $j,j'\in \mathrm{Stab}$, we have $\tr (P_j P_{j'} \ket{\psi} \bra{\psi})=1$ by construction. Therefore $\sum_{j,j' \in \mathrm{Stab}} a_j a_{j'} \tr(P_j P_{j'} \ket{\psi}\bra{\psi}) = \left( \sum_{j\in \mathrm{Stab}} a_j\right)^2=\Theta (|\mathrm{Stab}|^2)=\Theta (N^2)$.

It remains to consider the contributions of the other cases.
For $j\in \mathrm{Stab}$ and $j'\notin \mathrm{Stab}$, we have
\begin{equation}\label{SM:crossStab}
\begin{aligned}
\sum_{j\in \mathrm{Stab},j'\notin \mathrm{Stab}} 2a_j a_{j'}\tr ((P_j P_{j'} +P_{j'} P_{j})\ket{\psi} \bra{\psi}),
\end{aligned}
\end{equation}
whose individual terms are zero unless $[P_j, P_{j'}]=0$.
In that case, we obtain another Pauli operator $P'=P_jP_{j'}$ with a possible minus sign.
A minus occurs only if $\support (P_j) \cap \support (P_{j'}) \neq \emptyset$. Otherwise, the contribution is positive or zero, depending on whether $P'$ stabilizes $\ket{\psi}\bra{\psi}$ or not. As $\support (P_j)$ is a constant, the number of overlapping terms $P_{j'}$ is $\mc O(1)$. As a result, the summation of \cref{SM:crossStab} can reduce the whole summation in \cref{SMeqEE} by only $\mc O(N)$. Similar arguments handle both the case $j,j'\notin \mathrm{Stab}$, and also the second summation in \cref{SMeqEE}, which both give at most a $\mc O(N)$ reduction. For the last two summations in \cref{SMeqEE}, since $a_jb_{k'}$ and $b_{k}a_{j'}$ are negative, and the terms in such summations are $o(N^2)$ by the assumption that $\sum_j a_j=\Theta(N)$ and $\sum_k |b_k|=o(N)$, the reduction is $o(N^2)$.
Overall, this shows that $\tr (E^{\dagger}E \ket{\psi} \bra{\psi})=\Theta (N^2)$.

Using the triangle inequality, we can lower bound the overall error as
\begin{equation}
\begin{aligned}
    \|(U_0-\mathscr{U}_p)\ket{\psi}\|&\ge \sqrt{\tr (E^{\dagger} E\ket{\psi} \bra{\psi}) } -\|\mathscr{E}_{\re}\| \\
    &=  \Theta (N) \delta t^{p+1} -\|\mathscr{E}_{\re}\|
    =\Omega(N) \delta t^{p+1}.
\end{aligned}
\end{equation}
The last line follows because $\|\mathscr{E}_{\re}\|=\mathcal{O} (N\delta t^{p+2})$ according to \cref{Lemma:remainder}.
Combining this with the worst-case upper bound $\|(U_0-\mathscr{U}_p)\ket{\psi}\|=\mathcal O(N\delta t^{p+1})$,
we conclude that $\|(U_0-\mathscr{U}_p)\ket{\psi}\|=\Theta(N\delta t^{p+1})$.
\end{proof}

\subsection{Ising models}
The condition of this theorem is not particularly special and is satisfied by some well-studied models.
For instance, for the QIMF model that we consider in the main text and the PF1 method, we have
\begin{equation}
\begin{aligned}
 E= [iA, iB]&= -i \left[2h_xh_y  \sum_{j=1}^N Z_j  +2 Jh_y \sum_{j=1}^{N-1} (Z_jX_{j+1}+ X_jZ_{j+1})\right].\\
\end{aligned}
\end{equation}
After ignoring the global phase $-i$, we have $\sum_j a_j=\Theta (N)$ and  $\sum_k |b_k|=0$, since there is no term with a negative coefficient. Thus we can choose $P_j=Z_j$ for $j=[N]$ as the $\mathrm{Stab}$ set, and
choose $\ket{\psi}=\ket{0}^{\otimes N}$.

Furthermore, we can find product states that achieve the worst-case error even in some cases that do not satisfy the conditions of \cref{SMthm:worstcase}. For example, consider the PF2 approximation for the QIMF model. By the third line of \cref{Eq:PF2additive} and also \cref{SMeq:commutators}, we have
\begin{equation}
\begin{aligned}
E&=\left[i\frac{A}{2},\left[i\frac{A}{2},iB\right]\right]+\left[-iB,\left[-iB,-i\frac{A}{2}\right]\right]
  \\
&=-i\Bigg[ h_x^2h_y    \sum_{j=1}^N Y_j+
  J^2 h_y \sum_{j=1}^{N-1} Y_j+
   J^2 h_y \sum_{j=2}^{N} Y_j+
  2Jh_xh_y \sum_{j=1}^{N-1} (Y_jX_{j+1}+ X_jY_{j+1}) \\
  &\qquad\quad+ 2J^2 h_y \sum_{j=1}^{N-2} (X_jY_{j+1}X_{j+2})
  -2 h_xh_y^2  \sum_{j=1}^N X_j  +4Jh_y^2 \sum_{j=1}^{N-1} ( Z_jZ_{j+1}-X_jX_{j+1})\Bigg].
\end{aligned}
\end{equation}
Ignoring the global phase $-i$, the summation
\begin{equation}
\begin{aligned}
-2 h_xh_y^2  \sum_{j=1}^N X_j -4Jh_y^2 \sum_{j=1}^{N-1}X_jX_{j+1}
\end{aligned}
\end{equation}
has a negative sign, and in total the summation has $\mc O (N)$ terms. Thus the assumptions of \cref{SMthm:worstcase} do not hold. However, there is still a product state $\ket{\psi}=\ket{0}^{\otimes N}$ that achieves the worst-case error. The term in the third line $\sum_{j=1}^{N-1} Z_jZ_{j+1}$ gives a positive contribution of order $\Theta (N^2)$ in $E^\dag E$.
For the other terms, according to the proof of \cref{SMthm:worstcase}, now we only need to guarantee that the cross terms (between the terms with positive and negative coefficients) can reduce the error by at most $o(N^2)$. Indeed, such terms  always generate Pauli operators $P'$ that have Pauli $X$ or $Y$ operators on some qubits, so that $\tr (P'\ket{\psi} \bra{\psi})=0$.

\subsection{Heisenberg models}
We can also consider the one-dimensional Heisenberg model with a random magnetic field $h_{j}\in [-1,1]$ at each site $j \in \{1,\ldots,N\}$. The Hamiltonian shows
\begin{equation}\label{Eq:heisenberg}
    H = \sum^{N-1}_{j=1} \left(X_j X_{j+1} + Y_jY_{j+1} + Z_j Z_{j+1}\right)+ \sum^{N}_{j=1}h_{j}Z_{j}.
\end{equation}
For convenience, suppose $N$ is even.
The summands of this Hamiltonian can be partitioned into two groups in an even-odd pattern \cite{Childs2019Product}, giving $H=A+B$ with
\begin{equation}
    \begin{aligned}
    A &= \sum^{\frac{N}{2}}_{j=1} \left(X_{2j-1} X_{2j} + Y_{2j-1} Y_{2j} + Z_{2j-1}  Z_{2j}+ h_{2j-1}Z_{2j-1}\right) ,\\
    B &= \sum^{\frac{N}{2}-1}_{j=1} \left(X_{2j} X_{2j+1} + Y_{2j} Y_{2j+1} + Z_{2j} Z_{2j+1}  +h_{2j}Z_{2j}\right) +h_{N}Z_{N}.
\end{aligned}
\end{equation}
For the PF1 method, the error term is
\begin{align}
 iE= i[iA, iB]= &\sum_{j=1}^{N-2} (-1)^j\left(X_jY_{j+1}Z_{j+2}-X_jZ_{j+1}Y_{j+2}-Y_jZ_{j+1}X_{j+2}+Y_jX_{j+1}Z_{j+2}+
 Z_jX_{j+1}Y_{j+2}-Z_jY_{j+1}X_{j+2}\right)\nonumber\\
 &+\sum_{j=1}^{N-1} (-1)^j\left(h_{j+1}(Y_jX_{j+1}-X_jY_{j+1})\right).
 \label{SMEHeisenberg}
\end{align}
It is not hard to see that this commutator does not meet the conditions of \cref{SMthm:worstcase}. However, we can still find a product state that gives the worst-case performance.
We choose the $\mathrm{Stab}$ set containing $X_jY_{j+1}$ for odd $j$.
Then the corresponding product state is stabilized by $\{X_1,Y_2,X_3,Y_4,\ldots,X_{N-1},Y_N\}$, giving $\ket{\psi}=\ket{+}_1\ket{+i}_2\ket{+}_3\ket{+i}_4\cdots\ket{+}_{N-1}\ket{+i}_{N}$. Now consider the contributions of various terms in \cref{SMEHeisenberg} to $\tr (E^{\dagger}E \ket{\psi} \bra{\psi})$. Besides $X_jY_{j+1}$ for odd $j$, the terms $Y_jX_{j+1}$ for even $j$ also stabilize $\ket{\psi}$ (and of course the same holds for products of such terms). Letting $M=\sum_{j~\mathrm{odd}}h_{j+1} X_jY_{j+1}+\sum_{j'~\mathrm{even}} h_{j'+1}Y_{j'}X_{j'+1}$, we have
\begin{equation}\label{SMEHeisenberg2}
\begin{aligned}
 \tr (M^{\dag}M\ket{\psi} \bra{\psi})=\sum_{j,k=1}^{N-1}h_{j+1}h_{k+1}=\left(\sum_{j=1}^{N-1}h_{j+1}\right)^2.
\end{aligned}
\end{equation}
If $h_j=h$ for all $j$, we have $\tr (M^{\dag}M\ket{\psi} \bra{\psi})=(N-1)^2h^2$, which is $\Theta(N^2)$ for constant $h$. Then we just need to check that the contributions of other terms are at most $\mc O(N)$. As mentioned in \cref{SM:crossStab} in the proof of \cref{SMthm:worstcase}, any pair of Pauli operators $P,P'$ in the expression for $E$ in \cref{SM:Epauli} can survive in $E^{\dag}E$ only when they commute. Since $\support(P)=\mc O (1)$ and the number of $P$ in $E$ is $\mc O (N)$, the number of overlapping and also commuting Pauli pairs is $\mc O(N)$. As a result, we only need to focus on Pauli pairs whose supports are disjoint (and of course commuting), since there are $\mc O(N^2)$ such cases.
For all these disjoint Pauli pairs, if both $P,P'$ are in $M$ and stabilize the state $\ket{\psi}$ given by \cref{SMEHeisenberg2}, the result is nonzero. Otherwise, if $P,P'$ are not both in $M$, then
$\tr (PP'\ket{\psi} \bra{\psi})=0$
since $PP'$ does not stabilize $\ket{\psi}$,
so such terms do not contribute to $\tr (E^{\dag}E\ket{\psi} \bra{\psi})$.
For example, for $P=X_1Z_{2}Y_{3}$ and $P'=Y_6X_7$, it is clear that $\tr (PP'\ket{\psi} \bra{\psi})=0$.

Moreover, if the values $h_j$ are independent and identically distributed, with mean zero and variance $\sigma^2$, then the previous main contribution term in \cref{SMEHeisenberg2} now becomes $(N-1)\sigma^2$. As a result, for constant $\sigma$, the contribution is only $\Theta(N)$ in this case. In summary, we show that for the Heisenberg model with a uniform magnetic field, we can construct a product state to reach the worst-case error. However, this construction does not work for the random-field case. We leave it as an open question to construct a product state that achieves worst-case scaling in this case, or to show that no such state exists.

\section{Proof of Corollary 2}

\begin{corollary}
\label{SMcor:shallow}
  (Corollary 2 in the main text)  For an $N$-qubit lattice Hamiltonian and a pure quantum input state $\ket{\psi}$ that is generated by a $D$-dimensional geometrically local circuit of depth $o(N^{1/D})$, the entanglement-based bound of  is
    \begin{align}
  \|{(U_0-\mathscr{U}_p)\ket{\psi}}\|
  = \mathcal O(\delta t^{p+1}N\max_{j}(\log d_{\rho_{j}}-S(\rho_{j}))^{1/4})+ O(\delta t^{p+1}\sqrt{N}).
    \end{align}
\end{corollary}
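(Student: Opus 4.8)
The plan is to start from the entanglement-based bound of \cref{SMthm:main}, whose nontrivial content is the quantity
\begin{equation*}
\Sigma := \sum_{j,j'} \|E_j^{\dag}E_{j'}\|\sqrt{\log d_{\rho_{j,j'}}-S(\rho_{j,j'})},
\end{equation*}
and to bound $\Sigma$ by exploiting the light-cone structure of the state-preparation circuit. The overall error is $\mc O(\delta t^{p+1}\sqrt{\Sigma}+\delta t^{p+1}\|E\|_F)$, so since $\|E\|_F=\mc O(\sqrt N)$ already supplies the additive $\mc O(\delta t^{p+1}\sqrt N)$ term, the whole task reduces to showing $\sqrt\Sigma=\mc O(N\,\epsilon_{\max}^{1/4})+o(N)$, where $\epsilon_{\max}:=\max_j(\log d_{\rho_j}-S(\rho_j))$. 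First I would split the sum as in \cref{eq:lightconeerror} into a ``near'' part over pairs $(j,j')\in\mc L$ (those whose supports lie within the light cone, i.e.\ at lattice distance $\lesssim C_{\mathrm{depth}}$) and a ``far'' part over the remaining pairs.

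The key structural input is a factorization lemma: if $\ket\psi=V\ket{0}^{\otimes N}$ for a $D$-dimensional geometrically local circuit $V$ of depth $C_{\mathrm{depth}}$, then the reduced state $\rho_j$ on $\support(E_j)$ depends only on the gates inside the backward light cone of $\support(E_j)$ under $V^{\dag}$, a region of radius $\mc O(C_{\mathrm{depth}})$. Hence for a far pair, $\mathrm{dist}(\support(E_j),\support(E_{j'}))\gtrsim C_{\mathrm{depth}}$ forces these backward light cones to be disjoint, and since the input $\ket{0}^{\otimes N}$ is a product state, the joint RDM factorizes, $\rho_{j,j'}=\rho_j\otimes\rho_{j'}$. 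I would establish this by a standard Lieb--Robinson / causal-cone argument. Additivity of the von Neumann entropy and of $\log d$ over tensor products then gives $\log d_{\rho_{j,j'}}-S(\rho_{j,j'})=(\log d_{\rho_j}-S(\rho_j))+(\log d_{\rho_{j'}}-S(\rho_{j'}))$, which is exactly the replacement used in the far sum of \cref{eq:lightconeerror}.

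With the decomposition in hand, the far part is controlled using $\sqrt{a+b}\le\sqrt a+\sqrt b$ and $\log d_{\rho_j}-S(\rho_j)\le\epsilon_{\max}$: each far summand is at most $\|E_j^{\dag}E_{j'}\|\cdot 2\sqrt{\epsilon_{\max}}=\mc O(\sqrt{\epsilon_{\max}})$ since $\|E_j^{\dag}E_{j'}\|=\mc O(1)$ for a lattice Hamiltonian, and there are $\mc O(N^2)$ far pairs, so $\Sigma_{\mathrm{far}}=\mc O(N^2\sqrt{\epsilon_{\max}})$ and $\sqrt{\Sigma_{\mathrm{far}}}=\mc O(N\,\epsilon_{\max}^{1/4})$. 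For the near part I would bound each deficit crudely by $\log d_{\rho_{j,j'}}\le 2\max_j w(E_j)\log 2=\mc O(1)$; the only thing to count is $|\mc L|$, which by geometric locality is $\mc O(N\,C_{\mathrm{depth}}^{D})$ since each fixed $E_{j'}$ has only $\mc O(C_{\mathrm{depth}}^{D})$ partners within the light cone. With $C_{\mathrm{depth}}=o(N^{1/D})$ this is $o(N^2)$, whence $\sqrt{\Sigma_{\mathrm{near}}}=o(N)$ is subdominant to the far contribution. Combining the two parts through $\sqrt{\Sigma}\le\sqrt{\Sigma_{\mathrm{near}}}+\sqrt{\Sigma_{\mathrm{far}}}$ and adding the $\|E\|_F$ term yields the claim. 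The main obstacle is the factorization lemma: making precise the backward-light-cone argument (including the constant-diameter supports of the $E_j$ and the correct distance threshold $\sim 2C_{\mathrm{depth}}$ defining $\mc L$) so that $\rho_{j,j'}=\rho_j\otimes\rho_{j'}$ holds exactly for far pairs; the counting of $|\mc L|$ and verifying that the near contribution is genuinely lower order are the remaining points requiring care.
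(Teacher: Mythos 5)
Your proposal is correct and follows essentially the same route as the paper's proof: the same split of the pair sum into light-cone and non-light-cone contributions, the same key observation that far pairs have exactly factorized RDMs $\rho_{j,j'}=\rho_j\otimes\rho_{j'}$ (the paper phrases this via vanishing mutual information, you via entropy additivity, which is equivalent), the same $\mathcal O(N\,C_{\mathrm{depth}}^{D})=o(N^2)$ count of near pairs, and the same reduction of the far part to $\max_j(\log d_{\rho_j}-S(\rho_j))$. Your closing remark that the near contribution is ``subdominant to the far contribution'' shares the same mild imprecision as the paper's own final step (absorbing the $o(N^2)$ near term into the far term), so the two arguments match in both substance and level of rigor.
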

\begin{proof}
According to the entanglement-based error bound in \cref{SMthm:main}, we need to calculate $S(\rho_{j,j'})$ for each $j$ and $j'$ with
\begin{align}
    S(\rho_{j,j'})=S(\rho_{j})+S(\rho_{j'})- I(j,j')_{\rho_{j,j'}},
\end{align}
where $I(j,j')_{\rho}$ is the mutual information for the subsystems on $\support(E_{j})$ and $\support(E_{j'})$.
Consider a geometrically local quantum circuit where the locality of the interactions is represented by a graph whose vertices are qubits and whose edges represent the available interactions. For two distinct qubits $u$ and $v$,
the distance between two qubits $u$ and $v$ is the length of the shortest path between them in $G$. For two sets of qubits, $\mathscr S_1$ and $\mathscr S_2$, their distance is defined as
\begin{align}
\Dist(\mathscr S_1,\mathscr S_2)= \min_{u\in \mathscr S_1, v\in \mathscr S_2}\Dist(u, v).
\end{align}

Here we define the light-cone region $\mathcal{L}$
according to the distance of the supports as follows:
\begin{align}
  \mathcal{L}=\{(j,j')\mid\Dist\{\support(E_{j}),\support(E_{j'})\}\le  2\,\mathrm{depth}\}.
\end{align}
For a given $E_j$, there are only a limited number of $E_{j'}$ that are close to $\support(E_j)$. We denote the maximal number of terms $E_{j'}$  in the light cone as
\begin{align}
L:=\max_j \{\#\{j'\}\mid(j,j')\in \mathcal{L}\}.
\end{align}
For a given $E_j$, there are at most $\mathcal O(\mathrm{depth}^D )= o(N)$ qubits in the light cone. Because the evolved Hamiltonian $H$ is a lattice model, each $E_j$ in the Trotter error commutators only has constant weight. A fixed qubit could overlap at most a constant number of $E_j$.
As a result, we have $L=o(N)$.

When $(j,j')\notin \mathcal{L}$, i.e., the two subsystems on $\support(E_j)$ and $\support(E_j')$ lie outside the light cone and the RDM $\rho_{j,j'}=\rho_{i}\otimes \rho_{j}$ is a tensor product with $I(j,j')_{\rho}=0$.
In this way, we can refine the first term in the entanglement-based error bound in \cref{SMthm:main} as
\begin{equation}
\begin{aligned}
&\sum_{j,j'} \|E_j^{\dagger} E_{j'}\|\sqrt{\log d_{\rho_{j,j'}}-S(\rho_{j,j'})}\\
&=\sum_{(j,j')\in \mathcal{L}} \|E_j^{\dagger} E_{j'}\|\sqrt{\log d_{\rho_{j,j'}}-S(\rho_{j,j'})}+ \sum_{(j,j')\notin \mathcal{L}} \|E_j^{\dagger} E_{j'}\|\sqrt{\log d_{\rho_{j}}+ \log d_{\rho_{j'}} -S(\rho_{j})-S(\rho_{j'})},\\
&=  \mathcal O(LN) + \mathcal O(N(N-L)) \max_{j}(\log d_{\rho_{j}}-S(\rho_{j}))^{1/2})\\
&= \mathcal O(N^2 \max_{j}(\log d_{\rho_{j}}-S(\rho_{j}))^{1/2}).
\end{aligned}
\end{equation}
For a lattice Hamiltonian, $\|E\|_F=\mathcal O(\sqrt{N})$. Substituting these results in \cref{SMthm:main}, the corollary follows.
\end{proof}

\section{Concrete upper bounds for PF1 and PF2}
\begin{lemma}\label{Lemma:traceproduct}
Let $M,U \in \mathbb{C}^{d\times d}$ with $U$ unitary.
Then
\begin{equation}
|\tr(UMM^{\dag})|\leq |\tr(MM^{\dag})|
\end{equation}
\end{lemma}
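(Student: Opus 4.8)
The plan is to exploit the fact that $MM^{\dag}$ is positive semidefinite, so that $\tr(MM^{\dag})=\tr(M^{\dag}M)$ is automatically a nonnegative real number; consequently the right-hand side equals $\tr(MM^{\dag})$ itself and the absolute value there is cosmetic. The entire content of the lemma is then the assertion that inserting a unitary factor can only shrink the modulus of this trace. I would prove this in one line via the Cauchy--Schwarz inequality for the Hilbert--Schmidt inner product $\langle X,Y\rangle:=\tr(X^{\dag}Y)$, with associated norm $\|X\|_{\mathrm{HS}}:=\sqrt{\langle X,X\rangle}$.

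First I would recast the left-hand side as an inner product: by cyclicity of the trace, $\tr(UMM^{\dag})=\tr(M^{\dag}UM)$, and since $\langle U^{\dag}M,M\rangle=\tr((U^{\dag}M)^{\dag}M)=\tr(M^{\dag}UM)$, we have the clean identity $\tr(UMM^{\dag})=\langle U^{\dag}M,M\rangle$. Cauchy--Schwarz then gives $|\langle U^{\dag}M,M\rangle|\le \|U^{\dag}M\|_{\mathrm{HS}}\,\|M\|_{\mathrm{HS}}$. The final step uses unitarity of $U$: because $\|U^{\dag}M\|_{\mathrm{HS}}^2=\tr(M^{\dag}UU^{\dag}M)=\tr(M^{\dag}M)=\|M\|_{\mathrm{HS}}^2$, the product on the right collapses to $\|M\|_{\mathrm{HS}}^2=\tr(MM^{\dag})$, which is exactly the claimed bound.

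As a backup route I would keep a spectral argument in reserve: writing $A:=MM^{\dag}=\sum_i \lambda_i \ket{v_i}\bra{v_i}$ with $\lambda_i\ge 0$ and orthonormal eigenvectors, we get $\tr(UA)=\sum_i \lambda_i \bra{v_i}U\ket{v_i}$, and each diagonal element satisfies $|\bra{v_i}U\ket{v_i}|\le \|U\ket{v_i}\|\,\|\ket{v_i}\|=1$ by Cauchy--Schwarz and unitarity, so $|\tr(UA)|\le \sum_i \lambda_i=\tr(A)$. Both arguments are entirely elementary, and I do not anticipate a genuine obstacle; the only point that warrants a moment's care is the opening observation that $\tr(MM^{\dag})$ is manifestly real and nonnegative, which is what lets the modulus on the right-hand side be dropped and turns the statement into a direct consequence of Cauchy--Schwarz.
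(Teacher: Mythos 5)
Your proof is correct and takes essentially the same route as the paper: both apply the Cauchy--Schwarz inequality for the Hilbert--Schmidt inner product and then use unitary invariance of the Hilbert--Schmidt norm to collapse the bound to $\tr(MM^{\dag})$ (the paper factors the trace as $\tr\bigl((UM)M^{\dag}\bigr)$ rather than your $\langle U^{\dag}M,M\rangle$, a cosmetic difference). Your spectral backup argument is also valid but unnecessary.
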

\begin{proof}
The Cauchy-Schwarz inequality gives  $|\tr(MN)|\leq \sqrt{\tr(MM^{\dag})}\sqrt{\tr(NN^{\dag})}$.
Using this inequality, we have
\begin{equation}
   |\tr(UMM^{\dag})|\leq \sqrt{\tr(UMM^{\dag}U^{\dag})}\sqrt{\tr(MM^{\dag})}=|\tr(MM^{\dag})|,
\end{equation}
because $U$ is unitary.
\end{proof}

\begin{lemma}\label{SMLemma:AB}
(PF1, Lemma 2 in Methods)
For a two-term Hamiltonian $H=A+B$, consider the first-order product formula $\mathscr{U}_1(\delta t)=e^{-iA\delta t}e^{-iB\delta t}$ with initial state $\ket{\psi}$. Let $E=[A,B]=\sum_j E_j$. Then the Trotter error is upper bounded as
\begin{equation}
    \|(\mathscr{U}_1(\delta t)-U_0(\delta t))\ket{\psi}\|\le \sqrt{\frac{\delta t^4}{4}(\|[A,B]\|_F^2+\Delta_E(\psi)}+ \frac{\delta t^3}{6}\|[A,[A,B]]\|+  \frac{\delta t^3}{3}\|[B,[B,A]]\|,
\end{equation}
with
\begin{equation}
\Delta_E(\psi)=\sum_{j,j'}\|E_{j'}E_j\| \tr|\rho_{j,j'}-\mathbb{I}_{\support(E_jE_{j'})}/d_{\support(E_jE_{j'})}|.
\end{equation}
For sufficiently small $\delta t$, the error is $\mathcal O(\delta t^2\sqrt{\|[A,B]\|_F^2+\Delta_E(\psi}))$.
\end{lemma}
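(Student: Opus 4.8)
The plan is to derive an exact integral representation of the additive error $\mathscr{E}(\delta t):=\mathscr{U}_1(\delta t)-U_0(\delta t)$ and split it into a leading piece that carries the state dependence (controlled by \cref{SMLemma:local}) and a higher-order remainder that is bounded in spectral norm by nested triple commutators. First I would apply the fundamental theorem of calculus to the interpolating family $U_0(\delta t-\tau)\mathscr{U}_1(\tau)$, whose endpoints are $U_0(\delta t)$ and $\mathscr{U}_1(\delta t)$. Differentiating and using $\mathscr{U}_1'(\tau)=-i\bigl(A+e^{-iA\tau}Be^{iA\tau}\bigr)\mathscr{U}_1(\tau)$ together with $B-e^{-iA\tau}Be^{iA\tau}=i\int_0^\tau \d s\, e^{-iAs}[A,B]e^{iAs}$, the defect collapses to a commutator, and after inserting $\mathscr{U}_1(\tau)\mathscr{U}_1(\tau)^{-1}$ to expose a conjugation structure one gets
\begin{equation}
\mathscr{E}(\delta t)=-\int_0^{\delta t}\d\tau\; e^{-i(\delta t-\tau)H}\mathscr{U}_1(\tau)\int_0^\tau \d s\; V_{\tau,s}\,[A,B]\,V_{\tau,s}^{\dagger},\qquad V_{\tau,s}=e^{iB\tau}e^{iA(\tau-s)}.
\end{equation}
Writing $E=[A,B]=\sum_j E_j$, the leading term is obtained by replacing $V_{\tau,s}EV_{\tau,s}^{\dagger}$ by $E$, and the remainder $\mathscr{E}_{\re}$ collects the difference.

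For the leading term, this replacement makes the inner $s$-integral equal $\tau E$, so the leading contribution applied to $\ket{\psi}$ is $-\int_0^{\delta t}\d\tau\,\tau\, e^{-i(\delta t-\tau)H}\mathscr{U}_1(\tau)E\ket{\psi}$. Because the outer factor is unitary and acts after $E$, the integral triangle inequality gives the bound $\tfrac{\delta t^2}{2}\|E\ket{\psi}\|$. I would then invoke \cref{SMLemma:local} to get $\|E\ket{\psi}\|^2=\bra{\psi}E^{\dagger}E\ket{\psi}\le\|[A,B]\|_F^2+\Delta_E(\psi)$, which produces exactly the square-root term $\sqrt{\tfrac{\delta t^4}{4}(\|[A,B]\|_F^2+\Delta_E(\psi))}$.

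For the remainder the key step is a clean, exact bound on $\|V_{\tau,s}EV_{\tau,s}^{\dagger}-E\|$. Peeling the two conjugations and using $\|e^{iXu}Ee^{-iXu}-E\|\le u\|[X,E]\|$ (itself an integral estimate) gives $\|V_{\tau,s}EV_{\tau,s}^{\dagger}-E\|\le(\tau-s)\|[A,[A,B]]\|+\tau\|[B,[A,B]]\|$. Integrating over $s\in[0,\tau]$ and then $\tau\in[0,\delta t]$, the weight $(\tau-s)$ yields $\tfrac{\delta t^3}{6}\|[A,[A,B]]\|$, while the $s$-independent weight $\tau$ yields $\tfrac{\delta t^3}{3}\|[B,[A,B]]\|$; using $\|[B,[A,B]]\|=\|[B,[B,A]]\|$ reproduces the two stated remainder terms. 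Since all outer factors are unitary, $\|\mathscr{E}_{\re}\ket{\psi}\|\le\|\mathscr{E}_{\re}\|$, and the triangle inequality $\|\mathscr{E}(\delta t)\ket{\psi}\|\le\|\mathscr{E}^*(\delta t)\ket{\psi}\|+\|\mathscr{E}_{\re}\|$ combines the three contributions into the claimed bound; for sufficiently small $\delta t$ the $\mathcal O(\delta t^3)$ pieces are subleading, recovering the $\mathcal O\bigl(\delta t^2\sqrt{\|[A,B]\|_F^2+\Delta_E(\psi)}\bigr)$ scaling.

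The main obstacle is organizing the representation so that the state-dependent commutator acts directly on $\ket{\psi}$ rather than on the evolved state $\mathscr{U}_1(\tau)\ket{\psi}$, which is precisely what allows \cref{SMLemma:local} to be applied at $\ket{\psi}$; this is the reason for moving to the conjugated form $V_{\tau,s}EV_{\tau,s}^{\dagger}$ with the outer unitaries pushed to the left. The remaining delicate point is tracking the \emph{asymmetric} integration weights $(\tau-s)$ versus $\tau$ for the $A$- and $B$-conjugations, since these are exactly what produce the distinct prefactors $\tfrac16$ and $\tfrac13$; once the representation is in this form, the spectral-norm remainder estimate is routine.
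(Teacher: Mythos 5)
Your proposal is correct and follows essentially the same route as the paper: the same variation-of-parameters integral representation (your $V_{\tau,s}$ is the paper's conjugating unitary up to the substitution $s\mapsto\tau-s$), the same split into a leading commutator term controlled by \cref{SMLemma:local} and a nested-commutator remainder, and the same integration weights producing the prefactors $\delta t^3/6$ and $\delta t^3/3$. The only (harmless) deviation is that you bound the leading term via the integral triangle inequality and unitary invariance, $\|\mathscr{E}_1\ket{\psi}\|\le\tfrac{\delta t^2}{2}\|E\ket{\psi}\|$, whereas the paper expands $\bra{\psi}\mathscr{E}_1^{\dagger}\mathscr{E}_1\ket{\psi}$ as a double integral and removes the intermediate unitary with a Cauchy--Schwarz lemma; both yield the identical constant $\delta t^4/4$.
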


\begin{proof}
By Ref.~\cite{childs2020theory}, the additive error $\mathscr{E}(\delta t)=\mathscr{U}_1(\delta t)-U_0(\delta t)$ is
\begin{equation}
\begin{aligned}\label{Mhigh}
\mathscr{E}(\delta t)&=\int_0^{\delta t} \d\tau_1 \,  e^{-iH(\delta t-\tau_1)}e^{-iA\tau_1}e^{-iB\tau_1}
\left(e^{iB\tau_1}e^{iA\tau_1}iBe^{-iA\tau_1}e^{-iB\tau_1}    -iB\right)\\
&=\int_0^{\delta t} \d\tau_1 \, e^{-iH(\delta t-\tau_1)}e^{-iA\tau_1}e^{-iB\tau_1} [iA,iB]\tau_1 \\
&\quad+
\int_0^{\delta t} \d\tau_1
\int_0^{\tau_1} \d\tau_2 \,  e^{-iH(\delta t-\tau_1)}e^{-iA\tau_1}e^{-iB\tau_1} e^{iB(\tau_1-\tau_2)}[iB,[iA, iB]]e^{-iB(\tau_1-\tau_2)}   \tau_1 \\
&\quad+
\int_0^{\delta t} \d\tau_1  \int_0^{\tau_1} \d\tau_2 \, e^{-iH(\delta t-\tau_1)}e^{-iA\tau_1}e^{-iB\tau_1} e^{iB\tau_1}e^{iA(\tau_1-\tau_2)} [iA,[iA,iB]]e^{-iA(\tau_1-\tau_2)}   e^{-iB\tau_1}  \tau_2\\
&= \mathscr{E}_1+ \mathscr{E}_{\re},
\end{aligned}
\end{equation}
where $\mathscr{E}_1=\int_0^{\delta t} \d\tau_1 \, e^{-iH(\delta t-\tau_1)}e^{-iA\tau_1}e^{-iB\tau_1} [iA,iB]\tau_1$ and $\mathscr{E}_{\re}$ is the sum of the other integrals.
Here the second line is due to the equation
\begin{equation}
\begin{aligned}
 e^{iB\tau_1}e^{iA\tau_1}iBe^{-iA\tau_1}e^{-iB\tau_1}
&=  iB+e^{iB\tau_1}[iA, iB] \tau_1 e^{-iB\tau_1} + \int_0^{\tau_1} \d\tau_2  e^{iB\tau_1}e^{iA(\tau_1-\tau_2)} [iA,[iA,iB]] \tau_2 e^{-iA(\tau_1-\tau_2)}   e^{-iB\tau_1}.\\
&=iB+[iA, iB] \tau_1  +
\int_0^{\tau_1} \d\tau_2  e^{iB(\tau_1-\tau_2)}[iB,[iA, iB]]e^{-iB(\tau_1-\tau_2)}   \tau_1 \\
&\quad+\int_0^{\tau_1} \d\tau_2  e^{iB\tau_1}e^{iA(\tau_1-\tau_2)} [iA,[iA,iB]] \tau_2 e^{-iA(\tau_1-\tau_2)}   e^{-iB\tau_1}.
\end{aligned}
\end{equation}

By the triangle inequality, we have
\begin{align}\label{Eq:PF1triangle}
\|{\mathscr{E}(\delta t)\ket{\psi}}\|\le \|{\mathscr{E}_1(\delta t)\ket{\psi}}\| + \|{\mathscr{E}_{\re}(\delta t)\ket{\psi}}\|\le \sqrt{\bra{\psi}\mathscr{E}_1^{\dagger}\mathscr{E}_1\ket{\psi}}+ \|\mathscr{E}_{\re}\|.
\end{align}
The second term can be bounded as
\begin{align}\label{eq:remain_upper_pf1}
 \|\mathscr{E}_{\re}\|\le \frac{t^3}{6}\|[iA,[iA,iB]]\|+ \frac{t^3}{3}\|[iB,[iA,iB]]\|.
\end{align}
For the first term, we have
\begin{equation}
\begin{aligned}
    |\bra{\psi}\mathscr{E}_1^{\dagger}\mathscr{E}_1\ket{\psi}|&=\left|\int_0^{\delta t} \d\tau_1 \int_0^{\delta t} \d\tau_1' \bra{\psi}
[iB,iA]^{\dagger}U_{m}[iB,iA]\ket{\psi}\tau_1\tau_1'\right|\\
&\le \int_0^{\delta t} \d\tau_1 \int_0^{\delta t} \d\tau_1' \, \tau_1\tau_1' \left|\bra{\psi}
[iB,iA]^{\dagger}U_{m}[iB,iA]\ket{\psi}\right|\\
&\le \int_0^{\delta t} \d\tau_1 \int_0^{\delta t} \d\tau_1' \, \tau_1\tau_1' \left|\bra{\psi}
[iB,iA]^{\dagger}[iB,iA]\ket{\psi}\right|\\
&=\frac{\delta t^4}{4} \bra{\psi}[iB,iA]^2 \ket{\psi},
\end{aligned} \label{eq:e_1^2}
\end{equation}
where $U_{m}=e^{iB\tau_1'}e^{iA\tau_1'} e^{iH(\delta t-\tau_1')}e^{-iH(\delta t-\tau_1)}e^{-iB\tau_1}e^{-iA\tau_1}$.
For the second line, we use \cref{Lemma:traceproduct} with $M=[iB,iA]\ket{\psi}$ and $U=U_{m}$, giving
\begin{equation}
\begin{aligned}
   \left|\bra{\psi}[iB,iA]^{\dagger}U_{m}[iB,iA]\ket{\psi}\right|
   &=|\tr(U_{m}[iB,iA]\ket{\psi}\bra{\psi}[iB,iA]^{\dagger})|\\
    &\le |\tr([iB,iA]\ket{\psi}\bra{\psi}[iB,iA]^{\dagger})| \\
    &= \left|\bra{\psi}[iB,iA]^{\dagger}[iB,iA]\ket{\psi}\right|.
\end{aligned}
\end{equation}

According to \cref{SMLemma:local} with $E=[A,B]$, we find
\begin{align}
\bra{\psi}[iB,iA]^{\dagger}[iB,iA]\ket{\psi}=\bra{\psi}[A,B]^2\ket{\psi}\le \|[A,B]\|_F^2+    \sum_{j,j'} \|E_j^{\dagger} E_{j'}\| \, \tr|\rho_{j,j'}- \mathbb{I}/d_{\rho_{j,j'}} |.
\end{align}
Substituting this result in \cref{eq:e_1^2}, we have
\begin{align}\label{eq:E1bound}
\sqrt{\bra{\psi}\mathscr{E}_1^{\dagger}\mathscr{E}_1\ket{\psi}}\le \sqrt{\frac{\delta t^4}{4} \left(\|[A,B]^2\|_F^2+    \sum_{j,j'} \|E_j^{\dagger} E_{j'}\| \, \tr|\rho_{j,j'}- \mathbb{I}/d_{\rho_{j,j'}} | \right)}.
\end{align}
Finally, using \cref{eq:remain_upper_pf1,eq:E1bound} in \cref{Eq:PF1triangle}, the result follows.
\end{proof}

\begin{lemma}\label{SMLemma:ABPF2}
(PF2, Lemma 3 in Methods)
For a two-term Hamiltonian $H=A+B$, consider the second-order product formula $\mathscr{U}_2( \delta t)=e^{-iA/2 \delta t}e^{-iB \delta t}e^{-iA/2\delta t}$ with initial state $\ket{\psi}$. Let $E_1=[B,[B,A]]=\sum_j E_{1,j}$ and $E_2=[A,[A,B]]=\sum_j E_{2,j}$. Then the Trotter error is upper bounded as
\begin{equation}
\begin{aligned}
\|(\mathscr{U}_2(\delta t)-U_0( t))\ket{\psi}\|
&\le \sqrt{\frac{ {\delta t}^6}{144} \|\left[B,\left[B, A\right]\right] \|_F^2 +\Delta_{E_1}(\psi)}
+ \sqrt{\frac{ {\delta t}^6}{576} \|\left[A,\left[A, B\right]\right] \|_F^2 +\Delta_{E_2}(\psi)}
\\
&\quad+ \frac{\delta t^4}{32}\|\left[A,\left[B,\left[B,A\right]\right]\right]\|+  \frac{\delta t^4}{12}\|\left[B,\left[B,\left[B,A\right]\right]\right]\| \\
&\quad+ \frac{\delta t^4}{32}\|\left[B,\left[A,\left[A,B\right]\right]\right]\|+  \frac{\delta t^4}{48}\|\left[A,\left[A,\left[A,B\right]\right]\right]\|,
\end{aligned}
\end{equation}
with
\begin{align}
\Delta_{E_1}(\psi)&=\sum_{j,j'}\|E_{1,j}^{\dag}E_{1,j'}\| \tr|\rho_{j,j'}-\mathbb{I}_{\support(E_{1,j}E_{1,j'})}/d_{\support(E_{1,j}E_{1,j'})}|,\\
\Delta_{E_2}(\psi)&=\sum_{j,j'}\|E_{2,j}^{\dag}E_{2,j'}\| \tr|\rho_{j,j'}-\mathbb{I}_{\support(E_{2,j}E_{2,j'})}/d_{\support(E_{2,j}E_{2,j'})}|.
\end{align}
For sufficiently small $\delta t$, the error is $\mathcal O\left(\delta t^3\left(\sqrt{\|\left[B,\left[B, A\right]\right] \|_F^2 +\Delta_{E_1}(\ket{\psi})}+ \sqrt{\|\left[A,\left[A, B\right]\right] \|_F^2 +\Delta_{E_2}(\ket{\psi}) }    \right)\right)
$.
\end{lemma}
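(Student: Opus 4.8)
The plan is to mirror the proof of \cref{SMLemma:AB} for PF1, but now tracking the \emph{two} leading-order commutator terms that arise for the symmetric second-order formula. First I would invoke the integral representation of the additive Trotter error from Ref.~\cite{childs2020theory} to write $\mathscr{E}(\delta t)=\mathscr{U}_2(\delta t)-U_0(\delta t)$ as a time-ordered integral, and then repeatedly apply the identity $e^{iX\tau}Ye^{-iX\tau}=Y+\int_0^\tau \d s\, e^{iX(\tau-s)}[iX,Y]e^{-iX(\tau-s)}$ to the conjugated generators appearing in the expansion of $\mathscr{U}_2$. Because PF2 is symmetric, the $\mathcal O(\delta t^2)$ contributions cancel and the leading surviving terms are the two second-order nested commutators $E_1=[B,[B,A]]$ and $E_2=[A,[A,B]]$. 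This isolates $\mathscr{E}(\delta t)=\mathscr{E}_1+\mathscr{E}_2+\mathscr{E}_{\re}$, where $\mathscr{E}_i$ is an integral of a unitary conjugation of $E_i$ and $\mathscr{E}_{\re}$ collects the fourth-order nested commutators $[A,[B,[B,A]]]$, $[B,[B,[B,A]]]$, $[B,[A,[A,B]]]$, and $[A,[A,[A,B]]]$.

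Next I would apply the triangle inequality at the \emph{vector} level,
\begin{equation}
\|\mathscr{E}(\delta t)\ket{\psi}\|\le \|\mathscr{E}_1\ket{\psi}\|+\|\mathscr{E}_2\ket{\psi}\|+\|\mathscr{E}_{\re}\|.
\end{equation}
This separation is the conceptual heart of the argument: unlike PF1, which has a single leading commutator, here a direct evaluation of $\|(\mathscr{E}_1+\mathscr{E}_2)\ket{\psi}\|^2$ would produce cross terms $\bra{\psi}\mathscr{E}_1^\dagger\mathscr{E}_2\ket{\psi}$ that sandwich \emph{different} unitaries and so cannot be stripped by the Cauchy--Schwarz step below. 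Keeping the two terms separate yields the two distinct square roots in the statement, at the modest cost of a looser bound. The remainder $\|\mathscr{E}_{\re}\|$ is then controlled by the triangle inequality over spectral norms of the four fourth-order commutators; carrying out the iterated time integrals fixes the coefficients $\tfrac{\delta t^4}{32}$, $\tfrac{\delta t^4}{12}$, $\tfrac{\delta t^4}{32}$, and $\tfrac{\delta t^4}{48}$.

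For each leading term I would bound $\bra{\psi}\mathscr{E}_i^\dagger\mathscr{E}_i\ket{\psi}$ exactly as in the PF1 proof: writing $\mathscr{E}_i$ as $\int\!\d\tau\,\tau^2\,U(\tau)\,E_i$ (times a bounded coefficient) and applying \cref{Lemma:traceproduct} with $M=E_i\ket{\psi}$ to eliminate the conjugating unitary, reducing the quadratic form to $\bra{\psi}E_i^\dagger E_i\ket{\psi}$. Then \cref{SMLemma:local}, applied separately with $E=E_1$ and $E=E_2$, converts each into $\|E_i\|_F^2+\Delta_{E_i}(\psi)$ with the stated $\Delta_{E_i}$. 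The final step is to evaluate the double time integral $\int_0^{\delta t}\!\d\tau_1\int_0^{\delta t}\!\d\tau_1'\,(\cdots)$ so that the prefactors come out to $\tfrac{\delta t^6}{144}$ for $E_1$ and $\tfrac{\delta t^6}{576}$ for $E_2$, matching the leading coefficients $\tfrac{\delta t^3}{12}$ and $\tfrac{\delta t^3}{24}$ of the symmetric splitting.

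The main obstacle will be the bookkeeping in the first step: correctly expanding the symmetric product $e^{-iA\delta t/2}e^{-iB\delta t}e^{-iA\delta t/2}$ against $e^{-iH\delta t}$ so that the $\delta t^2$ terms manifestly cancel, and unambiguously assigning each surviving third-order piece to either $E_1$ or $E_2$ with the right numerical coefficient. This is a matter of careful, but routine, iteration of the conjugation identity following the template of \cref{SMLemma:AB}; the subsequent norm estimates and integrations are mechanical once the decomposition $\mathscr{E}=\mathscr{E}_1+\mathscr{E}_2+\mathscr{E}_{\re}$ is in hand.
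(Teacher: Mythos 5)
Your proposal follows essentially the same route as the paper's proof: the integral representation of the additive PF2 error from Ref.~\cite{childs2020theory}, the decomposition into the two leading nested-commutator pieces plus a remainder built from the four fourth-order commutators via the conjugation identity, the vector-level triangle inequality to keep the two leading terms separate, \cref{Lemma:traceproduct} to strip the conjugating unitaries, and \cref{SMLemma:local} to produce $\|E_i\|_F^2+\Delta_{E_i}(\psi)$. The plan is correct and matches the paper's argument step for step, including the rationale for avoiding the cross terms between $\mathscr{E}_1$ and $\mathscr{E}_2$.
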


\begin{proof}
According to Appendix L of  Ref.~\cite{childs2020theory},
\begin{equation}
\begin{aligned}\label{Eq:PF2additive}
&\mathscr{E}(\delta t)=\mathscr{U}_2(\delta t)-e^{-iH\delta t}\\
&=\int_0^{\delta t} \d\tau_1\int_0^{\tau_1}\d\tau_2\int_0^{\tau_2} \d\tau_3 \, e^{-i(\delta t-\tau_1)H}e^{-i\tau_1A/2}
e^{-i\tau_1B}
e^{-i\tau_1A/2}
e^{i\tau_1A/2}e^{i\tau_1B}\\
&\quad\cdot \left( e^{-i\tau_3B}\left[-iB,\left[-iB,-i\frac{A}{2}\right]\right]e^{i\tau_3B}+ e^{i\tau_3A/2}\left[i\frac{A}{2},\left[i\frac{A}{2},iB\right]\right]e^{-i\tau_3A/2}\right) e^{-i\tau_1B} e^{-i\tau_1A/2}\\
&= \int_0^{\delta t} \d\tau_1\int_0^{\tau_1}\d\tau_2\int_0^{\tau_2} \d\tau_3 \, e^{-i(\delta t-\tau_1)H}e^{-i\tau_1A/2}
e^{-i\tau_1B}
e^{-i\tau_1A/2}\left(\left[-iB,\left[-iB,-i\frac{A}{2}\right]\right]+ \left[i\frac{A}{2},\left[i\frac{A}{2},iB\right]\right]+ R_1+R_2 \right)\\
&= \mathscr{E}_{2,1}+\mathscr{E}_{2,2} + \mathscr{E}_{\mathrm{re},1}+ \mathscr{E}_{\mathrm{re},1},
\end{aligned}
\end{equation}
where
\begin{equation}
\begin{aligned}
    R_1&= e^{i\tau_1A/2}e^{i\tau_1B}  e^{-i\tau_3B}\left[-iB,\left[-iB,-i\frac{A}{2}\right]\right]e^{i\tau_3B} e^{-i\tau_1B} e^{-i\tau_1A/2} - \left[-iB,\left[-iB,-i\frac{A}{2}\right]\right],\\
    R_2&=e^{i\tau_1A/2}e^{i\tau_1B} e^{i\tau_3A/2}\left[i\frac{A}{2},\left[i\frac{A}{2},iB\right]\right]e^{-i\tau_3A/2} e^{-i\tau_1B} e^{-i\tau_1A/2}- \left[i\frac{A}{2},\left[i\frac{A}{2},iB\right]\right],
\end{aligned}
\end{equation}
and
\begin{equation}
\begin{aligned}
    &\mathscr{E}_{2,1}= \int_0^{\delta t} \d\tau_1\int_0^{\tau_1}\d\tau_2\int_0^{\tau_2} \d\tau_3 \, e^{-i(\delta t-\tau_1)H}e^{-i\tau_1A/2}
e^{-i\tau_1B}
e^{-i\tau_1A/2}\left[-iB,\left[-iB,-i\frac{A}{2}\right]\right],\\
&\mathscr{E}_{2,2}= \int_0^{\delta t} \d\tau_1\int_0^{\tau_1}\d\tau_2\int_0^{\tau_2} \d\tau_3 \, e^{-i(\delta t-\tau_1)H}e^{-i\tau_1A/2}
e^{-i\tau_1B}
e^{-i\tau_1A/2}  \left[i\frac{A}{2},\left[i\frac{A}{2},iB\right]\right],\\
&\mathscr{E}_{\mathrm{re},1}= \int_0^{\delta t} \d\tau_1\int_0^{\tau_1}\d\tau_2\int_0^{\tau_2} \d\tau_3 \, e^{-i(\delta t-\tau_1)H}e^{-i\tau_1A/2}
e^{-i\tau_1B}e^{-i\tau_1A/2}R_1,\\
&\mathscr{E}_{\mathrm{re},2}=\int_0^{\delta t} \d\tau_1\int_0^{\tau_1}\d\tau_2\int_0^{\tau_2} \d\tau_3 \, e^{-i(\delta t-\tau_1)H}e^{-i\tau_1A/2}
e^{-i\tau_1B}e^{-i\tau_1A/2}R_2 .
\end{aligned}
\end{equation}
We have
\begin{equation}
\begin{aligned}
  R_1&= e^{i\tau_1A/2}e^{i\tau_1B}  e^{-i\tau_3B}\left[-iB,\left[-iB,-i\frac{A}{2}\right]\right]e^{i\tau_3B} e^{-i\tau_1B} e^{-i\tau_1A/2 }-\left[-iB,\left[-iB,-i\frac{A}{2}\right]\right]\\
  & =e^{i\tau_1A/2}e^{i\tau_1B}  \left[-iB,\left[-iB,-i\frac{A}{2}\right]\right]e^{-i\tau_1B} e^{-i\tau_1A/2}-\left[-iB,\left[-iB,-i\frac{A}{2}\right]\right]\\
   &\quad+ \int_0^{\tau_3}\d\tau_4 \,
 e^{i\tau_1A/2}e^{i\tau_1B}  e^{-i(\tau_3-\tau_4)B}\left[-iB,\left[-iB,\left[-iB,-i\frac{A}{2}\right]\right]\right]e^{i(\tau_3-\tau_4)B} e^{-i\tau_1B} e^{-i\tau_1A/2}\\
 &=e^{i\tau_1A/2}\left[-iB,\left[-iB,-i\frac{A}{2}\right]\right] e^{-i\tau_1A/2}-\left[-iB,\left[-iB,-i\frac{A}{2}\right]\right]\\
 &\quad+ \int_0^{\tau_1} \d \tau_2 \, e^{i\tau_1A/2}e^{i(\tau_1-\tau_2)B}\left[iB,\left[-iB,\left[-iB,-i\frac{A}{2}\right]\right]\right] e^{-i(\tau_1-\tau_2)B}e^{-i\tau_1A/2}\\
 &\quad+ \int_0^{\tau_3}\d\tau_4 \,
 e^{i\tau_1A/2}e^{i\tau_1B}  e^{-i(\tau_3-\tau_4)B}\left[-iB,\left[-iB,\left[-iB,-i\frac{A}{2}\right]\right]\right]e^{i(\tau_3-\tau_4)B} e^{-i\tau_1B} e^{-i\tau_1A/2}\\
&=\int_0^{\tau_1} \d \tau_1' \, e^{i(\tau_1-\tau_1')A/2}\left[i\frac{A}{2},\left[-iB,\left[-iB,-i\frac{A}{2}\right]\right]\right] e^{-i(\tau_1-\tau_1')A/2}\\
&\quad+ \int_0^{\tau_1} \d \tau_1' \, e^{i\tau_1A/2}e^{i(\tau_1-\tau_1')B}\left[iB,\left[-iB,\left[-iB,-i\frac{A}{2}\right]\right]\right] e^{-i(\tau_1-\tau_1')B}e^{-i\tau_1A/2}\\
 &\quad+ \int_0^{\tau_3}\d\tau_4 \,
 e^{i\tau_1A/2}e^{i\tau_1B}  e^{-i(\tau_3-\tau_4)B}\left[-iB,\left[-iB,\left[-iB,-i\frac{A}{2}\right]\right]\right]e^{i(\tau_3-\tau_4)B} e^{-i\tau_1B} e^{-i\tau_1A/2}.
\end{aligned}
\end{equation}
Therefore
\begin{equation}
     \|R_1\| \le \tau_1 \left \|\left[i\frac{A}{2},\left[-iB,\left[-iB,-i\frac{A}{2}\right]\right]\right]\right\|+ (\tau_1+ \tau_3) \left \|\left[iB,\left[-iB,\left[-iB,-i\frac{A}{2}\right]\right]\right]\right\|,
\end{equation}
which implies
\begin{equation}      \|\mathscr{E}_{\mathrm{re},1}\| \le \frac{{\delta t}^4}{32}\|\left[A,\left[B,\left[B,A\right]\right]\right]\|+  \frac{{\delta t}^4}{12} \|\left[B,\left[B,\left[B,A\right]\right]\right]\|.
\label{eq:remain_pf2_re1}
\end{equation}
Similarly, we have
\begin{equation}
\begin{aligned}
R_2&=e^{i\tau_1A/2}e^{i\tau_1B} e^{i\tau_3A/2}\left[i\frac{A}{2},\left[i\frac{A}{2},iB\right]\right]e^{-i\tau_3A/2} e^{-i\tau_1B} e^{-i\tau_1A/2}-\left[i\frac{A}{2},\left[i\frac{A}{2},iB\right]\right] \\
&=e^{i\tau_1A/2}e^{i\tau_1B} \left[i\frac{A}{2},\left[i\frac{A}{2},iB\right]\right] e^{-i\tau_1B} e^{-i\tau_1A/2}-\left[i\frac{A}{2},\left[i\frac{A}{2},iB\right]\right]
\\
&\quad+ \int_0^{\tau_3}\d \tau_4 \, e^{i\tau_1A/2}e^{i\tau_1B}  e^{i(\tau_3-\tau_4)A/2}   \left[i\frac{A}{2}, \left[i\frac{A}{2},\left[i\frac{A}{2},iB\right]\right]\right]e^{-i(\tau_3-\tau_4)A/2} e^{-i\tau_1B} e^{-i\tau_1A/2}\\
&=e^{i\tau_1A/2}\left[i\frac{A}{2},\left[i\frac{A}{2},iB\right]\right] e^{-i\tau_1A/2}-\left[i\frac{A}{2},\left[i\frac{A}{2},iB\right]\right] \\
&\quad+ \int_0^{\tau_1} \d \tau_1' \,  e^{i\tau_1A/2}e^{i(\tau_1-\tau_1')B}\left[iB, \left[i\frac{A}{2},\left[i\frac{A}{2},iB\right]\right] \right] e^{-i(\tau_1-\tau_1')B} e^{-i\tau_1A/2}\\
&\quad+ \int_0^{\tau_3}\d \tau_4 \, e^{i\tau_1A/2}e^{i\tau_1B}  e^{i(\tau_3-\tau_4)A/2}  \left[i\frac{A}{2}, \left[i\frac{A}{2},\left[i\frac{A}{2},iB\right]\right]\right]e^{-i(\tau_3-\tau_4)A/2} e^{-i\tau_1B} e^{-i\tau_1A/2}\\
&=\int_0^{\tau_1} \d \tau_1' \, e^{i(\tau_1-\tau_1')A/2} \left[i\frac{A}{2}, \left[i\frac{A}{2},\left[i\frac{A}{2},iB\right]\right]\right] e^{-i(\tau_1-\tau_1')A/2} \\
&\quad+ \int_0^{\tau_1} \d \tau_1' \,  e^{i\tau_1A/2}e^{i(\tau_1-\tau_1')B}\left[iB, \left[i\frac{A}{2},\left[i\frac{A}{2},iB\right]\right] \right] e^{-i(\tau_1-\tau_1')B} e^{-i\tau_1A/2}\\
&\quad+ \int_0^{\tau_3}\d \tau_4 \, e^{i\tau_1A/2}e^{i\tau_1B}  e^{i(\tau_3-\tau_4)A/2}  \left[i\frac{A}{2}, \left[i\frac{A}{2},\left[i\frac{A}{2},iB\right]\right]\right]e^{-i(\tau_3-\tau_4)A/2} e^{-i\tau_1B} e^{-i\tau_1A/2},
\end{aligned}
\end{equation}
so
\begin{equation}
\|R_2\| \le (\tau_1+ \tau_3) \left\|\left[i\frac{A}{2}, \left[i\frac{A}{2},\left[i\frac{A}{2},iB\right]\right]\right] \right\|+ \tau_1\left\|\left[B, \left[i\frac{A}{2},\left[i\frac{A}{2},iB\right]\right] \right]\right\|,
\end{equation}
which implies
\begin{equation}
\|\mathscr{E}_{\mathrm{re},2}\| \le \frac{{\delta t}^4}{32}\|\left[B,\left[A,\left[A,B\right]\right]\right]\|+  \frac{{\delta t}^4}{48}\|\left[A,\left[A,\left[A,B\right]\right]\right]\|.
\label{eq:remain_pf2_re2}
\end{equation}

By the triangle inequality,
\begin{equation}
\begin{aligned}
\|{\mathscr{E}(\delta t)\ket{\psi}}\|&\le \|{\mathscr{E}_{2,1}(\delta t)\ket{\psi}}\| + \|{\mathscr{E}_{2,2}(\delta t)\ket{\psi}}\| +\|\mathscr{E}_{\mathrm{re},1}\|+\|\mathscr{E}_{\mathrm{re},2}\|\\
&=\sqrt{\bra{\psi}\mathscr{E}_{2,1}^{\dagger}\mathscr{E}_{2,1}\ket{\psi}}+ \sqrt{\bra{\psi}\mathscr{E}_{2,2}^{\dagger}\mathscr{E}_{2,2}\ket{\psi}}+ \|\mathscr{E}_{\mathrm{re},1}\|+\|\mathscr{E}_{\mathrm{re},2}\|.
\end{aligned}  \label{eq:triangle_pf2}
\end{equation}
Similarly to the proof of \cref{SMLemma:AB}, we can apply \cref{Lemma:traceproduct} to give
\begin{align}
|\bra{\psi}\mathscr{E}_{2,1}^{\dagger}\mathscr{E}_{2,1}\ket{\psi}|
&\le \frac{{\delta t}^6}{144} \bra{\psi}\left[B,\left[B,A\right]\right]^2 \ket{\psi}, \\
|\bra{\psi}\mathscr{E}_{2,2}^{\dagger}\mathscr{E}_{2,2}\ket{\psi}|
&\le \frac{{\delta t}^6}{576} \bra{\psi}\left[A,\left[A,B\right]\right]^2 \ket{\psi}.
\end{align}
According to \cref{SMLemma:local}, taking $E_1=\left[B,\left[B,A\right]\right]=\sum_j E_{1,j}$ and $E_2=\left[A,\left[A,B\right]\right]=\sum_j E_{2,j}$, we have
\begin{align}  |\bra{\psi}\mathscr{E}_{2,1}^{\dagger}\mathscr{E}_{2,1}\ket{\psi}|\leq \frac{{\delta t}^6}{144} \bra{\psi}\left[B,\left[B,A\right]\right]^2 \ket{\psi}
  &\le  \frac{{\delta t}^6}{144} \left(\|\left[B,\left[B,A\right]\right]\|_F^2+\Delta_{E_1}(\psi) \right), \label{eq:e21}\\  |\bra{\psi}\mathscr{E}_{2,2}^{\dagger}\mathscr{E}_{2,2}\ket{\psi}|
\le \frac{{\delta t}^6}{576} \bra{\psi}\left[A,\left[A,B\right]\right]^2 \ket{\psi}
  &\le  \frac{{\delta t}^6}{576} \left(\|\left[A,\left[A,B\right]\right]\|_F^2+\Delta_{E_2}(\psi) \right). \label{eq:e22}
\end{align}
Substituting the upper bounds of \cref{eq:remain_pf2_re1,eq:remain_pf2_re2,eq:e21,eq:e22} into \cref{eq:triangle_pf2}, the proof follows.
\end{proof}

\section{Classical simulation with MPS}

\begin{figure}[h]
    \centering
    \includegraphics[scale=0.7]{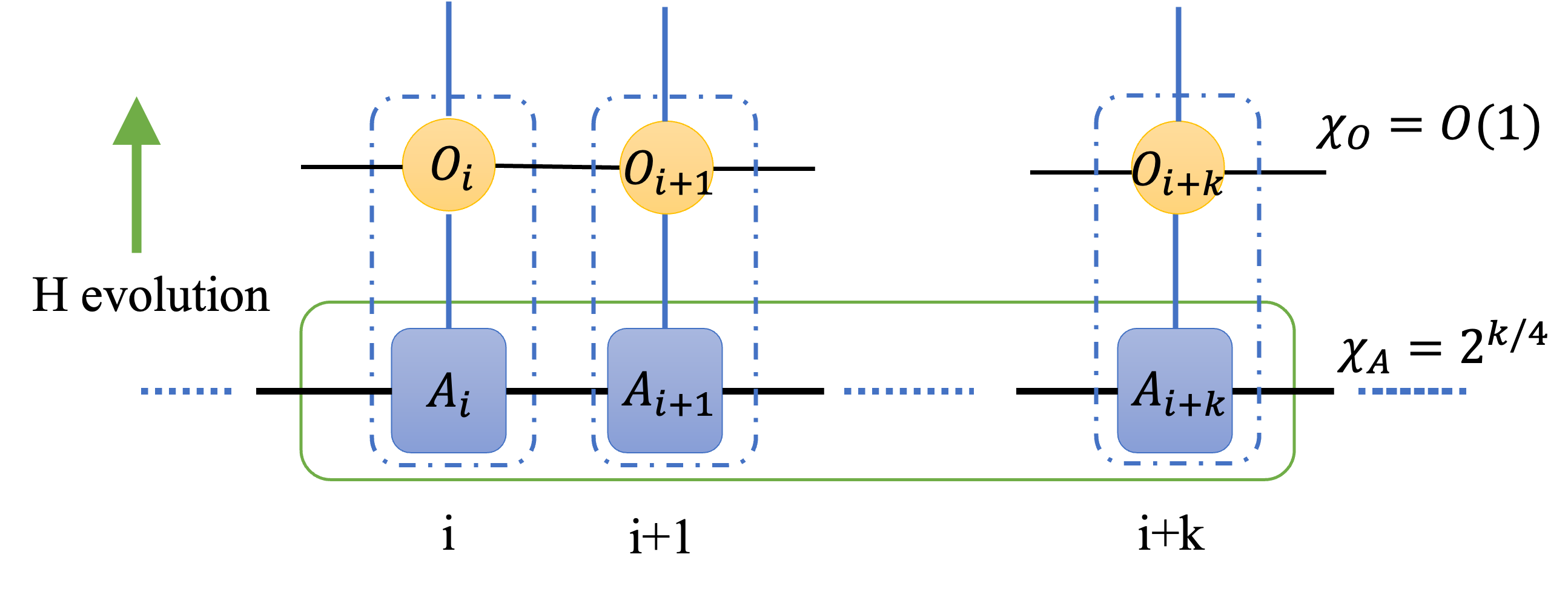}
    \caption{Classical simulation with MPS. The blue box labeled $A_i$ represents an elementary tensor, each of which has bond dimension $\chi_A=2^{k/4}$ for a $k$-uniform state. The short-time evolution can be mapped to an MPO \cite{paeckel2019time} whose elementary tensors are represented by the orange circles labeled $O_i$. To get the updated state after the evolution, we contract the local tensors $A_i$ and $O_i$ together in the dot-dashed blue boxes.
    }
    \label{fig:MPS}
\end{figure}

As discussed in the main text around Fig.~3(a), the cost of quantum simulation reduces to the average case if the underlying state is $k$-uniform with $k$ at least some constant. On the other hand, the cost of classical simulation with matrix-product states (MPS) scales exponentially with $k$. We illustrate this as follows.
As the underlying state is $k$-uniform---that is, the $k$-body RDM is the maximally mixed state---the bond dimension of the MPS should be around $\chi_A=2^{k/4}$, as illustrated in \cref{fig:MPS}. For a short-time evolution, we can use the matrix-product operator (MPO), shown in orange in \cref{fig:MPS} \cite{paeckel2019time}. The evolved state is obtained by contracting the local tensors $A_i$ and $O_i$ in each block to get the evolved state. The computational cost of this process is $4\chi_A^2\chi_O^2N=\mathcal{O}(2^{k/2})$.

\section{Modified bounds for (single-segment) Trotter error}\label{sec:measured_error}

In this section, we give modified versions of the distance- and entanglement-based error bounds in terms of more readily determined quantities. Both of these bounds relate to the purities of the local RDMs. In addition, we give a refined bound at the end of this section that reduces the measurement budget further.

Trace distance is hard to measure, but one can use the (normalized) Frobenius norm to bound it, since $\tr |M|\le d_M\|M\|_F$ for an operator $M$ of dimension $d_M$. In this way, the leading-order of the distance-based error bound in \cref{SMthm:main} can be further bounded by
\begin{equation}\label{eq:ErrD2App}
\begin{aligned}
&\left[\left(\sum_{j\neq j'} \|E_j^{\dag}E_{j'}\| \tr|\rho_{j,j'}-\mathbb{I}/d_{\rho_{j,j'}}|\right)^{1/2}+ \|E\|_F \right]\delta t^{p+1}\\
&\quad\leq \left[\max_{j}\|E_j\| \left(\sum_{j\neq j'} \sqrt{d_{\rho_{j,j'}}\tr(\rho_{j,j'}-\mathbb{I}/d_{\rho_{j,j'}})^2}\right)^{1/2}+ \|E\|_F \right]\delta t^{p+1}\\
&\quad\leq \left[\max_{j}\|E_j\| \left(\sum_{j\neq j'} \sqrt{d_{\rho_{j,j'}}\tr(\rho_{j,j'}^2)-1}\right)^{1/2}+ \|E\|_F \right]\delta t^{p+1}.
\end{aligned}
\end{equation}

Similarly, the von Neumann entanglement entropy is hard to measure, but we can lower bound it by the R{\'e}nyi-2 entropy, $S_2(\rho_{j,j'})=-\log_2\tr(\rho_{j,j'}^2)$. This gives the following modification of the entanglement-based error bound in \cref{SMthm:main}:
\begin{equation}\label{eq:ErrEntSM}
\begin{aligned}
&\left[\left(\sum_{j\neq j'} \|E_j^{\dag}E_{j'}\|\sqrt{\log (d_{\rho_{j,j'}})-S(\rho_{j,j'})}\right)^{1/2}+  \|E\|_F\right]\delta t^{p+1} \\
&\quad\leq \mathcal O\left[\max_{j}\|E_j\| \left(\sum_{j,j'} \sqrt{n_{j,j'}-S_2(\rho_{j,j'})}\right)^{1/2}+ \|E\|_F \right]\delta t^{p+1},
\end{aligned}
\end{equation}
where $n_{j,j'}=|\support(E_j^{\dag}E_{j'})|$ is the number of qubits in the corresponding subsystem. Observe that both modified bounds depend on the purities $\tr(\rho_{j,j'}^2)$ of the RDMs of the subsystems $\support(E_j^{\dag}E_{j'})$.

Finally, we show a more refined bound for the single-segment Trotter error. According to
\cref{EqSMTh1-1} in the proof of \cref{SMthm:main}, the leading term of the Trotter error is directly related to $\sum_{j} E_j$. Specifically, we have
\begin{equation}\label{SM:refineE}
\begin{aligned}
   \left\| \sum_{j} E_j \ket{\psi}\right \|  \delta t^{p+1}
   &=  \delta t^{p+1} \sqrt{\sum_{j,j'} \bra{\psi}E_j^{\dagger} E_{j'}\ket{\psi} } \\
   &=\delta t^{p+1}\sqrt{\sum_{j< j'} \bra{\psi} (E_j^{\dag}  E_{j'}+ \mathrm{h.c.}) \ket{\psi} +\sum_j \bra{\psi}E_j^{\dag}  E_{j}\ket{\psi}}.
\end{aligned}
\end{equation}
Unlike the previous bounds, which depend on purities, this bound
directly estimates the functions $O=\sum_w \bra{\psi_i}O_w\ket{\psi_i}$ with $O_w=E_j^{\dag}  E_{j'}+ \mathrm{h.c.}$ for $j<j'$ and $O_w=E_j^{\dag}  E_{j}$ for  $j=j'$, which are linear combinations of a few local Pauli operators. Compared to the purity estimation in the previous bounds, the refined bound in \cref{SM:refineE} reduces the measurement complexity for estimating the Trotter error to $M=\mc O(N^2)$ for a lattice Hamiltonian, as shown in Methods section of main text. The refined bound is then used in the measurement-assisted adaptive Hamiltonian simulation, and the numerical result is shown in Fig.~4(d) in the main text.

\section{Numerical results}

To demonstrate that highly entangled states are typical, we choose $10^4$ pure quantum states uniformly from a 12-qubit Hilbert space and calculate the entropy of a 4-qubit subsystem for each instance as shown in \cref{fig:typicalentanglement}. Most of the states have large entanglement entropy, close to the largest possible value of $4$.

\begin{figure}
  \includegraphics[width=0.5\columnwidth]{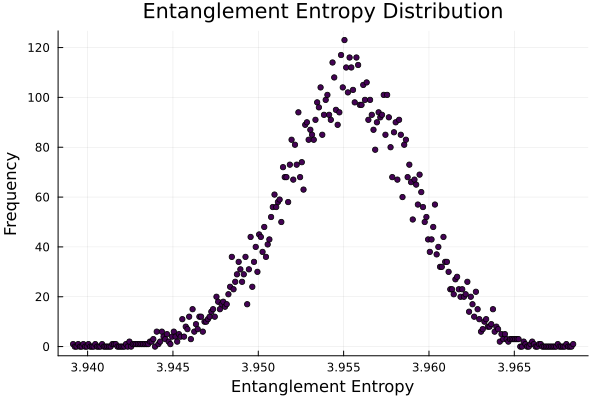}
    \caption{
    Entanglement entropy of 4-qubit RDMs from uniformly random 12-qubit states.}
    \label{fig:typicalentanglement}
\end{figure}

To numerically assess the performance of product-formula simulations as a function of entanglement, we consider the 1D quantum Ising spin system with mixed
fields (QIMF) \cite{cotler2021emergent},
\begin{align}
  H= h_x \sum_{j=1}^N X_j + h_y \sum_{j=1}^N Y_j + J \sum_{j=1}^{N-1} X_jX_{j+1},
\end{align}
with the initial state $\ket{0}^{\otimes n}$.
We consider the parameters $(h_x,h_y,J)= (0.8090, 0.9045, 1)$ and $(h_x,h_y,J)= (0, 0.9045, 1)$
as typical (i.e., satisfying ETH) \cite{kim2014testing} and atypical examples, respectively.

To apply product-formula simulation, we use an X-Y pattern, with
\begin{align}
  A = h_x \sum_{j=1}^N X_j  + J \sum_{j=1}^{N-1} X_jX_{j+1}, \quad
  B = h_y \sum_{j=1}^N Y_j.
\end{align}
The corresponding PF1 and PF2 commutators are
\begin{equation}
\begin{aligned} \relax
  [A, B]&= 2ih_xh_y  \sum_{j=1}^N Z_j  +2i Jh_y \sum_{j=1}^{N-1} (Z_jX_{j+1}+ X_jZ_{j+1}),\\
  [A, [A, B]]&= 4h_x^2h_y    \sum_{j=1}^N Y_j+
  4 J^2 h_y \sum_{j=1}^{N-1} Y_j+
   4 J^2 h_y \sum_{j=2}^{N} Y_j \\
   &\quad+
  8 Jh_xh_y \sum_{j=1}^{N-1} (Y_jX_{j+1}+ X_jY_{j+1}) + 8J^2 h_y \sum_{j=1}^{N-2} (X_jY_{j+1}X_{j+2}),\\
  [B, [A, B]]&= -4 h_xh_y^2  \sum_{j=1}^N X_j  + 8Jh_y^2 \sum_{j=1}^{N-1} ( Z_jZ_{j+1}-X_jX_{j+1}).
\end{aligned} \label{SMeq:commutators}
\end{equation}

For the theoretical worst- and average-case error bounds, we use the upper bounds for PF1 and PF2 in Refs.~\cite{childs2020theory,Zhaorandom22}. which we briefly summarrize as follows. For PF1, the  worst- and average-case error bounds are
\begin{align}
\|\mathscr{U}_1(\delta t)-U_0(\delta t)\|
&\le \frac{\delta t^2}{2}\|[A,B]\|,\\
\|\mathscr{U}_1(\delta t)-U_0(\delta t)\|_F
&\le \frac{\delta t^2}{2}\|[A,B]\|_F.
\end{align}
For the PF2, the  worst- and average-case error bounds are
\begin{align}
&\|\mathscr{U}_2(\delta t)-U_0(\delta t)\|\le \frac{\delta t^3}{12}
 \|\left[B,\left[B, A\right]\right] \| + \frac{\delta t^3}{24}
 \|\left[A,\left[A, B\right]\right] \|,\\
 &\|\mathscr{U}_2(\delta t)-U_0(\delta t)\|_F\le \frac{\delta t^3}{12}
 \|\left[B,\left[B, A\right]\right] \|_F+ \frac{\delta t^3}{24}
 \|\left[A,\left[A, B\right]\right] \|_F.
\end{align}
To bound the spectral norms $\|\cdot\|$ and the Frobenius norms $\|\cdot\|_F$ in these expressions,
we count the number of Pauli operators in \cref{SMeq:commutators} rather than evaluating their norms numerically. Concretely, we have
\begin{equation}
\begin{aligned}
  \|[A, B]\| &\le  2h_xh_yN+ 4h_yJ(N-1),\\
  \|[A, [A, B]]\|&\le 4h_x^2h_yN+8J^2h_y(N-1)+16Jh_yh_x(N-1)+8J^2h_y(N-2),\\
  \|[B, [A, B]]\|&\le  4h_xh_y^2N+16Jh_y^2(N-1)
\end{aligned} \label{eq:count_worst}
\end{equation}
and
\begin{equation}
\begin{aligned}
  \|[A, B]\|_F^2 &\le  4h_x^2h_y^2N+ 8h_yJ(N-1),\\
  \|[A, [A, B]]\|_F^2&\le 16(h_x^2+2J^2)^2h_y^2N+128 h_x^2h_y^2J^2(N-1)+128J^2h_y^2(N-2),\\
  \|[B, [A, B]]\|_F^2&\le  16h_x^2h_y^4N+128J^2h_y^2(N-1).\\
\end{aligned} \label{eq:count_average}
\end{equation}
To apply \cref{SMLemma:ABPF2}, we also need some more complicated nested commutators, namely
$\|[A,[A, [A, B]]]\|$, $\|[B,[A, [A, B]]]\|$, $\|[A, [B, [A, B]]]\|$, and $\|[B, [B, [A, B]]]\|$. We use the following upper bound:
\begin{align}
    \|[A,[A, [A, B]]]\|\le 2 \|A\| \, \|[A, [A, B]]\|.
\end{align}
For the other nested commutators $\|[B,[A, [A, B]]]\|$, $\|[A, [B, [A, B]]]\|$, and $\|[B, [B, [A, B]]]\|$, we apply the same method. We use $\|A\|\le h_xN+J(N-1)$ and $\|B\|\le h_yN$.

To apply our distance-based theoretical bounds forr a long-time evolution, we use the triangle inequality and obtain
\begin{align}
\|{\mathscr{U}_2(t)\ket{\psi}-U_0(t)\ket{\psi}}\|\le \sum_{j=0}^{r-1}\|{\mathscr{U}_2(t/r)\ket{\psi_j}-U_0(t/r)\ket{\psi_j}}\|,
\end{align}
where $\ket{\psi_j}=U_0(tj/r)\ket{\psi_0}$.
For each short-time evolution, we apply \cref{SMLemma:AB,SMLemma:ABPF2}, taking $\ket{\psi_j}$ as the input state. For all spectral and Frobenius norms of operators in \cref{SMLemma:AB,SMLemma:ABPF2}, we use the counting bounds presented in \cref{eq:count_worst,eq:count_average}. To estimate $\Delta_E$, we assume that we can obtain the distance $\tr|\rho_{j,j'}-\mathbb{I}_{\support(E_jE_{j'})}/d_{\support(E_jE_{j'})}|$
for different $E_jE_{j'}$. This property can be estimated numerically as discussed in \cref{sec:measured_error}. In this part, we only want to demonstrate the validity of our new theoretical bounds, so we do not consider a procedure for estimating the distance.

To determine the minimum number of Trotter steps $r$ that suffice to ensure error at most $\epsilon=10^{-5}$, as shown in Fig.~5(c) of main text, we need to keep track of the error for each segment. This would involve many calculations of the trace distance of the evolved state, which would be computationally intensive. Thus, rather than estimating $\Delta_E$ and calculating the Trotter error for each step, we assume that the entanglement remains stable for some time, so that we can use the estimated $\Delta_E$ for a few subsequent Trotter steps.
Specifically, we divide the total time duration $t$ into $C$ slices and make the algorithmic error in each slice at most $\varepsilon/C$.
We use the distance information for the evolved state $\ket{\psi(t_c)}$ for each $c\in\{0, 1, \dots, C-1\}$, with $t_c= ct/C$, and denote the corresponding distance-based upper bound of \cref{SMLemma:ABPF2} by
$\|\mathscr{U}_2(t/r_c)\ket{\psi(t_c)}-U_0(t/r_c)\ket{\psi(t_c)}\|^*$.
We calculate the minimum $r_c$ for the $c$th slice such that
\begin{align}
r_c \|\mathscr{U}_2(t/r_c)\ket{\psi(t_c)}-U_0(t/r_c)\ket{\psi(t_c)}\|^*\le \varepsilon/C.
\end{align}
Then the total number of Trotter steps is
$r^*=\sum_c r_c$.

This compromise results in a value of $r^*$ greater than the real theoretically predicted $r$ that would result by considering the state in each segment under the assumption that local entanglement entropy is non-decreasing.
However, the difference is not significant, especially for a large $C$.
We increase $C$ until the value $r^*$
changes by less than 1\% and use this as our estimate of $r$. In our numerics, $r^*$ converges before $C=20$.

\end{document}